\documentclass[a4paper,11pt]{article}
\usepackage{amsfonts,amssymb,amsmath,amsthm,dsfont,xspace}

\usepackage{fullpage,subfigure}
\usepackage{graphicx}
\makeatletter
\let\NAT@parse\undefined
\makeatother
\usepackage[sort&compress, numbers]{natbib}

\graphicspath{{./},{images/}}
\usepackage{hyperref}

\usepackage[footnotesize]{caption}
\usepackage[colorinlistoftodos, shadow]{todonotes}

\newcommand{\Rbb}{\mathbb{R}}
\newcommand{\scp}[2]{\langle #1, #2 \rangle}
\newcommand{\Nbb}{\mathbb{N}}
\newcommand{\Zbb}{\mathbb{Z}}
\newcommand{\ZbbMqbin}{\delta\mathbb Z^M}
\newtheorem{theorem}{Theorem}
\newtheorem{definition}{Definition}
\newtheorem{proposition}{Proposition}
\newtheorem{corollary}{Corollary}
\newtheorem{lemma}{Lemma}

\newcommand{\inv}[1]{\frac{1}{#1}}

\newcommand{\tinv}[1]{{\textstyle\frac{1}{#1}}}
\newcommand{\sign}{{\rm sign}\,}

\newcommand{\ud}{\mathrm{d}}

\renewcommand{\leq}{\leqslant}
\renewcommand{\geq}{\geqslant}
\newcommand{\E}{{\mathbb{E}}}

\DeclareMathOperator{\SO}{SO}

\DeclareMathOperator{\iid}{iid}

\DeclareMathOperator{\rad}{rad}
\DeclareMathOperator{\diam}{diam}
\DeclareMathOperator{\buf}{Buffon}

\newcommand{\bb}{\mathbb}

\newcommand{\bs}{\boldsymbol}
\newcommand{\cl}{\mathcal}
\newcommand{\ie}{\emph{i.e.}, }
\newcommand{\eg}{\emph{e.g.}, }

\setlength{\parskip}{0.1cm}

\newcommand{\dN}{\chi_N}
\newcommand{\cN}{\tau_N}
\newcommand{\needle}{{\mathsf N}}
\newcommand{\qbin}{\delta}
\newcommand{\mapf}{\psi}

\newcommand{\rv}{\mbox{r.v.}\xspace}

\title{A Quantized Johnson Lindenstrauss Lemma:\\
The Finding of Buffon's Needle}
\author{Laurent Jacques\thanks{LJ is with the ICTEAM institute, ELEN Department, Universit\'e catholique de Louvain
(UCL), Belgium. Email: \url{laurent.jacques@uclouvain.be}. LJ is funded by Belgian National Science Foundation
(F.R.S.-FNRS).}}

\begin{document}

\maketitle

\begin{abstract}
In 1733, Georges-Louis Leclerc, Comte de Buffon in France, set the
ground of geometric probability theory by defining an enlightening
problem: What is the probability that a needle thrown randomly on a
ground made of equispaced parallel strips lies on two of them? In this
work, we show that the solution to this problem, and its
generalization to $N$ dimensions, allows us to discover a quantized form
of the Johnson-Lindenstrauss (JL) Lemma, \ie one that combines a linear
dimensionality reduction procedure with a uniform quantization of
precision~$\qbin>0$. In particular, given a finite set
$\cl S \subset \Rbb^N$ of $S$ points and a distortion level $\epsilon>0$, as soon as $M > M_0 = O(\epsilon^{-2} \log
S)$, we can (randomly)
construct a mapping from $(\cl S, \ell_2)$ to
$(\ZbbMqbin, \ell_1)$ that approximately preserves the
pairwise distances between the points of~$\cl S$. 
Interestingly, compared to the common JL Lemma, the mapping is quasi-isometric and we
observe both an additive and a multiplicative distortions on the embedded
distances. These two distortions, however, decay as
$O(\sqrt{(\log S)/M})$ when $M$ increases. Moreover, for coarse quantization, \ie for high $\qbin$ compared to
the set radius, the distortion is mainly additive, while for small $\qbin$
we tend to a Lipschitz isometric embedding. 
Finally, we prove the existence of a
``nearly'' quasi-isometric embedding of $(\cl S, \ell_2)$ into $(\ZbbMqbin,
\ell_2)$. This one involves a non-linear distortion of the
$\ell_2$-distance in $\cl S$ that vanishes for distant points in this set. Noticeably, the additive distortion in this
case is slower, and decays as~$O(\sqrt[4]{(\log S)/M})$.
\end{abstract}

\section{Introduction}
\label{sec:introduction}

The Lemma of Johnson-Lindenstrauss (JL) \cite{johnson1984extensions} is a corner stone of (linear)
dimensionality reduction techniques. This result, which can be seen as
a direct consequence of the concentration of measure phenomenon~\cite{dasgupta99elementary}, is at the heart
of many applications in classical search methods for approximate
nearest neighbors \cite{ailon2009fast}, high-dimensional machine learning
\cite{maillard:inria-00419210,fradkin2003experiments}, and compressed
sensing~\cite{JLmeetCS,krahmer2011new}.   

In short, this lemma states that given a
finite set of $S$ points in an $N$-dimensional space, and provided that $M$
scales like $O(\epsilon^{-2} \log S)$ for some allowed distortion
level $\epsilon>0$, there exists a mapping that
projects the elements of this set into a smaller
$M$-dimensional space, without disturbing the pairwise distances
of these points by more than a factor $(1 \pm \epsilon)$.

Mathematically, the classical
formulation of this important lemma is as follows.
\begin{lemma}[Johnson-Lindenstrauss] Given $\epsilon\in (0,1)$,
  for every set $\cl S$ of $S$ points in $\Rbb^N$, if $M$ is such that 
$$
M > M_0 = O(\epsilon^{-2}\log S ),
$$
then there exists a Lipschitz mapping $\bs f: \Rbb^N \to \Rbb^M$ such
that
\begin{equation}
  \label{eq:JL-mapping}
  (1-\epsilon)\,\|\bs u - \bs v\|^2\ \leq\ \|\bs f(\bs u) - \bs f(\bs v)\|^2\ \leq\ (1+\epsilon)\,\|\bs u - \bs v\|^2,
\end{equation}
for all $\bs u, \bs v \in \cl S$. 
\end{lemma}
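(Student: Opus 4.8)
The plan is the classical probabilistic approach: exhibit the mapping as a random linear projection and show that it satisfies \eqref{eq:JL-mapping} with positive probability, whence existence follows. Fix any $M > M_0$ and let $\bs\Phi \in \Rbb^{M\times N}$ be a random matrix with i.i.d.\ entries drawn from $\cl N(0, 1/M)$ (or, more generally, i.i.d.\ zero-mean subgaussian entries of variance $1/M$), and set $\bs f(\bs x) \eqdef \bs\Phi\bs x$. Since $\bs f$ is linear it is automatically Lipschitz, so the only thing left to prove is the two-sided distortion bound.

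The core ingredient is a single-vector concentration estimate: for every fixed $\bs w \in \Rbb^N$,
\begin{equation}
  \label{eq:jl-single-vec}
  \Prob\Big[\,\big|\,\norm{\bs\Phi\bs w}^2 - \norm{\bs w}^2\,\big| > \epsilon\,\norm{\bs w}^2\,\Big]\ \leq\ 2\exp(-c_0\, M\, \epsilon^2),
\end{equation}
for some absolute constant $c_0 > 0$. In the Gaussian case this is immediate once one notes that $\norm{\bs\Phi\bs w}^2/\norm{\bs w}^2$ has the law of $\tfrac1M\sum_{i=1}^M g_i^2$ with $g_i$ i.i.d.\ standard normal, i.e.\ a mean-one normalized chi-squared variable, to which one applies the standard two-sided tail bound obtained by Chernoff bounding its Laplace transform; for general subgaussian entries, \eqref{eq:jl-single-vec} follows instead from a Bernstein inequality for sums of independent subexponential terms. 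This is the one genuinely analytic step, and I would devote the bulk of the write-up to stating and deriving it cleanly.

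The rest is bookkeeping. There are $\binom{S}{2} < S^2$ difference vectors $\bs w = \bs u - \bs v$ with $\bs u \neq \bs v$ in $\cl S$; by a union bound over \eqref{eq:jl-single-vec}, the probability that \eqref{eq:JL-mapping} fails for at least one such pair is at most $S^2 \exp(-c_0 M \epsilon^2)$. Hence it suffices to take $M > M_0 = c_0^{-1}\epsilon^{-2}\,(2\log S + t)$ for any fixed $t>0$: then this probability is at most $e^{-t} < 1$, so with positive probability (indeed, with probability at least $1 - e^{-t}$) the random map $\bs f$ meets \eqref{eq:JL-mapping} for all pairs simultaneously, which proves the claim and identifies the constant hidden in $M_0 = O(\epsilon^{-2}\log S)$. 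The only obstacle worth flagging is the concentration estimate \eqref{eq:jl-single-vec}; once that is granted, nothing else requires care beyond tracking constants.
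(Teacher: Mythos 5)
Your proposal is correct and follows essentially the same route the paper sketches in its Introduction: a random Gaussian (or sub-Gaussian) projection $\bs f(\bs x)=\bs\Phi\bs x$, the single-vector concentration bound $\Prob[\,|\norm{\bs\Phi\bs w}^2-\norm{\bs w}^2|>\epsilon\norm{\bs w}^2\,]\leq 2e^{-M\eta(\epsilon)}$ with $\eta(\epsilon)=\Theta(\epsilon^2)$, and a union bound over the $\binom{S}{2}\leq S^2/2$ pairs to conclude existence with positive probability once $M=O(\epsilon^{-2}\log S)$. No gaps; the one analytic step you flag (the chi-squared/sub-exponential tail bound) is exactly the measure-concentration ingredient the paper cites.
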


Beyond this proof of existence, the construction of
(random) Lipschitz mappings from $\Rbb^N$ to $\Rbb^M$ satisfying
\eqref{eq:JL-mapping} is easy \cite{dasgupta99elementary}. In particular, for 
$$
\bs f(\bs u) = \bs \Phi \bs u,
$$  
where $\bs\Phi \in \Rbb^{M\times N}$ is a certain random matrix (\eg
whose independent entries follow identical Gaussian, Bernoulli or sub-Gaussian
distributions \cite{Achlioptas:2003p640}), measure concentration
guarantees that \cite{JLmeetCS}
$$
\bb P\big[ \big| \|\bs \Phi (\bs u - \bs v)\|^2 - \|\bs u - \bs v\|^2 \big| \geq \epsilon
\|\bs u - \bs v\|^2\big]\ \leq\ 2 e^{-M \eta(\epsilon)},
$$ 
where the probability is related to the generation of $\bs \Phi$, and
$\eta$ is a nondecreasing function of $\epsilon \in (0,1)$. For
instance, for $\bs\Phi\sim \cl N^{M\times N}(0,1/M)$, \ie $\bs \Phi \in \Rbb^{M\times N}$ with $\Phi_{ij} \sim_{\rm
iid} \cl N(0,1/M)$, we have $\eta(\epsilon) =
\epsilon^2/2 - \epsilon^3/6 \geq \epsilon^2/3$ \cite{JLmeetCS}. 

Proving the JL Lemma amounts to applying a union bound on all
possible pairs of points $\bs u$ and $\bs v$ taken in $\cl S$. Since
there are no more than ${S \choose 2}\leq S^2/2$ such pairs, the probability that
at least one of them fails to respect \eqref{eq:JL-mapping} is bounded by 
$2 {S \choose 2} e^{-M \eta(\epsilon)} \leq e^{2\log S -M
  \eta(\epsilon)}$. Therefore, as soon as $M > 2\,\eta(\epsilon)^{-1}\log S$,
this probability can be made arbitrarily low. Moreover, 
generating a sequence of $\bs \Phi$ further decreases this probability by hoping
that at least one such matrix respects~\eqref{eq:JL-mapping}; in
the limit, this ensures
the existence of $\bs f$ with probability 1 in Prop.~\ref{prop:quantiz-jl-lemma}.

Combining such linear random mappings with a quantization
procedure $\cl Q$ (\eg uniform or non-uniform) has recently been a
matter of intense research. The implicit
objective of this association is to reduce the amount of
bits required to encode the result of the dimensionality reduction 
\cite{boufounos2011secure}, and to understand the impact of
quantization on the distortion caused by the mapping. 
For instance, the field of 1-bit Compressed Sensing
is interested in reconstructing sparse vectors from the sign of their
random projections \cite{CS1bit,Boufounos2010,Jacques2011,Plan2011,plan2011dimension}. At the heart of this topic lies the extreme
``one-bit'' (or binary) mapping $\bs \mapf_{\rm bin}:\Rbb^N \to \cl
B^M$ with $\cl B=\{\pm 1\}$, \ie
$$
\bs \mapf_{\rm bin}(\bs u) = \sign(\bs \Phi \bs u)
$$ 
for a Gaussian random matrix $\bs \Phi \sim \cl N^{M\times
  N}(0,1)$. Thanks to $\bs \mapf_{\rm bin}$, a set of vectors
of $\Rbb^N$ can be mapped to a
subset of the \emph{Boolean cube}~$\cl
B^M$. For characterizing the distortion introduced by such a mapping,
we must suitably define two new
distances: the normalized Hamming distance $d_H(\bs r,\bs s) = \inv{M}
\sum_i \bb I(r_i \neq s_i)$ between two \emph{binary strings} $\bs r,\bs
s\in\cl B^M$ and the angular distance $d_S(\bs u, \bs v)=\arccos(\|\bs
u\|^{-1}\|\bs v\|^{-1}
\scp{\bs u}{\bs v})$ between two vectors $\bs u,\bs v\in\Rbb^N$. The
use of $d_S$
stems from the vector amplitude loss in the definition of $\bs
\mapf_{\rm bin}$. Within such a context, the following result is known
(its proof is sketched in Sec.~\ref{sec:discussion}).
\begin{proposition}[\cite{GoeWil::1995::Improved-approximation,Jacques2011}]
\label{prop:conc-prop-hamming}
Let $\bs u,\bs v\in \Rbb^N$.  Fix $\epsilon > 0$ and randomly generate
$\bs\Phi\sim\cl N^{M\times N}(0,1)$.  Then we have
\begin{equation}
\label{eq:conc-prop-hamming}
\bb P\left(\ \left|\,d_H\big(\bs \mapf_{\rm bin} (\bs u), \bs \mapf_{\rm bin}(\bs v)\big)\ -\ d_S(\bs u,\bs v)\,\right|\,\leq\,\epsilon\
\right) \geq\ 1 - 2\,e^{-2\epsilon^2 M}, 
\end{equation}
where the probability is with respect to the generation of $\bs\Phi$.
\end{proposition}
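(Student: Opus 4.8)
The plan is to write the normalized Hamming distance as an empirical average and recognize it as a sum of bounded i.i.d.\ Bernoulli variables. Let $\bs\varphi_i \in \Rbb^N$ denote the $i$-th row of $\bs\Phi$ (viewed as a column vector), so that the $\bs\varphi_i$ are i.i.d.\ standard Gaussian vectors, and set
$$
X_i\ =\ \bb I\!\left(\sign(\scp{\bs\varphi_i}{\bs u})\ \neq\ \sign(\scp{\bs\varphi_i}{\bs v})\right),\qquad i = 1,\dots,M,
$$
so that $d_H(\bs \mapf_{\rm bin}(\bs u),\bs \mapf_{\rm bin}(\bs v)) = \tfrac{1}{M}\sum_{i=1}^M X_i$. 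The $X_i$ are i.i.d., take values in $\{0,1\}$, and the whole statement reduces to a concentration inequality for this average around its mean, once that mean is identified with $d_S(\bs u,\bs v)$.

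First I would establish the geometric identity $\Prob[X_1 = 1] = \tfrac{1}{\pi}\arccos\!\big(\scp{\bs u}{\bs v}/(\norm{\bs u}\,\norm{\bs v})\big)$, the Goemans--Williamson fact, which gives $\E X_1 = d_S(\bs u,\bs v)$ with the normalization implicit in the statement (one may assume $\bs u,\bs v \neq \bs 0$, the degenerate case being trivial). The cleanest route exploits the rotational invariance of the standard Gaussian: a sign disagreement means the hyperplane $\bs\varphi_1^\perp$ separates $\bs u$ from $\bs v$, an event detected entirely inside the two-dimensional subspace $E = \mathrm{span}\{\bs u,\bs v\}$; hence $X_1$ depends on $\bs\varphi_1$ only through its orthogonal projection onto $E \cong \Rbb^2$. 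That projection is an isotropic Gaussian on $E$, so its direction is uniform on the circle, and the set of directions whose associated line through the origin separates two vectors subtending an angle $\theta = \arccos(\ldots)\in[0,\pi]$ is an arc of angular measure $2\theta$, i.e.\ a fraction $\theta/\pi$ of the circle. This yields the claimed probability.

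I would then conclude with Hoeffding's inequality: for i.i.d.\ variables $X_i \in [0,1]$,
$$
\Prob\!\left(\,\Big|\tfrac{1}{M}\textstyle\sum_{i=1}^M X_i - \E X_1\Big|\ \geq\ \epsilon\,\right)\ \leq\ 2\,e^{-2\epsilon^2 M},
$$
which is exactly \eqref{eq:conc-prop-hamming} after substituting $\E X_1 = d_S(\bs u,\bs v)$.

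There is no serious obstacle here; the only point requiring a little care is the two-dimensional reduction and the bookkeeping of the normalization constant $\pi$, together with the boundary cases where $\bs u$ and $\bs v$ are collinear ($\theta \in \{0,\pi\}$), for which the separation probability is $0$ or $1$ as expected. One could alternatively bypass the geometry and simply cite \cite{GoeWil::1995::Improved-approximation} for the value of $\E X_1$, but I would keep the short rotational-invariance argument for self-containedness.
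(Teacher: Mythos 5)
Your proposal is correct and follows exactly the route the paper itself sketches in Section~\ref{sec:discussion}: identify the per-coordinate sign disagreements as i.i.d.\ Bernoulli variables with success probability $d_S(\bs u,\bs v)$ (the Goemans--Williamson two-dimensional rotational-invariance argument), then apply the Chernoff--Hoeffding bound to the resulting Binomial average, which yields precisely the constant $2e^{-2\epsilon^2 M}$. Your only addition is to spell out the planar reduction that the paper delegates to the cited references, and your remark about the implicit $1/\pi$ normalization of $d_S$ is apt.
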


Following again a union
bound argument on all pairs of a set $\cl S\subset \Rbb^N$ of size
$S$, for a fixed $\epsilon>0$ and given $M > M_0 =
O(\epsilon^{-2}\log S)$, Prop.~\ref{prop:conc-prop-hamming} induces a
certain embedding of $(\cl S \subset
\Rbb^N, d_S)$ in $(\cl B^M, d_H)$ where, for all $\bs u, \bs v\in \cl S$,
\begin{equation}
\label{eq:conc-prop-hamming-set}
 d_S(\bs x,\bs s) - \epsilon\ \leq\ d_H\big(\bs \mapf_{\rm bin} (\bs
 x),\bs \mapf_{\rm bin} (\bs s)\big)\ \leq\ d_S(\bs x,\bs s) + \epsilon,
\end{equation}
with high probability.

We directly notice two striking differences with the classical
formulation of the JL Lemma: the use of new distance definitions of
course, but more importantly, the presence of an 
error $\epsilon$ that is now additive with respect to the angular distance~$d_S$.
 
Actually, \eqref{eq:conc-prop-hamming-set} shows that 1-bit quantization breaks the \emph{isometric} property of
random linear mappings. These actually become \emph{quasi-isometric} between
the metric spaces $(\cl S, d_S)$ and $(\bs\mapf_{\rm bin}(\cl S)\subset\cl B^M, d_H)$ in the
following sense.
\begin{definition}[\cite{bridsongeometric}]
  \label{def:quasi-isom-def}
  A function $\bs h: \cl X\rightarrow \cl Y$ is called a
  \emph{quasi-isometry} between metric spaces $(\cl X, d_{\cl X})$ and $(\cl Y,
  d_{\cl Y})$ if there exists $C>0$ and $D\geq0$ such that
  $$
  \textstyle \frac{1}{C}\,d_{\cl X}(\bs x, \bs s)\,-\, D\ \leq\ d_{\cl Y}(\bs
  h(\bs x), \bs h(\bs s))\
  \leq\ C d_{\cl X}(\bs x, \bs s) \,+\, D,
  $$ 
  for $\bs x, \bs s \in \cl X$, and $E>0$
  such that $d_{\cl Y}(y,h(\bs x))<E$ for all $y\in \cl Y$.   
\end{definition}
\medskip
 
This paper aims at going beyond the aforementioned binary case. We want to characterize the impact of a more general uniform
quantization $\cl Q$ of
precision (or \emph{bin width})~$\qbin>0$ on a linear dimensionality
reduction procedure. In particular, our objective is to find a mapping $\bs \mapf: \Rbb^N \to
\ZbbMqbin$, combining a
linear random projection from $\Rbb^N$ to $\Rbb^M$ with a certain~$\cl
Q:\Rbb^M\to \ZbbMqbin$, for which the following quasi-isometric
relation is satisfied: 
$$
(1-\epsilon) d(\bs u,\bs v) - \epsilon' \leq d'(\bs \mapf(\bs u), \bs \mapf(\bs v)) \leq (1+\epsilon) d(\bs u,\bs v) + \epsilon',
$$
for all $\bs u,\bs v \in \cl S$, for some distances $d$ and $d'$, and with $\epsilon,\epsilon'>0$
decreasing with $\qbin$ or $M$. This would
generalize nicely the JL Lemma by also showing that, despite is quasi-isometric nature, the mapping is
tighter when the dimensionality $M$ increases, or that it is nearly isometric when $\qbin$ vanishes.  

As it will become clear in Sec.~\ref{sec:quantized-embeddings}, we answer positively to
this quest when $d$ and $d'$ are the $\ell_2$ and $\ell_1$ distances,
respectively, and for $\epsilon \propto \epsilon'$.  Our main result
is as follows. 
\begin{proposition}
\label{prop:quantiz-jl-lemma}
Let $\cl S \subset \Rbb^N$ be a set of $S$ points. Fix $0<\epsilon<1$
and $\qbin >0$. 
For $M > M_0 = O(\epsilon^{-2}\log S)$, there exist a non-linear mapping $\bs \mapf:\Rbb^N\to \ZbbMqbin$ and two
constants $c,c'>0$ such that, for all pairs $\bs u,\bs v\in \cl S$,
\begin{equation}
  \label{eq:quantiz-jl-lemma}
  \textstyle (1 - \epsilon) \|\bs u - \bs
  v\| - c\qbin\epsilon \leq\ \frac{c'}{M} \|\bs \mapf(\bs u) -
  \bs \mapf(\bs v)\|_1
\leq\ 
(1 + \epsilon)\|\bs u - \bs v\| + c \qbin \epsilon.
\end{equation}  
\end{proposition}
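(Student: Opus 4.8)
\medskip
\noindent\emph{Proof plan.} The plan is to take for $\bs\mapf$ a \emph{dithered quantized Gaussian projection} and to recognise the right-hand side of \eqref{eq:quantiz-jl-lemma} as a sample average of independent Buffon's-needle counts. Draw $\bs\Phi\sim\cl N^{M\times N}(0,1)$ with rows $\bs\phi_1,\dots,\bs\phi_M$ and an independent dither $\bs\xi=(\xi_1,\dots,\xi_M)$ with $\xi_i\sim_{\rm iid}U([0,\qbin))$, and set
\[
\bs\mapf(\bs u)\ \eqdef\ \qbin\,\big\lfloor\,\tinv{\qbin}\big(\bs\Phi\bs u+\bs\xi\big)\,\big\rfloor\ \in\ \ZbbMqbin .
\]
For $\bs u,\bs v\in\cl S$, the $i$-th entry of $\bs\mapf(\bs u)-\bs\mapf(\bs v)$ is $\pm\qbin$ times the number of points of the dithered grid $\qbin\Zbb-\xi_i$ falling between $\scp{\bs\phi_i}{\bs u}$ and $\scp{\bs\phi_i}{\bs v}$, i.e.\ the number of strips crossed by a needle of length $L_i\eqdef|\scp{\bs\phi_i}{\bs u-\bs v}|$ dropped with a uniform offset. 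Writing $A_i$ for the absolute value of that entry, $\tinv{M}\|\bs\mapf(\bs u)-\bs\mapf(\bs v)\|_1=\tinv{M}\sum_i A_i$, and I would analyse this average.

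The key step is a bias/fluctuation split $A_i=L_i+e_i$. Since $\scp{\bs\phi_i}{\bs u-\bs v}\sim\cl N(0,\|\bs u-\bs v\|^2)$ we have $L_i=\|\bs u-\bs v\|\,|g_i|$ with $g_i\sim_{\rm iid}\cl N(0,1)$. The Buffon identity — a needle of length $L$ dropped with a uniform offset on width-$\qbin$ strips crosses $L/\qbin$ of them \emph{in expectation} — gives $\E[A_i\mid\bs\Phi]=L_i$, so $\E[e_i\mid\bs\Phi]=0$; moreover $A_i/\qbin$ is always one of the two integers nearest to $L_i/\qbin$, so conditionally on $\bs\Phi$ the variable $e_i$ has range at most $\qbin$. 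Hence
\[
\tinv{M}\sum_{i=1}^M A_i\ =\ \|\bs u-\bs v\|\cdot\tinv{M}\sum_{i=1}^M|g_i|\ +\ \tinv{M}\sum_{i=1}^M e_i .
\]

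Next I would apply two concentration bounds, one per term. First, $|g_i|-\sqrt{2/\pi}$ is zero-mean and sub-Gaussian with an absolute $\psi_2$-constant, so $\Prob\big[\,\big|\tinv{M}\sum_i|g_i|-\sqrt{2/\pi}\big|>\epsilon\,\big]\leq 2e^{-c_0M\epsilon^2}$ — this is the source of the \emph{multiplicative} distortion $\epsilon\|\bs u-\bs v\|$. Second, conditionally on $\bs\Phi$ the $e_i$ are independent, centred and of range $\leq\qbin$, so Hoeffding gives $\Prob\big[\,\big|\tinv{M}\sum_i e_i\big|>\qbin\epsilon\mid\bs\Phi\,\big]\leq 2e^{-2M\epsilon^2}$, a bound uniform in $\bs\Phi$ and hence unconditional — this is the source of the \emph{additive} distortion $\qbin\epsilon$. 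On the intersection of the two good events, with $c'=\sqrt{\pi/2}$ and after rescaling $\epsilon$ by an absolute constant, one gets \eqref{eq:quantiz-jl-lemma} for a \emph{fixed} pair with failure probability $\leq 4e^{-cM\epsilon^2}$. The usual JL union bound over the $\binom{S}{2}\leq S^2/2$ pairs then makes the global failure probability at most $2S^2e^{-cM\epsilon^2}$, which is $<1$ once $M>M_0=O(\epsilon^{-2}\log S)$; a suitable $(\bs\Phi,\bs\xi)$ therefore exists, and drawing independent copies pushes the success probability to $1$ as discussed before Prop.~\ref{prop:quantiz-jl-lemma}.

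I expect the one genuinely delicate point to be the bias/fluctuation split: the dither must be used precisely so that the per-row error $e_i$ is \emph{exactly} centred (the Buffon identity), not merely small, and the Hoeffding step must be carried out conditionally on $\bs\Phi$ so that the additive term $\qbin\epsilon$ is controlled \emph{uniformly} over the position of $\cl S$ with respect to the quantization grid — an undithered quantizer would leave a residual bias of order $\qbin$ depending on that position. Everything downstream is the familiar Gaussian-concentration-plus-union-bound machinery underlying the classical JL Lemma.
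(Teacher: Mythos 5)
Your proof is correct, but it follows a genuinely different route from the paper's main argument. The paper proves Prop.~\ref{prop:quantiz-jl-lemma} by identifying each $X_j=\tinv{\qbin}|(\bs\mapf_\qbin(\bs u))_j-(\bs\mapf_\qbin(\bs v))_j|$ with an $N$-dimensional Buffon variable $\buf(\tfrac{r_j}{\qbin}\|\bs u-\bs v\|,N)$ mixed over $r_j\sim\chi(N)$, then bounding \emph{all} moments $\bb E X_j^q$ (Prop.~\ref{prop:buffon-moment-bounds}) so as to feed Bernstein's inequality. You instead split $A_i=L_i+e_i$ with $L_i=|\scp{\bs\phi_i}{\bs u-\bs v}|$, observe that the dither makes $e_i$ \emph{exactly} conditionally centred (via $\bb E_u|\lfloor a+u\rfloor-\lfloor b+u\rfloor|=|a-b|$, which is the identity the paper itself invokes only in the appendix remark) and of conditional range $\leq\qbin$, and then handle the two pieces with two elementary concentration bounds: Gaussian/sub-Gaussian concentration of $\tinv{M}\sum_i|g_i|$ for the multiplicative error, and Hoeffding conditionally on $\bs\Phi$ for the additive error. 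All the key claims check out: the conditional independence of the $e_i$ given $\bs\Phi$ (they depend only on the i.i.d.\ $\xi_i$), the two-point conditional support of $A_i/\qbin$ in $\{\lfloor t\rfloor,\lfloor t\rfloor+1\}$, and the uniformity in $\bs\Phi$ that lets you de-condition the Hoeffding bound. Your route is close in spirit to the reviewer's alternative proof in Appendix~\ref{sec:altern-proof-prop2} (which bounds $\|Z-Y\|_\infty\leq 2\qbin$ and applies a single sub-Gaussian tail bound to $Z$), but your bias/fluctuation decomposition is arguably cleaner, since it isolates the quantization noise as a bounded, conditionally centred term and yields the $\qbin\epsilon$ additive distortion with explicit constants. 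What the paper's heavier machinery buys in exchange is the standalone characterization of $\buf(a,N)$ (contribution C2) and the higher-moment bounds that are actually needed later for the $\ell_2/\ell_2$ embedding of Prop.~\ref{prop:quantiz-jl-lemma-l2-l2}, which your first-moment-only split would not deliver.
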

More specifically, given a uniform
quantization $\lambda \in \bb R \mapsto \cl Q_\qbin(\lambda) := \qbin \lfloor
\lambda/\qbin \rfloor \in \qbin\bb Z$ (applied componentwise on
vectors), Sec.~\ref{sec:quantized-embeddings} demonstrates that, for
some $C,c'' >0$, if $M
> C \epsilon^{-2} \log S$, then, given a random Gaussian matrix $\bs\Phi \sim \cl N^{M\times N}(0,1)$ and a
uniform random vector (or \emph{dithering} \cite{gray1998quantization,Boufounos2010}) $\bs\xi \sim
\cl U^M([0, \qbin])$ (with $\cl U$ the uniform distribution), the quantized random mapping 
\begin{equation}
  \label{eq:quantiz-map-first}
  \bs x \in \bb R^N\ \mapsto\ \bs\mapf_\qbin(\bs x) := \cl Q_\qbin(\bs\Phi \bs x
  + \bs\xi) \in \ZbbMqbin  
\end{equation}
respects \eqref{eq:quantiz-jl-lemma} with
probability at least $1 - \exp (-c'' \epsilon^2 M)$. 

Prop.~\ref{prop:quantiz-jl-lemma} shows that there exists a quasi-isometric mapping between
$(\cl S \subset \Rbb^N, \ell_2)$ and $(\bs \mapf (\cl S) \subset \ZbbMqbin, \ell_1)$ with
constants $D = c\qbin\epsilon$, $C = 1/(1-\epsilon)\geq 1 + \epsilon$
for $0\leq\epsilon<1$, and finite $E$ in
Def.~\ref{def:quasi-isom-def}. In the rest of this paper, we will
forget these subtleties and say that a relation such as
\eqref{eq:quantiz-jl-lemma} defines a quasi-isometric mapping between 
$(\cl S, \ell_2)$ and $(\ZbbMqbin, \ell_1)$, or equivalently, a
$\ell_2/\ell_1$ quasi-isometric embedding of $\cl S$ in $\ZbbMqbin$.

We clearly see in \eqref{eq:quantiz-jl-lemma} the two expected
distortions: one additive of amplitude $c\qbin\,\epsilon$, and the
other multiplicative and associated to an error factor $(1\pm
\epsilon)$. The additive distortion vanishes if $\qbin$ tends to
zero (whereas the other does not). Moreover, by inverting the relation between $M$
and $\epsilon$, we
observe that both errors decay as $O(\sqrt{\log S/M})$.
In the case of an infinitely fine quantization ($\qbin\to 0$), we
also recover classical embedding results of $(\Rbb^N, \ell_2)$ in
$(\Rbb^M,\ell_1)$ associated to measure concentration in
Banach spaces~\cite{ledoux2005concentration,ledoux1991pbs} (see Sec.~\ref{sec:discussion}).

Notice that Prop.~\ref{prop:quantiz-jl-lemma} generalizes somehow the result
obtained in \cite{boufounos2011secure} for universal binary schemes \cite{Boufounos2010},
\ie when the 1-bit quantizer is \emph{non-regular} and has discontinuous quantization regions. The reason for this is that, despite
its regularity, our quantizer can be seen as a $B$-bit uniform
quantizer where $B$ should be related to $\log_2(\max_{j,\bs u\in\cl S} |(\bs\mapf(\bs u))_j|/\qbin)$. Thus, we show here
that the behavior of the additive distortion of binary quantized
mappings discovered in \cite{boufounos2011secure} is also valid at a
higher number of bits.
\medskip

For reasons that will become clear later, the context that makes
Prop.~\ref{prop:quantiz-jl-lemma} possible\footnote{
  However, as
  mentioned at the end of this Introduction, this is not the only context able to induce Prop.~\ref{prop:quantiz-jl-lemma}.} was already defined in 1733 by Georges-Louis Leclerc, Comte de Buffon in
France. In one of the volumes of his impressive work entitled ``L'Histoire Naturelle\footnote{\url{http://www.buffon.cnrs.fr/?lang=en}}'', this French naturalist stated and solved the following
important problem \cite{buffon1777essai}:
\begin{quote}[English translation of Fig.~\ref{fig:buffon-problem-2d-orig-book} from \cite{hey2010georges}]
``\emph{I suppose that in a room where the floor is simply divided by
  parallel joints one throws a stick (N/A: later called ``needle'')  in the air, and that one of the players bets that the stick will not cross any of the parallels on the floor, and that the other in contrast bets that the stick will cross some of these parallels; one asks for the chances of these two players.}''  
\end{quote}

\begin{figure}[!t]
  \centering
  \subfigure[\label{fig:buffon-problem-2d-orig-book}]{\includegraphics[width=7cm]{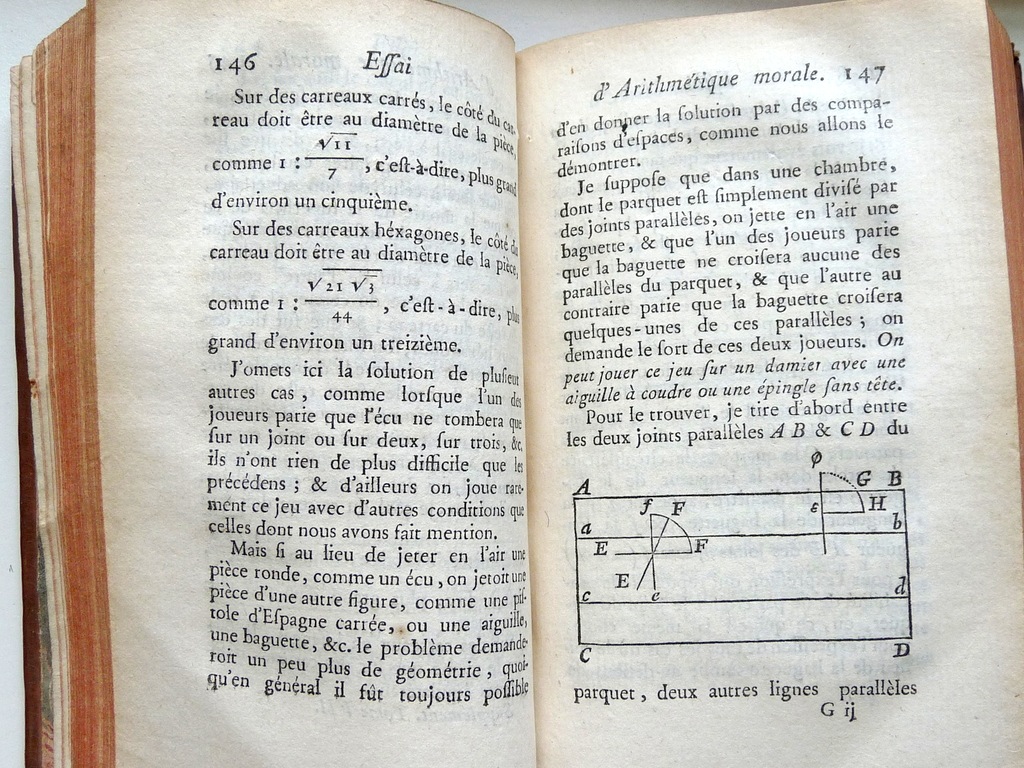}}\hspace{1cm} 
  \subfigure[\label{fig:buffon-problem-2d}]{\includegraphics[width=6cm]{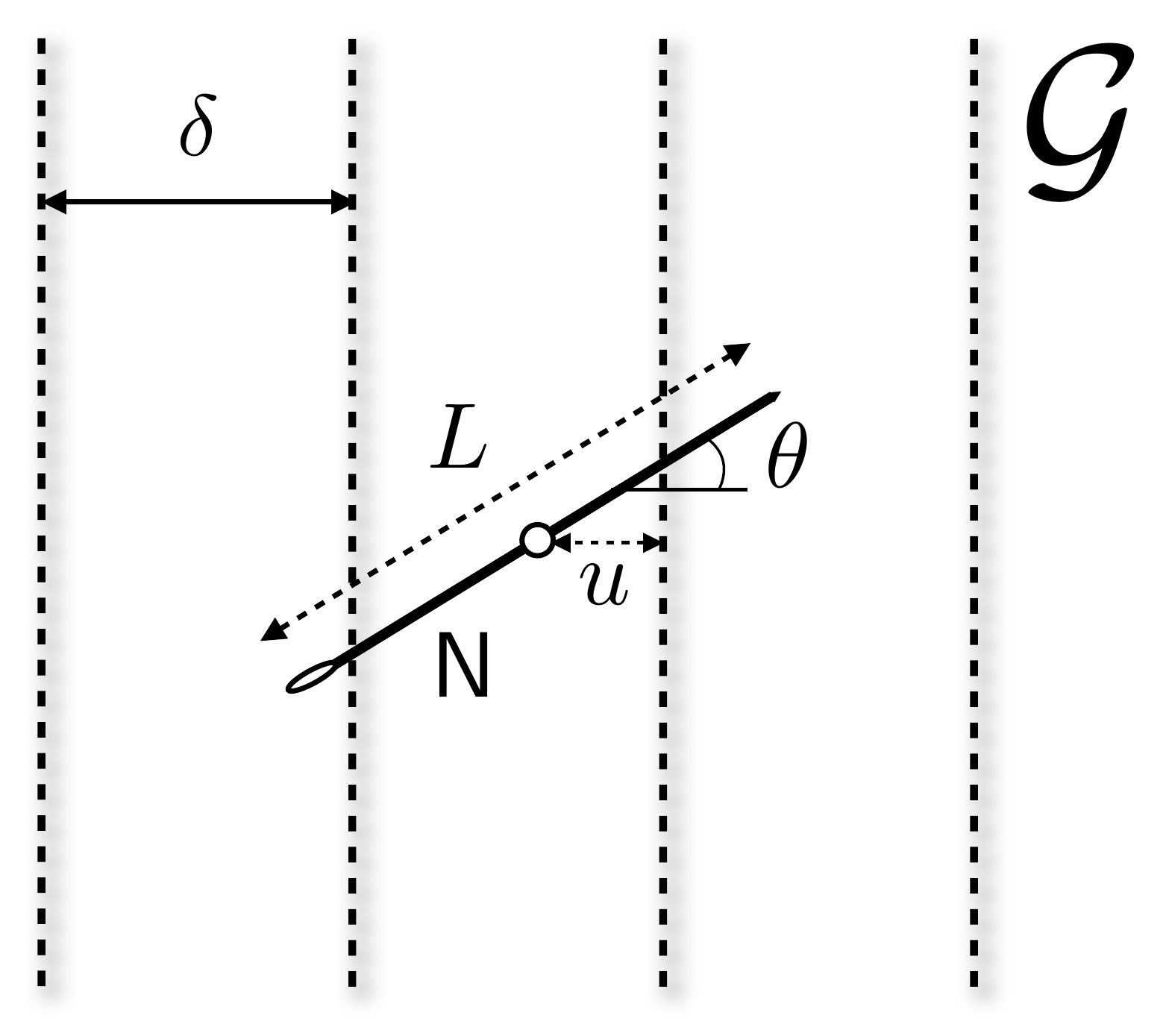}}
  \caption{(a) Picture of \cite[page 147]{buffon1777essai} stating
    the initial formulation of Buffon's needle problem (Courtesy of
    E. Kowalski's blog
    \url{http://blogs.ethz.ch/kowalski/2008/09/25/buffons-needle}). (b)
    Scheme of Buffon's needle problem}
\end{figure}

As explained in Sec.~\ref{sec:init-form-solut}, the solution is astonishingly simple: for a short needle compared to the separation
$\qbin$ between two consecutive parallels (see Fig. \ref{fig:buffon-problem-2d}), the
probability of having one intersection between the needle and the
parallels is equal to the needle length times $\frac{2}{\qbin\pi}$. If
the needle is longer, then this probability is less easy to express but
the expectation of the number of intersections (which can now be
bigger than one) remains equal to this value.   

This problem, and its solution published in 1777
\cite{buffon1777essai}, is considered as the beginning of the
discipline called ``geometrical probability''
\cite{kendall1963geometrical}. Moreover, the solution has also shed new
light on the estimation of $\pi$, \ie by
estimating the probability of intersection on a large number of throws, paving
the way to the well-known stochastic (Monte Carlo) estimation methods.

In this paper, we are going to show that the analysis of Buffon's problem, and its generalization to an $N$-dimensional space, allows us
to specify the conditions surrounding
Prop.~\ref{prop:quantiz-jl-lemma}. As explained in
Sec.~\ref{sec:quantized-embeddings}, the connection between the
existence of a quantized embedding and Buffon's problem is
simple. Forgetting a few technicalities detailed later, \emph{uniformly quantizing the random projections in $\Rbb^M$ of two points in
$\Rbb^N$ and measuring
the difference between their quantized values is fully equivalent to
study the number of intersections made by the segment determined by those two points
(seen as a Buffon's needle) with a parallel grid of
$(N-1)$-dimensional hyperplanes.} 

As an aside to proving Prop.~\ref{prop:quantiz-jl-lemma}, this paper
provides also, to the best of our knowledge, new results on the
behavior of Buffon's needle problem in high-dimensional space. For
instance, we establish a few interesting bounds and asymptotic relations concerning the moments of
the random variable counting the needle/grid intersections (see Sec.~\ref{sec:gener-n-dimens}).\\

In summary, the main contributions of this paper can be considered threefold:
\medskip

\noindent\textbf{\small (C1)} We study the impact of a simple (dithered) quantization on the
Johnson-Lindenstrauss Lemma and show how the introduced distortions
(both additive and multiplicative) decay with~$M$;
\medskip

\noindent\textbf{\small (C2)} We generalize Buffon's needle problem in $N$ dimensions and
  bound all the moments of a discrete distribution ${\rm
    Buffon}(a, N)$ (with $a>0$) counting the
  intersections that a randomly thrown 1-D ``needle'' of length $a$
  makes in $\bb R^N$ with a fixed grid
  of parallel $(N-1)$-hyperplanes spaced by a unit length;
\medskip

\noindent\textbf{\small (C3)} A bridge is built between the characterization of a
  quantized JL Lemma and this generalized Buffon's
  needle problem.
\medskip

Of course, this paper does not claim that (C1) can only be proved thanks to
(C2) and other methods developed on different mathematical tools could
exist. However, we find that the connection (C3) made between (C1) and (C2)
is sufficiently interesting for being presented in this work.

\medskip
The rest of the paper is organized as follows. First, we discuss in
  Sec.~\ref{sec:discussion} our main result, \ie
  Prop.~\ref{prop:quantiz-jl-lemma}, and we identify different distortion regimes of
 our quantized embedding related to the extreme
  values of both $\delta$ and the number of measurements~$M$. Next,
  the initial
problem of Buffon's needle, \ie the core of our developments, and its solution are explained in
Sec.~\ref{sec:init-form-solut} before its $N$-dimensional
generalization developed in Sec.~\ref{sec:gener-n-dimens}.  The
relation between this problem and the existence of an $\ell_2/\ell_1$ quantized
embedding of $\cl S \subset \Rbb^N$ in $\ZbbMqbin$ is then provided in Sec.~\ref{sec:quantized-embeddings}. Finally,
we
provide in Sec.~\ref{sec:extensions} an extension of our analysis that
provides a ``nearly'' quasi-isometric embedding of $(\cl S, \ell_2)$ in $(\ZbbMqbin,
\ell_2)$. This one must be considered with a non-linear distortion of the
$\ell_2$-distance in $\cl S$ that vanishes for large pairwise
distances in this set. Remarkably, the additive distortion in this
mapping decays more slowly with $M$, \ie as~$O((\log S/M)^{1/4})$.

Let us finally acknowledge an anonymous and expert reviewer
  for having pointed out a very elegant and compact proof of
  Prop.~\ref{prop:quantiz-jl-lemma} that does not rely on our
  generalization of Buffon's needle problem. In short, this proof actually uses the
  properties of sub-Gaussian random variables
  \cite{IntroNonAsRandom} in order to \emph{(i)} characterize the sub-Gaussian
  nature of $\bs\mapf_\qbin(\bs u) - \bs\mapf_\qbin(\bs v)$ for any
  pair of vectors $\bs u,\bs v \in \bb R^N$, \emph{(ii)} demonstrate the concentration
  properties of the $\ell_1$-norm of this difference around its mean, and
  \emph{(iii)} showing that this mean can be characterized by the
  two-dimensional case of Buffon's needle problem. This proof is
  reported in Appendix~\ref{sec:altern-proof-prop2}. In order to
  make it self-contained, we have also briefly recalled there the
  definition and main properties of sub-Gaussian
distributions. We believe that the tools developed in this alternative
proof provide a powerful analysis of quantized random projections that
can be useful for the interested readers.

\paragraph*{Conventions:} Most domain dimensions are denoted
by capital roman letters, \eg $M, N, \ldots$ Vectors, vector functions and matrices are associated to bold
symbols, \eg $\bs \Phi \in \Rbb^{M\times N}$ or $\bs u \in \Rbb^M$, while lowercase light letters are associated to scalar
values. The $i^{\rm th}$
component of a vector (or a vector function) $\bs u$ reads either $u_i$ or $(\bs u)_i$,
while the notation $\bs u_i$ refers to the $i^{\rm
  th}$ element of a set of vectors. The set of indices in
$\Rbb^D$ is $[D]=\{1,\,\cdots,D\}$. The scalar product between two vectors
$\bs u,\bs v \in \Rbb^{D}$ for some dimension $D\in\Nbb$ is denoted
equivalently by $\bs u^T \bs v = {\bs u \cdot \bs v}=\scp{\bs u}{\bs
  v}$. For any $p\geq 1$, the $\ell_p$-norm of $\bs u$ is $\|\bs u\|_p^p = \sum_i |u_i|^p$ with
$\|\!\cdot\!\|=\|\!\cdot\!\|_2$.  We will abuse the notation ``$\ell_p$''
to either denote the $\ell_p$-norm as above or the $\ell_p$-distance
(or metric) between two points $\bs u, \bs v\in\Rbb^N$ defined by
$\|\bs u - \bs v\|_p$ (\eg for defining a metric space $(\cl X \subset
\Rbb^D, \ell_p)$). The event indicator function $\bb I$
is defined as $\bb I(A) = 1$ if $A$ is verified and 0 otherwise.  A
uniform distribution over $\cl I \subset \Rbb$ is denoted by $\cl U(\cl I)$. A
random matrix $\bs \Phi \sim \cl D^{M\times N}(\Theta)$ is an $M\times N$
matrix with entries distributed as $\Phi_{ij} \sim_{\iid} \cl
D(\Theta)$ given the distribution parameters $\Theta$ of $\cl D$ (\eg $\cl
N^{M\times N}(0,1)$ or $\cl U^{M\times N}([0,1])$). A
random vector in $\Rbb^M$ following $\cl D(\Theta)$ is defined by $\bs v \sim \cl D^{M}(\Theta)$.
Given two random variables $X$ and $Y$, the notation $X \sim
Y$ means that $X$ and $Y$ have the same distribution.
The probability of an event $\cl E$ is denoted $\bb P(\cl E)$.  The
diameter of a finite set $\cl S\subset \Rbb^N$ of cardinality $|\cl S|$ is $\diam \cl S = \max_{\bs u,\bs v\in\cl S}
\|\bs u -\bs v\|$ and its radius is $\rad \cl S = \max_{\bs u\in\cl S}
\|\bs u\|$. The $(N-1)$-sphere in $\Rbb^N$ is $\bb S^{N-1}=\{\bs x\in\Rbb^N:
\|\bs x\|=1\}$.  For asymptotic relations, we use the common Landau
family of notations, \ie the symbols $O$, $\Omega$ and $\Theta$ (their exact definition can be found in
\cite{knuth1976big}). The positive thresholding function is defined by
$(\lambda)_+ := \tinv{2}(\lambda + |\lambda|)$ for
any~$\lambda\in\Rbb$.

\section{Discussion}
\label{sec:discussion}

How can we analyze the distortions induced by the quasi-isometric
mapping provided by Prop.~\ref{prop:quantiz-jl-lemma}? 
Interestingly, we can identify three key regimes, depending on
the values of $\delta$ and $M$, where a quantized embedding respecting
\eqref{eq:quantiz-jl-lemma} displays different typical behaviors.
\paragraph*{(a) Nearly isometric regime} Under a fine quantization scheme, \ie if 
\begin{equation}
  \label{eq:nu-S-def}
\qbin\ \ll\ \nu_{\cl S} := \min_{\bs u, \bs v\,\in\,\cl S:\ \bs u\neq \bs
  v} \|\bs u - \bs v\|,  
\end{equation}
Eq. \eqref{eq:quantiz-jl-lemma} essentially provides a Lipschitz embedding
of $(\cl S, \ell_2)$ in $(\bs\mapf(\cl S), \ell_1)$. This is sensible since, considering the mapping $\bs\mapf_\qbin(\bs x) := \cl Q_\qbin(\bs\Phi \bs x
  + \bs\xi)$ defined in \eqref{eq:quantiz-map-first}, for
such a fine quantization, the corresponding distortion almost
disappears, \ie $\cl Q(\lambda) \simeq \lambda$ for any $\lambda \gg \qbin$,
and it is known that, for a Gaussian random matrix $\bs \Phi \sim \cl
N^{M\times N}(0,1)$ and two fixed vectors $\bs u$ and $\bs v$,
\begin{equation}
  \label{eq:lipschitz-l2-l1-mapping}
  (1 - \epsilon)\,\|\bs u - \bs
  v\|\ \leq\ \tfrac{\sqrt{\pi}}{\sqrt 2\, M}\|\bs \Phi \bs u - \bs
  \Phi \bs v\|_1\ \leq\ 
  (1 + \epsilon)\|\bs u - \bs v\|,  
\end{equation}
with probability higher than $1 - 2e^{-\inv{2}\epsilon^2 M}$.

Indeed, as explained for instance in \cite[Appendix A]{Jacques2010}, this is a
simple consequence of the following result due to Ledoux and Talagrand.
\begin{proposition}[Ledoux, Talagrand \cite{ledoux1991pbs} (Eq. 1.6)] 
If $F$ is
Lipschitz with constant\footnote{The Lipschitz constant of $F$ is defined as $\|F\|_{\rm Lip}
\triangleq \sup_{\bs x, \bs y\,\in\,\Rbb^M,\ \bs x \neq \bs
  y}\,{|F(\bs x) -
  F(\bs y)|}\,/\,{\|\bs x-\bs y\|_2}$.} $\lambda=\|F\|_{\rm Lip}$, then, for a random vector
$\bs \zeta\in\Rbb^M$ with $\zeta_i\sim_{\rm iid} \mathcal{N}(0,1)$ (\ie
$\bs \zeta \sim \cl N^{M}(0,1)$), 
\begin{equation*}
\bb P\big[\,| F(\bs \zeta) - \mu_F | > r\,\big]\ \leq\ 2
e^{-\inv{2} r^2 \lambda^{-2}},\quad {\rm for}\ r>0,
\end{equation*}
with $\mu_F = \E[F(\bs \zeta)]$.
\end{proposition}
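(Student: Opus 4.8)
The plan is to recognize this as the standard Gaussian concentration inequality for Lipschitz functions and to prove it by the Herbst argument, \emph{i.e.}\ by turning a logarithmic Sobolev inequality into a sub-Gaussian bound on the Laplace transform of $F(\bs\zeta)$, then applying a Chernoff bound.

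\textbf{Reductions.} By homogeneity, replacing $F$ by $F/\lambda$ and $r$ by $r/\lambda$ reduces everything to $\|F\|_{\rm Lip}=1$, and subtracting the constant $\mu_F$ lets us assume $\E[F(\bs\zeta)]=0$. A mollification step — convolving $F$ with a Gaussian kernel of vanishing width — further lets us assume $F$ is smooth with $\|\nabla F\|_\infty\leq 1$; the general case then follows by a routine limiting argument (both $F(\bs\zeta)$ and $\mu_F$ converge, and the smoothing does not increase $\|F\|_{\rm Lip}$). Since $1$-Lipschitzness gives $|F(\bs x)|\leq |F(\bs 0)|+\|\bs x\|$, the function $H(t):=\E[e^{tF(\bs\zeta)}]$ is finite for every $t\in\Rbb$ and smooth in $t$, with $\log H(t)/t\to\E[F(\bs\zeta)]=0$ as $t\to 0^+$.

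\textbf{From log-Sobolev to a Laplace bound.} I would invoke the Gaussian logarithmic Sobolev inequality: with $\gamma$ the standard Gaussian measure on $\Rbb^M$ and $\mathrm{Ent}_\gamma(h)=\E_\gamma[h\log h]-\E_\gamma[h]\log\E_\gamma[h]$, one has $\mathrm{Ent}_\gamma(g^2)\leq 2\,\E_\gamma[\|\nabla g\|^2]$ for all smooth $g$. Applying this to $g=e^{tF/2}$, so that $g^2=e^{tF}$ and $\|\nabla g\|^2=\tfrac{t^2}{4}\|\nabla F\|^2 e^{tF}\leq \tfrac{t^2}{4}e^{tF}$, yields for $t>0$ the differential inequality $t\,H'(t)-H(t)\log H(t)\leq \tfrac{t^2}{2}H(t)$. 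Setting $K(t):=\tfrac1t\log H(t)$, this is exactly $K'(t)\leq \tfrac12$, and since $K(0^+)=0$ by the reduction step, integration gives $K(t)\leq t/2$, i.e.\ $\E[e^{tF(\bs\zeta)}]\leq e^{t^2/2}$; running the same computation for $-F$ extends this to all $t\in\Rbb$.

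\textbf{Chernoff bound and the main obstacle.} For $r>0$, $\bb P[F(\bs\zeta)>r]\leq e^{-tr}\,\E[e^{tF(\bs\zeta)}]\leq e^{-tr+t^2/2}$, and optimizing at $t=r$ gives $\bb P[F(\bs\zeta)>r]\leq e^{-r^2/2}$; applying this to $-F$ and taking a union bound yields $\bb P[|F(\bs\zeta)-\mu_F|>r]\leq 2e^{-r^2/2}$, after which undoing the rescaling $F\mapsto F/\lambda$ restores the factor $\lambda^{-2}$ in the exponent, matching the claimed estimate. I expect the one genuinely delicate point to be the provenance of the tool rather than the arithmetic: one either cites the Gaussian log-Sobolev inequality (Gross) as a black box, or bypasses it via the Gaussian isoperimetric inequality applied to a super-level set $\{F\leq m\}$ of $\gamma$-measure $\geq\tfrac12$, where $m$ is a median of $F(\bs\zeta)$ — this gives $\gamma(\{F>m+r\})\leq \tfrac12 e^{-r^2/2}$ by $1$-Lipschitzness, and one then converts the median to the mean at the cost of an absorbed constant. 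The mollification/limiting argument also deserves a careful line, but is standard.
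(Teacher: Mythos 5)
Your proof is correct. Note, however, that the paper itself offers no proof of this proposition: it is quoted verbatim from Ledoux and Talagrand (their Eq.~(1.6)) and used as a black box, so there is nothing internal to compare your argument against. What you supply is the standard Herbst argument: the Gaussian logarithmic Sobolev inequality $\mathrm{Ent}_\gamma(g^2)\leq 2\,\E_\gamma[\|\nabla g\|^2]$ applied to $g=e^{tF/2}$ yields the differential inequality $tH'(t)-H(t)\log H(t)\leq \tfrac{t^2}{2}H(t)$ for $H(t)=\E[e^{tF(\bs\zeta)}]$, which integrates (via $K(t)=t^{-1}\log H(t)$, $K(0^+)=0$) to the sub-Gaussian Laplace bound $\E[e^{tF}]\leq e^{t^2/2}$, and Chernoff plus the two-sided bound gives exactly the constant $\tfrac12$ in the exponent claimed in the statement. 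All the intermediate steps check out: the entropy identity $\mathrm{Ent}(e^{tF})=tH'(t)-H(t)\log H(t)$, the gradient bound $\|\nabla g\|^2\leq\tfrac{t^2}{4}e^{tF}$, the computation $K'(t)\leq\tfrac12$, the finiteness of $H$ from linear growth of $F$, and the limit $K(0^+)=\E F=0$ after centering. Your mollification step is also handled correctly (Gaussian smoothing does not increase the Lipschitz constant and converges uniformly). The one genuine import is the log-Sobolev inequality itself, which you rightly flag; the alternative isoperimetric route you sketch would give the sharp median version but costs a constant when passing to the mean, so for reproducing the stated bound around $\mu_F$ with exponent $\tfrac12 r^2\lambda^{-2}$ the Herbst route is the right choice.
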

For a Gaussian random matrix $\bs \Phi \sim \cl N^{M\times N}(0,1)$, the vector
$\bs \zeta = \|\bs u -\bs v\|^{-1} \bs \Phi(\bs u -\bs v)$ is
distributed as $\cl N^{M}(0,1)$. Taking $F(\cdot) = \|\cdot\|_1$
with $\|F\|_{\rm Lip} = \sqrt M$ and $r=M\epsilon$ with $\epsilon> 0$
provides \eqref{eq:lipschitz-l2-l1-mapping} since $\mu_F = M
\sqrt{\tfrac{2}{\pi}}$. 

As explained before, the result \eqref{eq:lipschitz-l2-l1-mapping} is
easily extendable (from a union bound argument) to
the embedding of a finite set $\cl S$ of $S$ points in $\Rbb^N$ provided $M > M_0 =
O(\epsilon^2 \log S)$. Noticeably, Prop.~\ref{prop:quantiz-jl-lemma}
converges exactly to this isometric
mapping if $\qbin \ll \nu_{\cl S}$.

\paragraph*{(b) Quasi-isometric binary regime} In the case where
$\qbin$ is greater than the diameter $\diam\cl S$, \ie the greatest
distance between any pair of points in this set, then the quantization
distortion dominates and the quantized embedding reduces to a
quasi-isometric embedding. Indeed, for such a situation, we 
reach 
\begin{equation}
\textstyle \|\bs u - \bs
  v\| - (1+ c)\qbin \epsilon\ \leq\ \frac{c'}{M}\|\bs \mapf(\bs u) -
  \bs \mapf (\bs v)\|_1\ \leq\ \|\bs u - \bs v\| + (1+c)\qbin\epsilon, 
\end{equation}
since then $\|\bs u - \bs v\| \leq \qbin$. This is reminiscent of the
observations made in
\cite{GoeWil::1995::Improved-approximation,Jacques2011} about the
embedding properties of ``binarized'' random projections. As explained in the Introduction, given
$\bs u,\bs v\in \Rbb^N$ and $\epsilon > 0$, if we randomly generate
$\bs\Phi$ as $\cl N^{M\times N}(0,1)$, then, from \eqref{eq:conc-prop-hamming},
$$
d_S(\bs u,\bs v)\,-\,\epsilon\ \leq\ d_H\big(\sign(\bs\Phi\bs
u),\sign(\bs\Phi\bs v)\big)\ \leq\ d_S(\bs u,\bs v)\,+\,\epsilon,
$$
with probability exceeding $1 - 2\,e^{-2\epsilon^2 M}$. In short,
this result amounts to first showing that the signs of $\bs \varphi_j^T\bs u$
and $\bs \varphi_j^T\bs v$ differ with a probability equal to $d_S(\bs u,\bs
v)$ for any $j\in[M]$, and second to observing that the sum of all
such signs
collected at every $j$ (as performed in the Hamming
distance $d_H$) behaves as a Binomial random variable of $M$ trials and
probability $d_S(\bs u,\bs
v)$. This kind of random variable is known to concentrate quickly around its mean $d_S(\bs u,\bs
v)$ from a simple application of the Chernoff-Hoeffding inequality \cite{Jacques2011}.

In this considered case, the 1-bit quantization of the random
projections performed by the sign operator is not strictly equivalent to our quantization scheme
defined in \eqref{eq:quantiz-mappin-def-with-dithering} in that there is
no \emph{dithering}. This absence imposes the definition of other
distances $d_S$ and $d_H$, while in our case the dither allows one to
recover an Euclidean ($\ell_2$) distance in $\Rbb^N$ rather than the angular
one. However, for both kinds of quantizations, we do observe the same
quasi-isometric behavior with a dominant additive distortion $\epsilon$.

\paragraph*{(c) High measurement regime} This is possibly the most
interesting regime since it displays some ``blessing of dimensionality''
for tightening the two quasi-isometric distortions as $M$ increases. It was formerly observed in \cite[p.~3]{BR_DCC13} that for
a scalar uniform quantizer $\cl Q$ such as ours, if $M > M_0 = 
O(\epsilon^{2} \log M)$, the JL Lemma induces \emph{a priori} a
quasi-isometric mapping with a much looser additive
distortion. Indeed, given $\bs \Phi \sim \cl N^{M\times N}(0,1)$ (with
this prescribed $M$), for
any points $\bs u,\bs v\in\cl S$ we have 
\begin{equation}
\textstyle (1 - \epsilon) \|\bs u - \bs
  v\| -c\qbin\ \leq\ \tinv{\sqrt{M}}\|\cl Q(\bs \Phi \bs u) - \cl Q(\bs \Phi \bs v)\|\ \leq\ (1 + \epsilon)\|\bs u - \bs v\| + c\qbin,  
\end{equation}
for some $c>0$.  For our uniform quantizer $\cl Q$ and any dithering
$\bs \xi \in \Rbb^M$, this is easily obtained from the relation $|\cl Q(\lambda) - \lambda|
\leq \lambda/2$ and from 
\begin{equation}
\textstyle \|\bs \Phi
\bs u - \bs \Phi \bs v\|^2 - \inv{2}\,M \qbin^2\ \leq\ \|\cl Q(\bs \Phi
\bs u + \bs \xi) - \cl Q(\bs \Phi \bs v + \bs \xi)\|^2\ \leq\ \|\bs \Phi
\bs u - \bs \Phi \bs v\|^2 + \inv{2}\,M \qbin^2,  
\end{equation}
with $\|\bs \Phi \bs u - \bs \Phi \bs v\|$
close to $\|\bs u -\bs v\|$ up to a distortion factors $(1\pm
\epsilon)$ by the JL Lemma.  Notice that taking the square root of this inequality for lowering the
power 2 is not a problem since $(a - b) \leq (a^2 - b^2)^{1/2}$ if $a>b>0$ and $(a^2+b^2)^{1/2} <
a+b$ for any $a,b>0$.

Similarly, introducing the quantization in the $\ell_2/\ell_1$
isometric
embedding explained in \eqref{eq:lipschitz-l2-l1-mapping} has also the same
impact since 
\begin{equation}
\textstyle \|\bs \Phi
\bs u - \bs \Phi \bs v\|_1 - \inv{2}\,M \qbin\ \leq\ \|\cl Q(\bs \Phi
\bs u + \bs \xi) - \cl Q(\bs \Phi \bs v + \bs \xi)\|_1\ \leq\ \|\bs \Phi \bs u - \bs \Phi \bs v\|_1 + \inv{2}\,M \qbin. 
\end{equation}

In both situations, the additive error induced by the quantization is constant with $M$.
As expressed by
Prop.~\ref{prop:quantiz-jl-lemma} (and later in
Prop.~\ref{prop:final-quant-embed-prop}), our analysis shows that there
exists a mapping for which the
same error actually scales as $O(\qbin/\sqrt M)$, \ie the finding of
Buffon's needle helped us to reduce that distortion by a factor $\sqrt
M$.   

\section{Buffon's needle problem}
\label{sec:buff-needle-probl}

\subsection{Initial formulation and solution}
\label{sec:init-form-solut}

Let us rephrase Buffon's needle problem stated in the Introduction in a more formal way. Let $\cl G\subset \Rbb^2$ be
a set of equispaced parallel lines in $\Rbb^2$, two consecutive lines being
separated by a distance $\qbin>0$. Let a needle $\needle$ of length
$L$ be thrown
uniformly at random on the plane $\Rbb^2$: its orientation $\theta$ is drawn uniformly at random on the circle~$[0,
2\pi]$, while from the $\qbin$-periodicity of
$\cl G$, the distance $u$ of the needle's midpoint to the closest
line is a uniform random variable over $[0, \qbin/2]$ (see Fig.~\ref{fig:buffon-problem-2d}). 

Buffon's needle problem then amounts to computing the probability $P$ that
$\needle(u,\theta)\cap \cl G\neq \emptyset$. As a matter of fact, this
probability is easily estimated since, conditionally to the knowledge
of $\theta$, there is at least one 
intersection if $2u\leq L|\cos\theta|$. Therefore, we find
\begin{align}
P&\textstyle = 
\tfrac{1}{\pi\qbin}\,\int_0^{2\pi} \ud \theta
\int_0^{\qbin/2} \bb I\big(2\min(u,\qbin -u)\leq L|\cos\theta|\big)\
\ud u\nonumber\\
&\label{eq:proba-1inter-2d}\textstyle = \tfrac{4}{\pi\qbin}\,\int_0^{\pi/2} \ud \theta
\int_0^{\inv{2}\min(\qbin,\,L\cos\theta)} \ud u.
\end{align}
We observe that if $L<\qbin$, then $L\cos\theta < \qbin$ and
$P=\tfrac{2L}{\pi\qbin}$, while if $L\geq \qbin$, the solution reads
$
P = \tfrac{2}{\pi} \theta_1 + \tfrac{2L}{\pi\qbin}(1-\sin\theta_1)
$
with $\cos\theta_1 = \tfrac{\qbin}{L}$.

Notice that if $L<\qbin$, only one intersection is possible, and if
$X$ denotes the random variable associated to the occurrence of such
an intersection, we have therefore $\bb E X = P = \frac{2L}{\pi \qbin}$. 
Interestingly, for any $L> 0$, this expectation still keeps
the same value. 
\begin{proposition}[\cite{ramaley1969buffon}]
\label{prop:expect-intersect}
Let $X$ be the
discrete random variable counting the number of intersections of $\needle$ with~$\cl G$, \ie $X = |\{\needle(u,\theta) \cap \cl G\}|$ where $u$
and $\theta$ are two random variables defined as above. Then, writing $a=L/\qbin$,
$$
0 \leq X \leq \lfloor a \rfloor + 1\quad \text{and}\quad  \bb E X =
\tfrac{2}{\pi}a.
$$   
\end{proposition}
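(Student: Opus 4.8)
The plan is to prove the two claims separately, the bound on $X$ being elementary and the expectation being the interesting part. For the upper bound: the needle $\needle(u,\theta)$ is a segment of length $L$, and the lines of $\cl G$ that it can meet are among the finitely many lines within distance $L$ of its midpoint. If the needle makes angle $\theta$ with the grid lines, its projection onto the direction orthogonal to the lines is an interval of length $L|\cos\theta|\leq L$; the number of grid lines (spaced by $\qbin$) meeting such an interval is at most $\lfloor L|\cos\theta|/\qbin\rfloor + 1 \leq \lfloor L/\qbin\rfloor + 1 = \lfloor a\rfloor+1$. Together with the trivial $X\geq 0$ this gives the stated range, and in particular $X$ is a bounded (hence integrable) random variable.

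For the expectation, the key step is \textbf{additivity of the expected number of intersections under concatenation of needles}, which is the essence of the "Buffon's noodle" argument. Write $\needle$ as a union of $k$ collinear sub-segments of lengths $L_1,\dots,L_k$ with $\sum_i L_i = L$, placed end to end. The number of grid intersections of the whole needle equals the sum of the intersection counts of the pieces, except possibly for double-counting at the shared endpoints; since a shared endpoint lands on a grid line only with probability $0$ (the midpoint-offset $u$ has a continuous distribution), we get $\E X(\needle) = \sum_{i=1}^k \E X(\needle_i)$. Moreover, each sub-segment, having the same uniformly random orientation and (by $\qbin$-periodicity) a uniformly random offset to the nearest line, is itself a Buffon needle of length $L_i$, so $\E X$ is a function $g(L)$ satisfying Cauchy's functional equation $g(L_1+\cdots+L_k) = \sum g(L_i)$. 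Restricting first to rational multiples of a base length and using monotonicity of $g$ (a longer needle can be obtained by appending a segment, so $g$ is nondecreasing), one concludes $g(L) = cL$ for a universal constant $c$. Finally, evaluate $c$ from the short-needle regime already computed before the proposition: for $L<\qbin$ at most one intersection occurs, so $\E X = P = \tfrac{2L}{\pi\qbin}$, giving $c = \tfrac{2}{\pi\qbin}$ and hence $\E X = \tfrac{2}{\pi}\,a$ with $a = L/\qbin$.

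The main obstacle — really the only point requiring care — is justifying the passage from the functional equation on a dense set to $g(L)=cL$ for \emph{all} $L>0$: this needs the monotonicity of $g$, which must be argued cleanly (appending a further segment to a needle can only increase, never decrease, the total number of intersections, and expectation preserves this), together with the additivity across \emph{arbitrary} partitions, not just equal ones. An alternative, and perhaps cleaner, route that sidesteps Cauchy's equation entirely is the direct integral computation: condition on $\theta$, so that the needle's orthogonal extent is a fixed interval of length $L|\cos\theta|$ whose left endpoint is uniform modulo $\qbin$; the expected number of grid points in such an interval is exactly $L|\cos\theta|/\qbin$ (the mean of a shifted lattice count equals the length over the spacing, with no boundary correction in expectation). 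Then
\[
\E X \;=\; \frac{1}{2\pi}\int_0^{2\pi} \frac{L|\cos\theta|}{\qbin}\,\ud\theta \;=\; \frac{L}{2\pi\qbin}\cdot 4 \;=\; \frac{2}{\pi}\,\frac{L}{\qbin} \;=\; \frac{2}{\pi}\,a,
\]
which I would actually prefer to present as the proof, relegating the noodle/additivity remark to a comment.
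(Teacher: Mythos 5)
Your proposal is correct, and it in fact contains two proofs. The first (additivity of $\E X$ under concatenation of sub-needles, monotonicity of $h(L)=\E X$, hence $h(L)=cL$, with $c$ fixed by the short-needle probability $P=\tfrac{2L}{\pi\qbin}$) is exactly the argument the paper gives, following Ramaley; the paper likewise invokes the nondecreasing character of $h$ to pass from the functional equation to linearity, so the point you flag as delicate is handled the same way there, and your explicit treatment of the shared endpoints (a null event) is a detail the paper leaves implicit. The route you say you would actually present --- condition on $\theta$, observe that the orthogonal projection of the needle is an interval of length $L|\cos\theta|$ with offset uniform modulo $\qbin$, so that the expected lattice count is exactly $L|\cos\theta|/\qbin$ with no boundary term, then integrate $\tfrac{1}{2\pi}\int_0^{2\pi}\tfrac{L|\cos\theta|}{\qbin}\,\ud\theta=\tfrac{2}{\pi}a$ --- is genuinely different from and arguably cleaner than the paper's: it avoids Cauchy's functional equation entirely and needs no case split on $L\lessgtr\qbin$, at the cost of losing the feature that makes the noodle argument attractive, namely that it applies verbatim to arbitrary rectifiable curves (where the conditional count given $\theta$ is no longer a single interval's lattice count). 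Your upper bound $X\leq\lfloor L|\cos\theta|/\qbin\rfloor+1\leq\lfloor a\rfloor+1$ is also a more careful justification than the paper's ``obvious from the problem definition.'' Both of your arguments are sound.
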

\begin{proof}
We follow the spirit of the proof given in \cite{ramaley1969buffon}. The domain of $X$ is obvious from the problem definition. For
estimating the expectation, let us observe that the needle $\needle$ can
always be considered as being made of two joint needles $\needle_1$ and
$\needle_2$ of lengths $L_1$ and $L_2$ ($L_1+L_2=L$). If $X_1$ and $X_2$
are the random variables counting their respective intersection with $\cl G$, we
have $X=X_1 + X_2$. Therefore, since $\bb E X$ necessarily depends on
$L$ through some nondecreasing function $h$, we find $h(L)=\bb E X = \bb E(X_1 +
X_2) = \bb E X_1 + \bb E X_2 = h(L_1) + h(L_2)$. This shows that
$h(L)=c L$ for some $c>0$ independent of $L$. From the knowledge of
$\bb E X$ for $L<\qbin$, we deduce that $c = \frac{2}{\pi\qbin}$.
\end{proof}

Surprisingly enough, this proposition still holds if the needle is replaced
by any smooth curve of length
$L$ \cite{ramaley1969buffon}. Indeed, such curve can always be approximated by a piecewise
linear contour with arbitrary small error and the proof above does not depends on a possible
bending of the $\needle_1$ and $\needle_2$. However, the distribution
of $X$ does depend on the curve shape. 

Let us specify now what is known of the distribution of $X$ in the
case of a (straight) needle.
\begin{proposition}[{\cite[pp.\ 72--73]{kendall1963geometrical}\cite{diaconis1976buffon}}] 
\label{prop:distrib-X}
Given $a=L/\qbin$, define the angles $\theta_k\in [0,\pi/2]$ such that $\cos \theta_k =
k/a$ for $0\leq k \leq n$ with $n=\lfloor a\rfloor$,
$\cos \theta_{k}=0$ for $k<0$ and $\cos \theta_{k}=1$ for all $k>n$.
The distribution of $X \in [n+1]$ is determined by the probabilities 
\begin{align}
p_k = \bb P(X=k)&=\kappa_{k+1}+\kappa_{k-1}-2\kappa_k,
\end{align}
with $\kappa_k=(2a\sin\theta_k/\pi) - (2k\theta_k/\pi)$. 
\end{proposition}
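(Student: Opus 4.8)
The plan is to condition on the needle's orientation $\theta$, write down the (elementary) conditional law of $X$, and then recognise each $\kappa_k$ as a conditional moment of $X$, so that the second-difference formula $\kappa_{k+1}+\kappa_{k-1}-2\kappa_k$ collapses to $\bb P(X=k)$. Concretely, rescale so that $\qbin=1$ (hence $a=L$). Fix $\theta$ and let $b=b(\theta):=a\,|\cos\theta|\in[0,a]$ be the length of the orthogonal projection of $\needle$ onto the direction normal to the lines of $\cl G$. By the $1$-periodicity of $\cl G$, the normal coordinate of, say, the lower endpoint of $\needle$ is uniform modulo $1$; calling it $y\in[0,1)$, the needle meets exactly the lines whose normal coordinate is an integer in $(y,y+b]$, so $X=\lfloor y+b\rfloor$. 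Writing $b=\lfloor b\rfloor+\{b\}$, this shows that, conditionally on $\theta$, $X$ is the two-point variable equal to $\lfloor b\rfloor$ with probability $1-\{b\}$ and to $\lfloor b\rfloor+1$ with probability $\{b\}$; in particular $\bb E[X\mid\theta]=b$, which re-derives Prop.~\ref{prop:expect-intersect}.

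From this two-point law, a two-case check ($b\le j$ versus $b>j$, for $j$ a nonnegative integer) gives $\bb E[(X-j)_+\mid\theta]=(b-j)_+=(a\cos\theta-j)_+$. Averaging over $\theta$ uniform on $[0,2\pi]$ and using $\int_0^{2\pi} f(|\cos\theta|)\,\ud\theta=4\int_0^{\pi/2} f(\cos\theta)\,\ud\theta$,
\[
\bb E[(X-j)_+]\ =\ \tfrac{2}{\pi}\int_0^{\pi/2}(a\cos\theta-j)_+\,\ud\theta\ =\ \tfrac{2}{\pi}\bigl(a\sin\theta_j-j\,\theta_j\bigr)\ =\ \kappa_j ,
\]
where for $0<j<a$ the integrand is positive exactly on $[0,\theta_j)$ with $\cos\theta_j=j/a$, and where the boundary conventions $\cos\theta_j=0$ for $j\le 0$ and $\cos\theta_j=1$ for $j>n=\lfloor a\rfloor$ are precisely what make the middle equality hold in those two regimes as well (integrand positive on all of $[0,\pi/2]$, resp.\ identically zero). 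Hence $\kappa_j=\bb E[(X-j)_+]$ for every integer index $j$ that will occur.

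Finally, with the ramp $\rho_i(t):=(t-i)_+$, the tent $g_k:=\rho_{k-1}-2\rho_k+\rho_{k+1}$ satisfies $g_k(m)=\bb I(m=k)$ for every $m\in\Zbb$ (it is $1$ at $m=k$ and $0$ at every other integer). Since $X$ is integer-valued, linearity of expectation together with the previous step yields
\[
\kappa_{k+1}+\kappa_{k-1}-2\kappa_k\ =\ \bb E\bigl[\rho_{k+1}(X)-2\rho_k(X)+\rho_{k-1}(X)\bigr]\ =\ \bb E[g_k(X)]\ =\ \bb P(X=k)\ =\ p_k,
\]
which is the claim.

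The one genuinely delicate point is the first step: justifying that the normal endpoint coordinate is uniform modulo $\qbin$ and that the number of crossings is then exactly $\lfloor y+b\rfloor$. This is where the $\qbin$-periodicity of $\cl G$ (plus translation invariance of Lebesgue measure) is essential, and it is the rigorous counterpart of the ``fold to $[0,\qbin/2]$'' heuristic already invoked for Prop.~\ref{prop:expect-intersect}. Everything after that is a one-line integral and a trivial algebraic identity; the only extra care is to confirm that the clipping conventions on $\theta_k$ for $k\notin\{0,\dots,n\}$ match the closed form of $\kappa_k$, which they do.
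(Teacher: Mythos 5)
Your proof is correct, and it takes a genuinely different route from the paper's. The paper fixes $\theta$, identifies on which interval $[\theta_{k+1},\theta_{k-1}]$ exactly $k$ crossings occur and with what conditional probability, integrates each case separately ($k=0$, $1\leq k\leq n$, $k=n+1$), and only then rewrites the results as second differences of the $\kappa_k$. You instead observe that, conditionally on $\theta$, $X$ is a two-point variable satisfying $\bb E[(X-j)_+\mid\theta]=(a|\cos\theta|-j)_+$, so that $\kappa_j=\bb E[(X-j)_+]$ follows from a single integral valid for every index $j$ (the clipping conventions on $\theta_j$ for $j<0$ and $j>n$ being exactly the statement that the integrand is positive everywhere, resp.\ nowhere, on $[0,\pi/2]$), and you then recover $p_k$ from the elementary identity $(m-k+1)_+ + (m-k-1)_+ - 2(m-k)_+=\bb I(m=k)$ for integer $m$. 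This buys you two things: the case analysis over $\theta$-intervals disappears, being absorbed into the support of $(a\cos\theta-j)_+$, and the boundary conventions are verified rather than decreed. It also makes explicit the probabilistic meaning $\kappa_j=\bb E[(X-j)_+]$, which the paper only reaches afterwards in integral form via \eqref{eq:kappa-alt-form}, and your tent-function step is precisely the inverse of the summation-by-parts identity of Lemma~\ref{lemma:sum-cj-pj} used later for the moments. The one modelling point you rightly flag --- uniformity modulo $\qbin$ of the needle's normal coordinate and the formula $X=\lfloor y+b\rfloor$ --- is the same step the paper already takes for granted in Prop.~\ref{prop:expect-intersect}, so nothing is lost there.
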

\begin{proof}
This proof only differs in notations from the one of
\cite[pp. 72-73]{kendall1963geometrical}. Let $n=0$,
$p_0=1-P$ with $P$ computed in \eqref{eq:proba-1inter-2d}. For $\theta$ fixed, the conditional probability of having $n+1$ intersections reads
$a|\cos\theta| - n$ if $\theta\leq \theta_n$, and 0
otherwise. Therefore, we have
$$
p_{n+1} = \tfrac{2}{\pi}\,\int_0^{\theta_n} (a\cos\theta - n)\,\ud
\theta = \tfrac{2a}{\pi}\sin\theta_n - \tfrac{2n}{\pi}\theta_n.\\
$$
For $1\leq k\leq n$, there are $k$ intersections if $\theta_{k+1} \leq
\theta \leq \theta_{k-1}$. Thus, the conditional probability reads $(k+1 -
a\cos\theta)$ if $\theta_{k+1} \leq
\theta \leq \theta_{k}$, and $(a\cos\theta - k+1)$ if $\theta_{k} \leq
\theta \leq \theta_{k-1}$. Therefore, 
\begin{align*}
p_{k}&\textstyle = \tfrac{2}{\pi}\,\int_{\theta_{k+1}}^{\theta_k} (k+1 - a\cos\theta)\,\ud
\theta\\
&\textstyle \qquad\qquad + \tfrac{2}{\pi}\,\int_{\theta_{k}}^{\theta_{k-1}}
(a\cos\theta - k+1)\,\ud
\theta\\
&= \tfrac{2a}{\pi}(\sin\theta_{k+1} + \sin\theta_{k-1} -
  2\sin\theta_{k})\\
&\qquad\qquad- \tfrac{2}{\pi}((k+1)\theta_{k+1} + (k-1)\theta_{k-1}
-2k\theta_{k}).
\end{align*}
The rest of the proof consists in expressing these results in terms of $\kappa_k$.
\end{proof}

An analysis of the other properties of the random variable~$X$ (\eg
characterizing its moments) is postponed after the discussion of the
multidimensional generalization of Buffon's needle problem.

\subsection{$N$-dimensional generalization}
\label{sec:gener-n-dimens}

How does Buffon's needle problem generalize in an $N$-dimensional
space? More precisely, what phenomena do we observe on the ``random
throw''\footnote{Assuming of course that we can throw an object in an $N$-dimensional
  space so that it stops in a fixed position of $\Rbb^N$, as it stops on the floor of the
  2-dimensional formulation.} of a 1-dimensional needle $\needle$ of length $L$ on an infinite set $\cl G$
of equispaced parallel hyperplanes of dimension $N-1$ separated by a
distance $\qbin>0$?

In $N$ dimensions, the position of the needle relatively to $\cl G$
can again be determined by its distance $u \in [0,\qbin/2]$ to the closest hyperplane of
$\cl G$, while its orientation can be characterized by a set of $(N-1)$ angles
$\{\theta,\phi_{1},\phi_2,\cdots,\phi_{N-3}\}$ on $\bb S^{N-1}$. These include the angle $\theta\in[0,\pi]$
measured between the needle and the normal vector orthogonal to all
hyperplanes\footnote{Notice that, conversely to the two-dimensional
  analysis, this angle $\theta$ covers now the half
  circle $[0,\pi]$, the other angles guaranteeing that all
  orientations in $\bb S^{N-1}$ can be obtained.}, while
the others range as $\phi_k\in [0,\pi]$ for $1\leq k \leq N-3$ and
$\phi_{N-2}\in[0,2\pi]$. We recall that in this \emph{hyperspherical} system of
coordinates, the $(N-1)$-sphere $\bb S^{N-1}$
is measured by 
\begin{equation}
\textstyle \sigma(\bb S^{N-1}) = \int_0^\pi (\sin\theta)^{N-2} \ud \theta\ \big(\int_0^\pi
(\sin\phi_1)^{N-3} \ud \phi_1 \ \cdots\ \int_0^\pi
\sin\phi_{N-3}\, \ud \phi_{N-3} \int_0^{2\pi}
\ud \phi_{N-2} \big),  
\end{equation}
where $\sigma(\cdot)$ denotes the rotationally invariant area measure
on the $(N-1)$-sphere. 
  
The first question we can ask ourselves is how the expectation of 
$X=|\needle \cap \cl G|$ evolves in this multidimensional setting. Following the same argumentation of the
previous section, we must still have $\bb E X \propto a$, but what is
now the
proportionality factor? 
\begin{proposition}
In the $N$-dimensional Buffon's needle problem, the expected number of
intersections between the needle and the hyperplanes reads
\begin{equation}
  \label{eq:buffon-expect}
\textstyle\bb E X = \cN a,\quad \text{with}\quad \cN = \frac{\Gamma(\frac{N}{2})}{\sqrt{\pi}\,\Gamma(\frac{N+1}{2})},
\end{equation}
$\tau_2=\frac{2}{\pi}$ and $\tau_3=\frac{1}{2}$.
\end{proposition}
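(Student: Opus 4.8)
The plan is to reduce the statement, via the very same additivity argument used in the proof of Prop.~\ref{prop:expect-intersect}, to the case of a \emph{short} needle, and then to identify the proportionality constant $\cN$ with the mean absolute cosine of the angle between a uniformly random direction on $\bb S^{N-1}$ and the fixed normal direction of~$\cl G$. First I would observe that splitting $\needle$ into two joint sub-needles $\needle_1,\needle_2$ of lengths $L_1+L_2=L$ gives $X=X_1+X_2$, so $\bb E X$ is an additive, nondecreasing function of $L$; hence $\bb E X = c\,L$ for some constant $c=c(N,\qbin)$. It therefore suffices to evaluate $c$ on needles with $L<\qbin$.

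For such a short needle, $X\in\{0,1\}$ and $\bb E X=\bb P(X=1)$. The position of $\needle$ is encoded by the distance $u\sim\cl U([0,\qbin/2])$ of its midpoint to the nearest hyperplane of $\cl G$ and by an orientation uniform on $\bb S^{N-1}$; of the hyperspherical angles, only $\theta\in[0,\pi]$ — the angle to the common normal of the hyperplanes — is relevant, because the orthogonal projection of $\needle$ onto that normal has length $L|\cos\theta|$. There is an intersection iff $2u\leq L|\cos\theta|$, and since $u$ is uniform on $[0,\qbin/2]$ and $L|\cos\theta|\leq L<\qbin$, integrating over $u$ gives $\bb P(X=1\mid\theta)=L|\cos\theta|/\qbin$. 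Averaging over the orientation and writing $a=L/\qbin$ yields $\bb E X = a\,\bb E|\cos\theta|$, so that $\cN=\bb E|\cos\theta|$.

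It remains to compute $\bb E|\cos\theta|$. In the hyperspherical coordinates recalled in Sec.~\ref{sec:gener-n-dimens}, the density of $\theta$ is proportional to $(\sin\theta)^{N-2}$ on $[0,\pi]$ — all the integrals over $\phi_1,\dots,\phi_{N-2}$ cancel between numerator and denominator — hence
$$
\cN\ =\ \frac{\int_0^\pi |\cos\theta|\,(\sin\theta)^{N-2}\,\ud\theta}{\int_0^\pi (\sin\theta)^{N-2}\,\ud\theta}\ =\ \frac{2/(N-1)}{\sqrt{\pi}\,\Gamma(\tfrac{N-1}{2})/\Gamma(\tfrac{N}{2})},
$$
using $\int_0^\pi|\cos\theta|(\sin\theta)^{N-2}\,\ud\theta = 2\int_0^{\pi/2}\cos\theta\,(\sin\theta)^{N-2}\,\ud\theta = \tfrac{2}{N-1}$ and the Beta/Wallis identity $\int_0^\pi (\sin\theta)^{N-2}\,\ud\theta = \sqrt{\pi}\,\Gamma(\tfrac{N-1}{2})/\Gamma(\tfrac{N}{2})$. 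The elementary identity $(N-1)\Gamma(\tfrac{N-1}{2})=2\Gamma(\tfrac{N+1}{2})$ then turns this into $\cN = \Gamma(\tfrac{N}{2})\big/\big(\sqrt{\pi}\,\Gamma(\tfrac{N+1}{2})\big)$, which is~\eqref{eq:buffon-expect}. Substituting $N=2$ gives $\tau_2=\Gamma(1)/(\sqrt{\pi}\,\Gamma(3/2))=2/\pi$, and $N=3$ gives $\tau_3=\Gamma(3/2)/(\sqrt{\pi}\,\Gamma(2))=1/2$, as claimed.

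No step here is genuinely hard: the argument is a dimension-agnostic transcription of the two-dimensional case. The two points that need care are the claim that only the angle $\theta$ to the normal enters — so that the rotationally invariant measure on the remaining angles factors out and cancels in the quotient defining $\cN$ — and the short, easy-to-slip Gamma-function bookkeeping used to put $\cN$ in the stated closed form.
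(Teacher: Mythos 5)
Your proposal is correct and follows essentially the same route as the paper: reduce to the short-needle case $L<\qbin$ via the additivity/linearity argument of Prop.~\ref{prop:expect-intersect}, condition on the angle $\theta$ to the common normal so that the intersection event is $2u\leq L|\cos\theta|$, and average $|\cos\theta|$ against the density proportional to $(\sin\theta)^{N-2}$ (the paper writes this as $\tfrac{2a}{(N-1)I_{N-2}}$ with $I_{N-2}=\int_0^\pi(\sin\theta)^{N-2}\,\ud\theta$, which is exactly your quotient). Your Gamma-function bookkeeping and the evaluations $\tau_2=2/\pi$, $\tau_3=1/2$ are all correct.
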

\begin{proof}
As for the proof of Prop.~\ref{prop:expect-intersect}, determining $\cN$
can be done for $L<\qbin$ where $\bb E X$ matches the
probability of having one intersection. In this case, following the
determination of this probability for the two-dimensional case (Sec.~\ref{sec:init-form-solut}), we can say that, conditionally to the knowledge
of $\theta$ and of the $N-2$ other angles $\{\phi_1,\cdots,\phi_{N-2}\}$, there is an
intersection if $2u\leq L|\cos\theta|$. Therefore, defining $I_k:=\int_0^\pi
(\sin\alpha)^k\ud\alpha$ with $\sigma(\bb S^{N-1})=2I_0\cdots I_{N-2}$ and
considering the periodicity of $|\!\cos \theta|$, the
probability $P_N$ of having one intersection generalizes as
\begin{align*}
P_N&\textstyle= \tfrac{2}{\qbin\sigma(\bb S^{N-1})}\,\int_0^{\pi} (\sin\theta)^{N-2}\,\ud
\theta\ \big(\int_0^\pi
(\sin\phi_1)^{N-3}\,\ud\phi_1\,\cdots\\
&\textstyle\qquad\qquad\,\cdots \int_0^\pi
\sin\phi_{N-3}\,\ud\phi_{N-3}\,\int_0^{2\pi}\,\ud\phi_{N-2}\big)\\
&\textstyle\qquad\qquad\qquad\times\ \int_0^{\qbin/2} \bb I\big(2u\leq L |\!\cos\theta|\big)\
\ud u\\
&\textstyle= \tfrac{4}{\qbin I_{N-2}}\,\int_0^{\pi/2} (\sin\theta)^{N-2}\,\ud
\theta\ 
\int_0^{\qbin/2} \bb I\big(2u\leq L \cos\theta\big)\
\ud u\\
&\textstyle= \tfrac{2a}{I_{N-2}}\,\int_0^{\pi/2} (\sin\theta)^{N-2}\,\cos \theta\,\ud
\theta\ =\ \tfrac{2a}{(N-1)I_{N-2}}.
\end{align*}
Since $I_k=\sqrt \pi\,\Gamma(\frac{k+1}{2})/\Gamma(\frac{k}{2} + 1)$ and $\bb E X =
P_N$ for $a < 1$, we find
$$
\cN\ =\ \tfrac{2}{(N-1)I_{N-2}}\ =\
\tfrac{2\Gamma(\frac{N}{2})}{\sqrt{\pi}\,(N-1)\,\Gamma(\frac{N-1}{2})}
= \tfrac{\Gamma(\frac{N}{2})}{\sqrt{\pi}\,\Gamma(\frac{N+1}{2})}.
$$
The values for $\tau_2$ and $\tau_3$ come from the evaluations
$\Gamma(1)=1$, $\Gamma(1/2)=\sqrt{\pi}$ and $\Gamma(3/2)=\sqrt{\pi}/2$.
\end{proof}
To the best of our knowledge, $\cN$ was only known for the case $N=2$ and $N=3$
(see \cite[pp. 70 and 77, respectively]{kendall1963geometrical}). This
quantity behaves as follows.
\begin{proposition}
\label{prop:bounds-EX}
In the $N$-dimensional Buffon's needle problem, 
$$  
\textstyle\ \sqrt{\tfrac{2}{\pi}} \, (N+1)^{-\frac{1}{2}}
\ \leq\ \cN\ \leq\ \sqrt{\tfrac{2}{\pi}} \,  (N-1)^{-\frac{1}{2}}
$$
so that $\bb E X = \Theta(a / \sqrt{N})$.  
\end{proposition}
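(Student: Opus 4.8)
The plan is to bound the quantity $\cN = \Gamma(\tfrac{N}{2})/(\sqrt{\pi}\,\Gamma(\tfrac{N+1}{2}))$ by controlling the ratio of Gamma functions $\Gamma(\tfrac{N}{2})/\Gamma(\tfrac{N+1}{2})$, which is a shifted ratio $\Gamma(x)/\Gamma(x+\tfrac12)$ with $x = N/2$. Writing $\cN$ in terms of $I_{N-2} = \int_0^\pi(\sin\alpha)^{N-2}\ud\alpha$ via the identity $\cN = 2/((N-1)I_{N-2})$ established in the previous proof, the task reduces to sandwiching the Wallis-type integral $I_{N-2}$. So the first step is to record $\cN = \sqrt{\tfrac{2}{\pi}}\,\big(\Gamma(\tfrac{N}{2})/\Gamma(\tfrac{N+1}{2})\big)/\sqrt{2/\pi}$, i.e. to isolate the factor $r_N := \Gamma(\tfrac{N}{2})/\Gamma(\tfrac{N+1}{2})$ and prove $(N+1)^{-1/2} \le r_N/\sqrt{2/\pi}\cdot(\text{const}) \le (N-1)^{-1/2}$; more cleanly, one shows directly $\sqrt{2/(\pi(N+1))} \le \cN \le \sqrt{2/(\pi(N-1))}$.

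The cleanest route I would take is the standard Gautschi-type inequality for ratios of Gamma functions: for $x>0$ and $s\in(0,1)$,
\begin{equation*}
x^{1-s}\ \leq\ \frac{\Gamma(x+1)}{\Gamma(x+s)}\ \leq\ (x+1)^{1-s}.
\end{equation*}
Applying this with $s = \tfrac12$ and a suitable choice of $x$ (namely $x = (N-1)/2$, so that $x+1 = (N+1)/2$ and $x+s = N/2$, giving $\Gamma(x+1)/\Gamma(x+s) = \Gamma(\tfrac{N+1}{2})/\Gamma(\tfrac{N}{2}) = 1/(\sqrt\pi\,\cN)$) yields
\begin{equation*}
\sqrt{\tfrac{N-1}{2}}\ \leq\ \frac{1}{\sqrt{\pi}\,\cN}\ \leq\ \sqrt{\tfrac{N+1}{2}},
\end{equation*}
which inverts to exactly the claimed two-sided bound $\sqrt{\tfrac{2}{\pi}}(N+1)^{-1/2} \leq \cN \leq \sqrt{\tfrac{2}{\pi}}(N-1)^{-1/2}$. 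The asymptotic statement $\bb E X = \cN a = \Theta(a/\sqrt N)$ is then immediate, since both bounding sequences are $\Theta(1/\sqrt N)$.

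Alternatively, if one prefers a self-contained argument avoiding a citation for Gautschi's inequality, I would prove the two bounds by induction on $N$ using the recurrence $\cN\,\cN^{-1}$... more precisely, from $I_k = \tfrac{k-1}{k} I_{k-2}$ one gets $\cN = \tfrac{N-2}{N-1}\,\tau_{N-2}$ (a two-step recursion in $N$), and then verifies that the product telescopes against $\sqrt{2/(\pi N)}$; the key inequality driving the induction is the elementary $(N-2)/(N-1) \leq \sqrt{(N-1)/N}$ type comparison, i.e. that consecutive Wallis ratios are monotone, which follows from $(1-\tfrac1k)^2 \leq 1 - \tfrac1{k-1}$ rearranging to $k(k-2)\le (k-1)^2$... wait, that is $k^2-2k \le k^2-2k+1$, which is true. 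One also needs the base cases $\tau_2 = 2/\pi$ and $\tau_3 = 1/2$ to lie inside the asserted envelopes, which is a direct check.

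The main obstacle is essentially bookkeeping: making sure the shift parameters in the Gamma-ratio inequality are matched to $N/2$ and $(N+1)/2$ correctly so that the $+1$ versus $-1$ in $(N\pm1)^{-1/2}$ come out on the right side, and — if one goes the inductive route instead — handling the parity of $N$ (even versus odd $N$ give structurally different Wallis products) and pinning down the base cases. Neither is deep; the inequality $x^{1-s} \le \Gamma(x+1)/\Gamma(x+s) \le (x+1)^{1-s}$ itself follows from log-convexity of $\Gamma$ (Hölder/the Bohr–Mollerup property), so the whole proposition is a two-line consequence once that lemma is invoked.
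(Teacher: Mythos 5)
Your proof is correct and takes essentially the same approach as the paper: both rest on a two-sided Wendel/Gautschi-type bound for a Gamma-function ratio at half-integer shift. Your instantiation is in fact marginally cleaner — by bounding $\Gamma(\tfrac{N+1}{2})/\Gamma(\tfrac{N}{2})$ directly via $x^{1-s}\leq\Gamma(x+1)/\Gamma(x+s)\leq(x+1)^{1-s}$ with $x=\tfrac{N-1}{2}$, $s=\tfrac12$, you land exactly on the claimed constants, whereas the paper bounds $\Gamma(\tfrac{N}{2})/\Gamma(\tfrac{N-1}{2})$ and then needs the extra elementary step $(N-\tfrac32)^{1/2}/(N-1)\geq(N+1)^{-1/2}$ to recover the lower bound.
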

\begin{proof}
Since $\cN = \tinv{a}\,\bb E X$, this is a direct consequence of the inequality 
$(\frac{2N-3}{4})^{1/2}$ $\leq
{\Gamma(\frac{N}{2})}/{\Gamma(\frac{N-1}{2})}$ $\leq
(\frac{N-1}{2})^{1/2}$ and of the fact that ${(N-\frac{3}{2})^{\frac{1}{2}}}/{(N-1)} \geq 1/{\sqrt{N+1}}$
for $N \geq 2$.  
\end{proof}

We find useful to introduce right now the following general quantity which takes the value
$\cN$ as a special case:
\begin{equation}
  \label{eq:dN-def}
  \dN(x)\ :=\ \frac{\Gamma(x+\inv{2}) \Gamma(\frac{N}{2})}{\sqrt \pi\,
    \Gamma(\frac{N}{2}+ x)}. 
\end{equation}
We can compute $\dN(0)=1$, $\dN(\tinv{2}) = \cN$ and $\dN(1) =
\inv{N}$. The importance of $\dN$, and the notation simplification
brought by its introduction, will become clear later.

Having established how the expectation of $X$ behaves, can we go
further and characterize its distribution as in the two-dimensional case? A positive answer is given in the following proposition. 

\begin{proposition} 
Given $a=L/\qbin$ and the angles $\theta_k\in [0,\pi/2]$ defined in
Prop.~\ref{prop:distrib-X}.
The distribution of $X \in [n+1]$ is determined by the probabilities 
\begin{align}
p_k = \bb P(X=k)&=\kappa_{k+1}+\kappa_{k-1}-2\kappa_k,
\end{align}
with $\kappa_k= \cN\,a\,(\sin\theta_k)^{N-1} -
k\,\cN\,J_N(\theta_k)$ and $J_N(\alpha) :=
(N-1)\int_0^\alpha (\sin \theta)^{N-2}\,\ud\theta$. \\[2mm]
We denote the (discrete) distribution determined by such probabilities as $\buf(a,N)$. 
\end{proposition}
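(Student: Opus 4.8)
The plan is to follow the proof of Prop.~\ref{prop:distrib-X} almost line by line; the single new ingredient is that the uniform angular law (density $\tfrac{2}{\pi}$ on $[0,\pi/2]$) of the planar case must be replaced by the marginal law of $\theta$ induced by the rotationally invariant measure on $\bb S^{N-1}$. So first I would reduce everything to a one-variable integral over $\theta$. Exactly as in the proof of \eqref{eq:buffon-expect}, integrating out the nuisance angles $\phi_1,\dots,\phi_{N-2}$ and using the $\qbin$-periodicity of $\cl G$, the pair $(u,\theta)$ that matters for $X$ has the law: $2u$ uniform on $[0,\qbin]$, independent of $\theta\in[0,\pi/2]$ with density $\tfrac{2(\sin\theta)^{N-2}}{I_{N-2}}$, where $I_{N-2}=\int_0^\pi(\sin\alpha)^{N-2}\,\ud\alpha$ and $\cN=\tfrac{2}{(N-1)I_{N-2}}$. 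Conditionally on $\theta$, the needle projects onto the grid normal as a segment of length $a\qbin\cos\theta$ with uniform offset, so, writing $\ell=a\cos\theta$ and $\{\ell\}$ for its fractional part, $\bb P(X=\lfloor\ell\rfloor+1\mid\theta)=\{\ell\}$ and $\bb P(X=\lfloor\ell\rfloor\mid\theta)=1-\{\ell\}$ --- the very same conditional law as in two dimensions.

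Next I would carry out the outer integral. Since $\lfloor a\cos\theta\rfloor=k$ precisely on $\theta\in(\theta_{k+1},\theta_k]$, for $1\le k\le n$ one has
\[
p_k=\int_{\theta_{k+1}}^{\theta_k}\tfrac{2(\sin\theta)^{N-2}}{I_{N-2}}\,(k+1-a\cos\theta)\,\ud\theta+\int_{\theta_k}^{\theta_{k-1}}\tfrac{2(\sin\theta)^{N-2}}{I_{N-2}}\,(a\cos\theta-k+1)\,\ud\theta ,
\]
with $p_{n+1}=\int_0^{\theta_n}\tfrac{2(\sin\theta)^{N-2}}{I_{N-2}}(a\cos\theta-n)\,\ud\theta$ and $p_0=\int_{\theta_1}^{\pi/2}\tfrac{2(\sin\theta)^{N-2}}{I_{N-2}}(1-a\cos\theta)\,\ud\theta$. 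The two antiderivatives $\int(\sin\theta)^{N-2}\cos\theta\,\ud\theta=\tfrac{(\sin\theta)^{N-1}}{N-1}$ and, by the definition of $J_N$, $\int_0^\alpha(\sin\theta)^{N-2}\,\ud\theta=\tfrac{J_N(\alpha)}{N-1}$, turn every integral into a combination of the quantities $\cN a(\sin\theta_j)^{N-1}$ and $\cN\,j\,J_N(\theta_j)$, that is, of the building blocks of $\kappa_j=\cN a(\sin\theta_j)^{N-1}-j\,\cN J_N(\theta_j)$. Collecting the terms gives, for $1\le k\le n$,
\[
p_k=\cN a\big[(\sin\theta_{k+1})^{N-1}+(\sin\theta_{k-1})^{N-1}-2(\sin\theta_k)^{N-1}\big]-\cN\big[(k+1)J_N(\theta_{k+1})+(k-1)J_N(\theta_{k-1})-2kJ_N(\theta_k)\big]=\kappa_{k+1}+\kappa_{k-1}-2\kappa_k .
\]

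It then remains to match the extremal cases to the conventions $\cos\theta_k=1$ ($k>n$) and $\cos\theta_k=0$ ($k<0$). For $k=n,n+1$ one uses $\theta_{n+1}=\theta_{n+2}=0$, so $\kappa_{n+1}=\kappa_{n+2}=0$ and the formula collapses to $p_{n+1}=\kappa_n$, $p_n=\kappa_{n-1}-2\kappa_n$; for $k=0$ one uses $\theta_{-1}=\theta_0=\tfrac{\pi}{2}$ together with the identity $J_N(\tfrac{\pi}{2})=(N-1)\int_0^{\pi/2}(\sin\theta)^{N-2}\,\ud\theta=\tfrac{(N-1)I_{N-2}}{2}=\tfrac{1}{\cN}$, which yields $\kappa_{-1}=\cN a+1$ and reproduces $p_0=\kappa_1+\kappa_{-1}-2\kappa_0$. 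Normalization $\sum_{k=0}^{n+1}p_k=1$ then follows either by telescoping $\sum_k(\kappa_{k+1}+\kappa_{k-1}-2\kappa_k)$ or, more cheaply, because each $p_k$ is by construction the integral of a bona fide conditional probability against a probability density; and the formula specialises correctly ($N=2$, with $J_2(\alpha)=\alpha$ and $\cN=\tfrac{2}{\pi}$, gives back Prop.~\ref{prop:distrib-X}, while $a<1$ gives $n=0$, $p_1=P_N$, $p_0=1-P_N$). I do not expect a genuine obstacle here: the argument is essentially bookkeeping, and the only two points that need care are installing the correct angular weight $(\sin\theta)^{N-2}$ (it is what produces the $(\sin\theta_j)^{N-1}$ and $J_N(\theta_j)$ terms in place of the planar $2/\pi$-factors) and handling the extremal angle conventions consistently, the key algebraic fact there being $J_N(\tfrac{\pi}{2})=1/\cN$.
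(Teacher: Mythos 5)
Your proposal is correct and follows essentially the same route as the paper's proof: reduce to the marginal density $\tfrac{2(\sin\theta)^{N-2}}{I_{N-2}}$ of $\theta$ on $[0,\pi/2]$, use the same piecewise-linear conditional probabilities in $a\cos\theta$ over the intervals $[\theta_{k+1},\theta_k]$, and integrate using $\int(\sin\theta)^{N-2}\cos\theta\,\ud\theta=(\sin\theta)^{N-1}/(N-1)$ and the definition of $J_N$. Your explicit handling of the boundary conventions (in particular $J_N(\tfrac{\pi}{2})=1/\cN$, giving $\kappa_{-1}=\cN a+1$ and $\kappa_{n+1}=\kappa_{n+2}=0$) is exactly the bookkeeping the paper leaves implicit in the phrase ``expressing these results in terms of $\kappa_k$,'' and it checks out.
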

\begin{proof}
The proof consists in considering the hyperspherical coordinates
defined in the demonstration of Prop.~\ref{prop:distrib-X}. For
$k=0$, we must only estimate 
$p_0=1-P_N$ for any value of $a$. For $a<1$, we know that $P_N=\cN a$, while for
$a>1$,  
\begin{align*}
P_N&\textstyle= \tfrac{4}{I_{N-2}\qbin}\,\int_0^{\pi/2} (\sin\theta)^{N-2}\,\ud
\theta\ 
\int_0^{\frac{\delta}{2}} \bb I\big(2u\leq L \cos\theta\big)\
\ud u\\
&\textstyle=\tfrac{4}{I_{N-2}\qbin}\,\big(\tfrac{\qbin}{2} \int_0^{\theta_1} (\sin\theta)^{N-2}\,\ud
\theta\\ 
&\textstyle\qquad\qquad\qquad+ \tfrac{L}{2}\,\int_{\theta_1}^{\pi/2} (\sin\theta)^{N-2}\,\,\cos\theta\,\ud
\theta\big)\\
&\textstyle=\cN\,J_N(\theta_1) 
+ \cN\,a\, (1 - (\sin\theta_1)^{N-1}).
\end{align*}
For $k=n+1$, considering $\theta$ fixed, the conditional probability of having $n+1$ intersections reads
$a|\cos\theta| - n$ if $\theta\leq \theta_n$ and 0 otherwise. Therefore, 
\begin{equation}
\textstyle p_{n+1} = \tfrac{2}{I_{N-2}}\,\int_0^{\theta_n} (a\cos\theta - n)\,(\sin\theta)^{N-2}\,\ud
\theta\ =\ \cN\,a\,(\sin\theta_n)^{N-1} - \cN\,n\,J_N(\theta_n).  
\end{equation}
For $1\leq k\leq n$, there are $k$ intersections if $\theta_{k+1} \leq
\theta \leq \theta_{k-1}$. The conditional probability reads $(k+1 -
a\cos\theta)$ if $\theta_{k+1} \leq
\theta \leq \theta_{k}$, and $(a\cos\theta - k+1)$ if $\theta_{k} \leq
\theta \leq \theta_{k-1}$. Therefore, 
\begin{align*}
p_{k}&\textstyle= \tfrac{2}{I_{N-2}}\,\int_{\theta_{k+1}}^{\theta_k} (k+1 - a\cos\theta)\,(\sin\theta)^{N-2}\ud
\theta\\
&\textstyle\qquad\qquad  + \tfrac{2}{I_{N-2}}\,\int_{\theta_{k}}^{\theta_{k-1}}
(a\cos\theta - k+1)\,(\sin\theta)^{N-2}\ud
\theta\\
&\textstyle= \cN\, a\,\big((\sin\theta_{k+1})^{N-1} + (\sin\theta_{k-1})^{N-1} -
2(\sin\theta_{k})^{N-1}\big)\\
&\quad \textstyle- \cN\big (\ (k+1) J_N(\theta_{k+1})\ + \\
&\textstyle\qquad\qquad\qquad (k-1) J_N(\theta_{k-1})
-2k J_N(\theta_{k})\ \big).
\end{align*}
As for Prop.~\ref{prop:distrib-X}, the rest of the proof consists in expressing these results in terms of~$\kappa_k$.
\end{proof}

Notice that, from a simple change of variable, the value $\kappa_k$ can be conveniently rewritten as
\begin{align}
\kappa_k&=\textstyle\cN\,a\,(\sin\theta_k)^{N-1} -
k\,\cN\, (N-1)\int_0^{\theta_k} (\sin \theta)^{N-2}\,\ud\theta\nonumber\\
&=\textstyle\cN\,(N-1)\,\int_0^{\theta_k} (\sin \theta)^{N-2}\,(a\cos\theta -
k)\,\ud\theta\nonumber\\
\label{eq:kappa-alt-form}
&=\textstyle\cN a\, (N-1) \,\int_{0}^{1} (1 - u^2)^{\frac{N-3}{2}}\,(u -
\tfrac{k}{a})_+\,\ud u.
\end{align}

The following proposition bounds the moments of a random variable $X \sim \buf(a, N)$. These will be useful later for developing our
$\ell_2/\ell_1$ quantized embedding in Sec.~\ref{sec:quantized-embeddings}. 
\begin{proposition} 
\label{prop:buffon-moment-bounds}
Let $X \sim \buf(a,N)$. If $a<1$, for any $q\in\Nbb_0$, $\bb E X^q = \cN
  a$.\\
If $a\geq 1$, then $\bb E X^q \geq \cN a$ for any
$q\in\Nbb_0$. Moreover, for $a\geq 0$,
\begin{equation}
  \label{eq:buffon-2nd-moment-bound}
\max(\cN a, \tinv{N}a^2\,)\ \leq\ \bb E X^2\ \leq \cN a + \tinv{N}(a^2-1)_+, 
\end{equation}
and
\begin{equation}
  \label{eq:buffon-3rd-moment-bound}
\big|\bb E X^3 - \big(\cN a + \dN(\tfrac{3}{2})\,a^{3}\big)\big| \leq \tfrac{3}{N}\,a^{2}.
\end{equation}
For $q \geq 4$ and $a\geq 1$, the bounds are a bit more technical and read
\begin{align}
&\textstyle \big|\,\bb E X^q\ -\ \big(\cN a +
\dN(\tfrac{q}{2}) a^{q}
\big)\big|\nonumber\\[1mm]
&\label{eq:buffon-q-moment-bound}
\textstyle\leq\ q\,\dN(\frac{q-1}{2})\,a^{q-1} + \tfrac{1}{24}
{q \choose 2} \dN(\tfrac{q-2}{2}) (2a)^{q-2}\nonumber\\
&\textstyle\qquad\qquad + \tfrac{1}{12}
{q \choose 3}\,\dN(\frac{q-3}{2})\,(2a)^{q-3}.\qquad  
\end{align}
For any $q\geq 2$ and any $a\geq 0$, we have the upper bound
\begin{equation}
  \label{eq:weak-buffon-upper-bound}
  \textstyle \bb E X^q\ \leq\ \cN a + 2^{q-2}  \dN(\frac{q}{2})\,a^{q}\ +\ 2^{q-2}
q\,\dN(\frac{q-1}{2})\,a^{q-1}.
\end{equation}
\end{proposition}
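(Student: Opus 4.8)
The plan is to first reduce $\bb E X^q$ to a single one-dimensional integral, and then to read off every bound from elementary properties of the integrand.

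\emph{Step 1 (an integral formula for the moments).} I would condition on the orientation of $\needle$. Let $v=|\!\cos\theta|\in[0,1]$, where $\theta$ is the angle between $\needle$ and the normal to the grid $\cl G$; the orthogonal projection of $\needle$ onto that normal direction is a segment of length $av$ (in units of $\qbin$), so, conditionally on $v$ and on the (uniform) offset, $X$ has the same law as $\lfloor U+av\rfloor$ with $U\sim\cl U([0,1))$. Hence $\bb E[X^q\mid v]=\int_0^1\lfloor u+av\rfloor^q\,\ud u=h_q(av)$, where
$$h_q(x)\ :=\ (1-\{x\})\,\lfloor x\rfloor^{\,q}\ +\ \{x\}\,(\lfloor x\rfloor+1)^{\,q}$$
is the piecewise-affine interpolant of $x\mapsto x^q$ through the integers, $\{x\}$ denoting the fractional part. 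Averaging over $v$, whose density on $[0,1]$ is $\cN(N-1)(1-v^2)^{(N-3)/2}$ (obtained from the $(\sin\theta)^{N-2}$ weight on $\bb S^{N-1}$, restricted to $\theta\in[0,\pi/2]$ by periodicity, after the change of variable $v=\cos\theta$, the constant being fixed by $\cN(N-1)\int_0^1(1-v^2)^{(N-3)/2}\ud v=1$), gives
$$\bb E X^q\ =\ \cN\,(N-1)\int_0^1(1-v^2)^{\frac{N-3}{2}}\,h_q(av)\,\ud v.$$
The same formula follows algebraically from $p_k=\kappa_{k+1}+\kappa_{k-1}-2\kappa_k$: insert the integral form \eqref{eq:kappa-alt-form} of $\kappa_k$ into $\sum_k k^qp_k$, interchange sum and integral, and evaluate the inner sum by summing by parts twice, which telescopes because a second difference is paired with a function affine in $k$ on its support.

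\emph{Step 2 (the Beta identity and the lower bounds).} The reason $\dN$ appears is the Beta-integral identity $\cN(N-1)\int_0^1(1-v^2)^{(N-3)/2}v^{2x}\,\ud v=\dN(x)$; in particular the ``polynomial part'' of the integrand contributes $\cN(N-1)\int_0^1(1-v^2)^{(N-3)/2}(av)^q\,\ud v=\dN(q/2)\,a^q$, and $\dN(0)=1$, $\dN(1/2)=\cN$ recover the two normalisations just used. For the lower bounds: $t\mapsto t^q$ is convex for $q\ge1$, so $h_q(x)\ge x^q$ pointwise and $\bb E X^q\ge\dN(q/2)a^q$; and $k^q\ge k$ for $k\in\Nbb$, so $\bb E X^q\ge\bb E X=\cN a$. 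This yields $\bb E X^q\ge\max(\cN a,\dN(q/2)a^q)$ (hence the left inequality of \eqref{eq:buffon-2nd-moment-bound}, using $\dN(1)=1/N$). When $a<1$ we have $av<1$ on $[0,1]$, so $h_q(av)=av$ identically and $\bb E X^q=\cN a$ exactly, which is the first claim.

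\emph{Step 3 (the upper bounds).} Expanding the (finite) binomials in $h_q$ about $x$ gives $h_q(x)=x^q+\sum_{i=2}^q\binom{q}{i}x^{q-i}g_i(\{x\})$ with $g_i(t):=t(1-t)^i+(1-t)(-t)^i$ (the $i=1$ term vanishes), where $g_2(t)=t(1-t)\in[0,\tfrac14]$, $g_3(t)=t(1-t)(1-2t)$ has zero mean, etc. Substituting and using the Beta identity term by term produces the stated estimates. For $q=2$ the computation is essentially exact: $\bb E X^2=\tfrac{a^2}{N}+\cN(N-1)\int_0^1(1-v^2)^{(N-3)/2}\{av\}(1-\{av\})\,\ud v$, and the remaining integral is at most $\cN a-\cN(N-1)\int_0^1(1-v^2)^{(N-3)/2}\min\{(av)^2,\,1\}\,\ud v\le\cN a-\tfrac1N$, using $\{av\}(1-\{av\})\le\{av\}\le av$ for $v\le1/a$ and $\le av-1$ for $v>1/a$, plus $\min\{(av)^2,\,1\}\ge v^2$ when $a\ge1$; this is \eqref{eq:buffon-2nd-moment-bound}. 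The case $q=3$ runs the same way one order higher, the $g_3$-term being absorbed into the $\tfrac3Na^2$ error since $g_3(\{av\})$ oscillates against the concentrated weight $(1-v^2)^{(N-3)/2}$. For $q\ge4$ I would bound each $|g_i|$ on $[0,1]$ by its elementary extremal value and replace $\lfloor av\rfloor+1$ by $2av$ (valid for $av\ge1$) inside the surviving powers $x^{q-i}$ — this is the source of the $(2a)^{q-i}$ factors in \eqref{eq:buffon-q-moment-bound} — then collect terms; the crudest bound \eqref{eq:weak-buffon-upper-bound} follows from $h_q(x)\le x^q+\{x\}\big((x+1)^q-x^q\big)$ together with $(x+1)^q\le2^{q-1}(x^q+1)$.

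\emph{Main difficulty.} The representation and the lower bounds are routine; all the work is in the upper bounds, where the constants $\tfrac14,\tfrac1{12},\tfrac1{24}$, the prefactors $2^{q-2}$ and the powers $(2a)^{q-i}$ must be propagated carefully. The genuinely delicate cases are $q=2,3$: a worst-case bound $|g_i|\le\text{const}$ only gives a constant additive error, whereas \eqref{eq:buffon-2nd-moment-bound}–\eqref{eq:buffon-3rd-moment-bound} ask for an $O(a^2/N)$ error, so one must exploit the exact cancellation (already visible in the $a<1$ identity) or the averaging of $g_i(\{av\})$ against the concentrated weight $(1-v^2)^{(N-3)/2}$. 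Minor care is also needed for the edge case $a\in\Nbb$ (so $\kappa_n=0$) and, in the summation-by-parts route, for the precise index ranges.
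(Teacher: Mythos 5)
Your Step~1 representation $\bb E X^q = \cN(N-1)\int_0^1(1-v^2)^{\frac{N-3}{2}}h_q(av)\,\ud v$ is correct and is algebraically the same object as the paper's route (Lemma~\ref{lemma:sum-cj-pj} combined with \eqref{eq:kappa-alt-form}): writing $h_q(x)=x+\sum_{k\geq1}\Delta^2\big((k-1)^q\big)(x-k)_+$ recovers $\bb E X^q=\kappa_0+\sum_k\Delta^2((k-1)^q)\,\kappa_k$. The $a<1$ identity, both lower bounds, and your $q=2$ upper bound are all correct; the last one, via $\{av\}(1-\{av\})\leq av-\min\{(av)^2,1\}$ and $\min\{(av)^2,1\}\geq v^2$ for $a\geq1$, is a clean self-contained alternative to the paper's derivative-in-$a$ bound on $\sum_k\kappa_k$.

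The gap is in $q=3$, $q\geq4$ and \eqref{eq:weak-buffon-upper-bound}. Every error term in \eqref{eq:buffon-3rd-moment-bound}--\eqref{eq:weak-buffon-upper-bound} carries a factor $\dN(\cdot)$ or $\cN$ that decays with $N$ (e.g.\ $\tfrac3Na^2=3\dN(1)a^2$), because it is the integral against the weight $(1-v^2)^{\frac{N-3}{2}}$ --- which concentrates at $v=0$ --- of a polynomial in $x=av$ \emph{with no constant term}. Hence no $N$-independent worst-case bound on the $g_i(\{av\})$ terms can close the argument: e.g.\ $|g_3|\leq\tfrac{1}{6\sqrt3}$ or $|g_q|\leq\tfrac14$ integrates to a constant that exceeds the stated right-hand sides once $N$ is large and $a=O(1)$. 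You flag this for $q=3$, but neither proposed remedy is the operative one: no oscillation argument is needed, since the \emph{pointwise} bound $|h_3(x)-x-x^3|\leq 3x^2$ holds (it equals $x^3\leq 3x^2$ for $x<1$, and is at most $\tfrac34x+\tfrac{1}{6\sqrt3}+x\leq 3x^2$ for $x\geq1$) and integrates directly to $\tfrac3Na^2$; and for $q\geq4$, "replacing $\lfloor av\rfloor+1$ by $2av$" cannot be the source of the $(2a)^{q-i}$ factors, because your expansion is already written in powers of $x=av$ and its $i=1$ term vanishes, so it produces no $a^{q-1}$ term at all and no mechanism for the stated coefficients $\tfrac1{24}\binom q2$, $\tfrac1{12}\binom q3$. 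In the paper these come from two specific estimates absent from your plan: $|\Delta^2((k-1)^q)-q(q-1)k^{q-2}|\leq 2\binom q4(2k)^{q-4}$ (Lemma~\ref{lemma:D2-on-power-sequence}) and the sum-vs-integral comparison $|(p+1)(p+2)\sum_k k^p(u-k)_+-u^{p+2}|\leq(p+2)u^{p+1}$ (Lemma~\ref{lemma:kappa-moments-bounds}), the latter being where $q\,\dN(\tfrac{q-1}{2})a^{q-1}$ originates. Note also that the leftover $\cN a$ in your decomposition cannot simply be absorbed into that $a^{q-1}$ term, since $\cN>q\,\dN(\tfrac{q-1}{2})$ for large $N$; it must cancel exactly against the region $av<1$ where $h_q(av)=av$. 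So the plan needs to be reworked around pointwise estimates of $h_q(x)-x-x^q$ that vanish at $x=0$; as written, Steps~3 for $q\geq3$ do not yield the stated inequalities.
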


This last proposition leads to a nice asymptotic relation.
\begin{corollary}
\label{cor:asympt-moment-bound}
For a Buffon random variable $X \sim \buf(a,N)$, we have
asymptotically in $a$, 
$$
\textstyle |\bb E X^q\ -\ \dN(\tfrac{q}{2}) a^q|  =
O(a^{q-1}). 
$$  
\end{corollary}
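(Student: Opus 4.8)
The plan is to read off the asymptotic statement directly from the moment bounds of Proposition~\ref{prop:buffon-moment-bounds}, treating $N$ as fixed and letting $a\to\infty$. Since all the assertions in Corollary~\ref{cor:asympt-moment-bound} concern the regime of large $a$, I may assume $a\geq 1$ throughout, so that the $(\cdot)_+$ truncations in \eqref{eq:buffon-2nd-moment-bound}, \eqref{eq:buffon-q-moment-bound} and \eqref{eq:weak-buffon-upper-bound} are inactive and the stated inequalities hold in their cleanest form. The structure is then a short case analysis on $q$, each case reducing to observing that every error term carries a power of $a$ strictly below $q$, with a constant depending only on $N$ and $q$ (hence harmless in the $O(\cdot)$-in-$a$ statement).

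The key steps, in order, are as follows. First, for $q=2$: \eqref{eq:buffon-2nd-moment-bound} gives $\cN a \leq \bb E X^2 \leq \cN a + \tinv{N}(a^2-1)$, and since $\tinv{N} = \dN(1)$ this reads $|\bb E X^2 - \dN(1)a^2| \leq \cN a + \dN(1) = O(a)$, which is the claim with $q=2$. Second, for $q=3$: \eqref{eq:buffon-3rd-moment-bound} states exactly $|\bb E X^3 - (\cN a + \dN(\tfrac32)a^3)| \leq \tfrac{3}{N}a^2$, and absorbing the lower-order term $\cN a$ into the right-hand side gives $|\bb E X^3 - \dN(\tfrac32)a^3| \leq \tfrac{3}{N}a^2 + \cN a = O(a^2)$. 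Third, for $q\geq 4$: \eqref{eq:buffon-q-moment-bound} bounds $|\bb E X^q - (\cN a + \dN(\tfrac{q}{2})a^q)|$ by a sum of three terms proportional to $a^{q-1}$, $a^{q-2}$, $a^{q-3}$ with coefficients of the form $q\,\dN(\tfrac{q-1}{2})$, $\tfrac{1}{24}\binom{q}{2}\dN(\tfrac{q-2}{2})2^{q-2}$, $\tfrac{1}{12}\binom{q}{3}\dN(\tfrac{q-3}{2})2^{q-3}$; each of these is a fixed finite constant once $q$ and $N$ are fixed, and the dominant power among $\{q-1,q-2,q-3\}$ is $q-1$, so together with the leftover $\cN a$ the total is $O(a^{q-1})$. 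Combining the three cases yields $|\bb E X^q - \dN(\tfrac{q}{2})a^q| = O(a^{q-1})$ for every $q\geq 2$, as claimed (the case $q=1$ being trivial since $\bb E X = \cN a = \dN(\tfrac12)a$ exactly, with zero error).

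I do not expect a genuine obstacle here; the corollary is essentially a repackaging of Proposition~\ref{prop:buffon-moment-bounds}. The only point requiring a word of care is the meaning of the $O$-notation: the implied constant depends on $N$ and $q$ but not on $a$, which is exactly what "asymptotically in $a$" signals, and this is consistent with the way the bounds in Proposition~\ref{prop:buffon-moment-bounds} are phrased. A secondary bookkeeping point is to note that $\dN(\tfrac{q}{2})$ is the coefficient of the leading term in every case — for $q=2$ this is $\dN(1)=\tinv{N}$, for $q=3$ it appears explicitly, and for $q\geq 4$ it is written as such in \eqref{eq:buffon-q-moment-bound} — so the statement is uniform across $q$ with no special-casing of the leading coefficient.
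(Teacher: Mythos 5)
Your proposal is correct and matches the paper's (implicit) argument: the corollary is stated as a direct consequence of Proposition~\ref{prop:buffon-moment-bounds}, obtained exactly as you do by absorbing the $\cN a$ term and the lower-order error terms into $O(a^{q-1})$ for $a\geq 1$, with the cases $q\in\{1,2,3\}$ handled by \eqref{eq:buffon-expect}, \eqref{eq:buffon-2nd-moment-bound} and \eqref{eq:buffon-3rd-moment-bound} and $q\geq 4$ by \eqref{eq:buffon-q-moment-bound}. Your remark on the meaning of the $O(\cdot)$ (implied constant depending on $N$ and $q$ but not $a$) is the right reading of ``asymptotically in $a$.''
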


Before delving in the proof of Prop.~\ref{prop:buffon-moment-bounds},
we must introduce three useful lemmata.
\begin{lemma}
\label{lemma:sum-cj-pj}
For any sequence $\{c_k\}$
\begin{equation}
\label{eq:sum-cj-pj}
\sum_{k=0}^{n+1} c_k p_k\ =\ c_{0}(\kappa_{-1}-2\kappa_{0})+c_1\kappa_0 + \sum_{k=1}^{n} \Delta^2(c_{k-1}) \kappa_{k},  
\end{equation}
with the difference operator $\Delta$ such that $\Delta(c_k) = c_{k+1} - c_k$. 
\end{lemma}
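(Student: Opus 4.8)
The plan is to prove this as a purely formal identity, via a double summation by parts (Abel resummation). The only structural input required is the boundary behaviour of the weights: by the conventions of Prop.~\ref{prop:distrib-X} and its $N$-dimensional counterpart, one has $\theta_k=0$ whenever $k>n=\lfloor a\rfloor$, hence $(\sin\theta_k)^{N-1}=0$ and $J_N(\theta_k)=J_N(0)=0$, so that $\kappa_k=0$ for every $k\geq n+1$ (in particular $\kappa_{n+1}=\kappa_{n+2}=0$). Everything else is manipulation valid for an arbitrary sequence $\{c_k\}$.

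First I would substitute $p_k=\kappa_{k+1}+\kappa_{k-1}-2\kappa_k$ into the left-hand side and split $\sum_{k=0}^{n+1}c_k p_k$ into the three sums $\sum_{k=0}^{n+1}c_k\kappa_{k+1}$, $\sum_{k=0}^{n+1}c_k\kappa_{k-1}$ and $-2\sum_{k=0}^{n+1}c_k\kappa_k$. Next I would re-index the first sum by $j=k+1$ and the second by $j=k-1$, so that all three become sums of terms of the form $(\text{coefficient})\cdot\kappa_j$; this produces $\sum_{j=1}^{n+2}c_{j-1}\kappa_j+\sum_{j=-1}^{n}c_{j+1}\kappa_j-2\sum_{j=0}^{n+1}c_j\kappa_j$.

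Then I would use $\kappa_{n+1}=\kappa_{n+2}=0$ to shrink the index ranges of these sums to $1\leq j\leq n$, $-1\leq j\leq n$ and $0\leq j\leq n$ respectively, and finally collect the coefficient of each $\kappa_j$. For $1\leq j\leq n$ this coefficient is $c_{j-1}-2c_j+c_{j+1}=\Delta^2(c_{j-1})$; the terms at $j=0$ give $(c_1-2c_0)\kappa_0$, which combined with the single unmatched term $c_0\kappa_{-1}$ coming from the lower endpoint of the shifted second sum yields $c_0(\kappa_{-1}-2\kappa_0)+c_1\kappa_0$. Assembling these contributions reproduces the right-hand side of \eqref{eq:sum-cj-pj} verbatim.

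There is no genuine obstacle here, since the claim is an identity and the argument is bookkeeping. The only point that needs a little care is the treatment of the endpoints of the three re-indexed sums: one must verify that the "extra" terms created at $j=n+1,n+2$ by the index shifts are annihilated by $\kappa_k=0$ for $k\geq n+1$, and that the lone term at $j=-1$ is exactly the $c_0(\kappa_{-1}-2\kappa_0)+c_1\kappa_0$ piece that appears separately on the right-hand side.
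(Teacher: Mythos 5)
Your proof is correct and follows essentially the same route as the paper's: the paper invokes the discrete summation-by-parts rule twice on $\sum_k c_k\,\Delta^2(\kappa_{k-1})$, which is exactly the double re-indexing and coefficient-collection you carry out explicitly. Your version has the minor merit of spelling out the boundary fact $\kappa_k=0$ for $k\geq n+1$ (from $\theta_k=0$ for $k>n$), which the paper's terse proof uses implicitly.
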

\begin{proof}
Following \cite{diaconis1976buffon}, this is a simple consequence of the ``summing by parts'' rule for any sequences $a_k$ and $b_k$, \ie
$\sum_{k=0}^{n+1} a_j \Delta(b_j) = a_{n+2}b_{n+2} - a_0b_0 -
\sum_{k=0}^{n+1} \Delta(a_k) b_{k+1}$, and the fact that $
\sum_{k=0}^{n+1} c_k p_k
= \sum_{k=0}^{n+1} c_k \Delta^2(\kappa_{k-1})$.
\end{proof}

\begin{lemma}
  \label{lemma:D2-on-power-sequence}
  We can compute that $\Delta^2\big((k-1)^2\big) = 2$ and $\Delta^2\big((k-1)^3\big) = 6 k$, while for
  higher power $q\geq 4$ and $k\geq 1$, 
  \begin{equation}
    \label{eq:power-bound}
    \textstyle |\Delta^2\big((k-1)^q\big) - q(q-1) k^{q-2}|\ \leq\ 2 {q \choose 4} (2k)^{q-4}.
  \end{equation}
  A weaker bound reads
  \begin{equation}
    \label{eq:weak-power-bound}
    \textstyle \Delta^2\big((k-1)^q\big) \leq 2^{q-1} {q \choose 2} k^{q-2}.
  \end{equation}
\end{lemma}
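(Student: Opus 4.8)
The plan is to work directly with the explicit form of the operator. The quantity in question is
\[
\Delta^2\big((k-1)^q\big)\ =\ (k+1)^q - 2k^q + (k-1)^q,
\]
consistently with the convention $p_k=\Delta^2(\kappa_{k-1})=\kappa_{k+1}-2\kappa_k+\kappa_{k-1}$ used above (i.e.\ $\Delta^2$ is applied to the sequence $c_j=j^q$ at the shifted index). The first move is to expand $(k+1)^q$ and $(k-1)^q$ by the binomial theorem; the odd-power terms cancel in the sum, leaving
\[
\Delta^2\big((k-1)^q\big)\ =\ 2\sum_{i=1}^{\lfloor q/2\rfloor}\binom{q}{2i}\,k^{q-2i},
\]
a sum of \emph{nonnegative} terms for $k\geq 1$. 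Reading off the first terms gives the two stated identities at once: $\Delta^2\big((k-1)^2\big)=2\binom{2}{2}=2$ and $\Delta^2\big((k-1)^3\big)=2\binom{3}{2}k=6k$.

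For \eqref{eq:power-bound} I would split off the $i=1$ term, which is exactly $2\binom{q}{2}k^{q-2}=q(q-1)\,k^{q-2}$, so that $\Delta^2\big((k-1)^q\big)-q(q-1)k^{q-2}=2\sum_{i=2}^{\lfloor q/2\rfloor}\binom{q}{2i}k^{q-2i}\geq 0$, and it remains to bound this nonnegative remainder by $2\binom{q}{4}(2k)^{q-4}$. For $q\in\{4,5\}$ only the $i=2$ term survives and the remainder equals $2\binom{q}{4}k^{q-4}\leq 2\binom{q}{4}(2k)^{q-4}$ since $2^{q-4}\geq 1$. For $q\geq 6$ I would use $k\geq 1$ to replace each $k^{q-2i}$ ($i\geq 2$) by $k^{q-4}$ and then bound $\sum_{i\geq 2}\binom{q}{2i}\leq\sum_{i\geq 0}\binom{q}{2i}=2^{q-1}$, giving a remainder at most $2\cdot 2^{q-1}k^{q-4}$; this finishes the case because $q\mapsto\binom{q}{4}$ is increasing, so $\binom{q}{4}\geq\binom{6}{4}=15\geq 8$ and hence $2^{q-1}=8\cdot 2^{q-4}\leq\binom{q}{4}2^{q-4}$.

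For the weaker estimate \eqref{eq:weak-power-bound} I would again start from the binomial expansion and use $k\geq 1$ to bound each $k^{q-2i}$ ($i\geq 1$) by $k^{q-2}$, obtaining $\Delta^2\big((k-1)^q\big)\leq 2k^{q-2}\sum_{i\geq 1}\binom{q}{2i}=2(2^{q-1}-1)\,k^{q-2}$; then $2(2^{q-1}-1)\leq 2^{q-1}\binom{q}{2}$ holds for $q\geq 3$ because there $\binom{q}{2}\geq 2$, while the boundary value $q=2$ (and the trivial $q\in\{0,1\}$) is checked by hand.

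Nothing in this is deep: the whole proof is the single observation that the odd part of the binomial expansion cancels, followed by monotonicity of $k\mapsto k^{q-2i}$ on $[1,\infty)$ and of $q\mapsto\binom{q}{4}$. The only point requiring genuine care — and the closest thing to an obstacle — will be getting the constants exactly right in the small cases $q=4,5,6$ (and $q=2$ for the weak bound), where the lossy ``collapse everything into $2^{q-1}$'' estimate is not tight enough and one must either keep the leading binomial term separate or verify the inequality directly.
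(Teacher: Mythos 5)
Your proof is correct and follows essentially the same route as the paper: both expand $(k+1)^q+(k-1)^q-2k^q$ about $k$, identify the leading term $2\binom{q}{2}k^{q-2}=q(q-1)k^{q-2}$, and then bound the order-$(q-4)$-and-lower contributions. The only difference is in how that tail is controlled---the paper invokes the third-order Taylor (Lagrange) remainder of $(k\pm 1)^q$, bounded by $\binom{q}{4}(k+1)^{q-4}\leq \binom{q}{4}(2k)^{q-4}$ for each term, whereas you keep the exact binomial tail and bound it by $2^{q-1}k^{q-4}$ together with a direct check for $q\in\{4,5\}$ (and $q=2$ for the weak bound); both yield the stated estimates.
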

\begin{proof}
The first two results come from the identities
$\Delta^2\big((k-1)^2\big) = (k+1)^2 + (k-1)^2 -
2k^2 = 2$ and $\Delta^2\big((k-1)^3\big) = (k+1)^3 + (k-1)^3 - 2k^3 = 6k$. The last one is obtained by
estimating $\Delta^2((k-1)^q)$ from
a third order Taylor development of both $(k+1)^q$ and
$(k-1)^q$ around $k$, their fourth order errors being both bounded by ${q \choose 4}
(k+1)^4 \leq {q \choose 4}
(2k)^4$.   The weaker bound is obtained similarly from a first order Taylor
development with a bounding of the second order error.
\end{proof}

\begin{lemma}
\label{lemma:kappa-moments-bounds}
The sum of $\kappa_k$ is bounded as 
\begin{equation}
  \label{eq:kappa-sum-bounds}
  \textstyle\tfrac{1}{N}\,a^2 - \cN a\ \leq\ 2\sum_{k=1}^{n} \kappa_k\
  \leq\ \tfrac{1}{N}\,(a^2 - 1)_+,
\end{equation}
while for other power $p\in \Nbb_0$,
\begin{equation}
\label{eq:lemma:kappa-moments-bounds-dN}
\textstyle\big|(p+1)(p+2)\sum_{k=1}^n k^{p}\kappa_k -
\dN(\frac{p}{2}+1)\,a^{p+2} \big|\ \leq (p+2)\ \dN(\frac{p+1}{2})\,a^{p+1}.
\end{equation}
\end{lemma}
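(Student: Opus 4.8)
The plan is to evaluate every sum by inserting the integral representation \eqref{eq:kappa-alt-form} of $\kappa_k$ and swapping the (finite) sum with the integral. Set $\omega(u):=\cN(N-1)(1-u^2)^{\frac{N-3}{2}}$, so that $\kappa_k=a\int_0^1\omega(u)\,(u-k/a)_+\,\ud u$. Since $u\le 1$ forces $k/a>u$ whenever $k>n=\lfloor a\rfloor$, for every integer $p\ge 1$ this gives
$$
\textstyle\sum_{k=1}^n k^p\kappa_k\;=\;\int_0^1\omega(u)\,\Big(\sum_{k=1}^{\lfloor au\rfloor}k^p(au-k)\Big)\,\ud u .
$$
The whole argument then hinges on one scalar identity used repeatedly: by the substitution $v=u^2$, the Beta integral, and $(N-1)\Gamma(\tfrac{N-1}{2})=2\Gamma(\tfrac{N+1}{2})$, one has $\int_0^1\omega(u)\,u^{2x}\,\ud u=\dN(x)$ for every $x\ge 0$; for $x=0,\tfrac12,1$ this recovers $\dN(0)=1$, $\dN(\tfrac12)=\cN$, $\dN(1)=\tfrac1N$ and is consistent with \eqref{eq:dN-def}.

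For \eqref{eq:kappa-sum-bounds} I use the closed form $\sum_{k=1}^{\lfloor b\rfloor}(b-k)=\tfrac12\big(b^2-\{b\}^2-\lfloor b\rfloor\big)$, where $\{b\}:=b-\lfloor b\rfloor$. Plugging $b=au$, integrating against $\omega$, and using the scalar identity at $x=1$ yields
$$
\textstyle 2\sum_{k=1}^n\kappa_k\;=\;\tfrac{a^2}{N}\;-\;\int_0^1\omega(u)\,\big(\{au\}^2+\lfloor au\rfloor\big)\,\ud u .
$$
Both inequalities now follow from the elementary sandwich $\min(b^2,1)\le\{b\}^2+\lfloor b\rfloor\le b$, valid for all $b\ge 0$: the right half gives $2\sum\kappa_k\ge\tfrac{a^2}{N}-a\int_0^1\omega(u)\,u\,\ud u=\tfrac{a^2}{N}-\cN a$, and the left half, combined with $\min((au)^2,1)\ge\min(a^2,1)\,u^2$, gives $\int_0^1\omega(u)(\{au\}^2+\lfloor au\rfloor)\,\ud u\ge\tfrac1N\min(a^2,1)$, \ie $2\sum\kappa_k\le\tfrac1N(a^2-1)_+$.

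For \eqref{eq:lemma:kappa-moments-bounds-dN} the same route reduces everything to the pointwise estimate
$$
\textstyle\Big|\,\sum_{k=1}^{\lfloor b\rfloor}k^p(b-k)\;-\;\tfrac{b^{p+2}}{(p+1)(p+2)}\,\Big|\;\le\;\tfrac{b^{p+1}}{p+1}\qquad(b\ge 0),
$$
because $\tfrac{b^{p+2}}{(p+1)(p+2)}=\int_0^b t^p(b-t)\,\ud t$; substituting $b=au$, integrating against $\omega$, and invoking the scalar identity at $x=\tfrac{p+2}{2}$ for the main term and at $x=\tfrac{p+1}{2}$ for the remainder produces precisely $\dN(\tfrac p2+1)\,a^{p+2}$ and $(p+2)\dN(\tfrac{p+1}{2})\,a^{p+1}$ after multiplying by $(p+1)(p+2)$. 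To prove the pointwise estimate I view $\sum_{k=1}^{\lfloor b\rfloor}k^p(b-k)$ as the right-endpoint Riemann sum of $h_b(t):=t^p(b-t)$ over the unit partition of $[0,\lfloor b\rfloor]$; the map $h_b$ vanishes at $0$ and at $b$ and is unimodal, with maximum $h_b(t^*)=\tfrac{p^p}{(p+1)^{p+1}}\,b^{p+1}$ at $t^*=\tfrac{p}{p+1}b$. Comparing the sum to $\int_0^b h_b$ interval by interval — exploiting monotonicity of $h_b$ on $[0,t^*]$ and on $[t^*,b]$, bounding the (at most two) integers adjacent to $t^*$ by $h_b(t^*)$, and absorbing the fractional tail $\int_{\lfloor b\rfloor}^b h_b$ into $h_b(\lfloor b\rfloor)$ when $\lfloor b\rfloor\ge t^*$ (the case $b<p+1$ being checked directly) — gives a discrepancy of at most $h_b(t^*)$, and one concludes with $\tfrac{p^p}{(p+1)^{p+1}}=\tfrac1{p+1}\big(1-\tfrac1{p+1}\big)^p\le\tfrac1{p+1}$.

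The main obstacle is this last pointwise estimate. A careless integral/Riemann-sum comparison loses a factor and yields only a bound like $3\,h_b(t^*)$, which is too large; the point is to assign the two mode-adjacent integers and the right endpoint so that no unit subinterval of $\int_0^b h_b$ is counted twice, which is exactly what keeps the discrepancy at $h_b(t^*)\le b^{p+1}/(p+1)$. Everything else — the Beta-function computation and the sandwich for $\{b\}^2+\lfloor b\rfloor$ — is routine.
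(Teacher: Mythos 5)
Your proof is correct, and its backbone is the same as the paper's: insert the integral representation \eqref{eq:kappa-alt-form}, swap sum and integral, and evaluate the resulting moments of $\omega$ via the Beta integral (your scalar identity $\int_0^1\omega(u)u^{2x}\,\ud u=\dN(x)$ is exactly \eqref{eq:other-def-of-dN} with $s=2x$). For \eqref{eq:lemma:kappa-moments-bounds-dN} the two arguments coincide: both reduce to the statement that the sum $\sum_k k^p(b-k)_+$ differs from $\int_0^b t^p(b-t)\,\ud t$ by at most the maximum $\tfrac{p^p}{(p+1)^{p+1}}b^{p+1}\leq\tfrac{b^{p+1}}{p+1}$ of the unimodal integrand; the paper simply cites this sum-versus-integral fact, whereas you reprove it by a careful right-endpoint Riemann-sum bookkeeping, which makes the step self-contained at no extra cost. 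The genuine divergence is the upper bound in \eqref{eq:kappa-sum-bounds}: the paper gets $\tfrac{1}{2N}(a^2-1)_+$ by noting $\kappa_k(a{=}1)=0$, bounding $\sum_k\tfrac{\ud}{\ud a}\kappa_k(a)\leq a/N$ via $\lfloor au\rfloor\leq au$, and integrating from $1$ to $a$; you instead use the exact closed form $\sum_{k=1}^{\lfloor b\rfloor}(b-k)=\tfrac12(b^2-\{b\}^2-\lfloor b\rfloor)$ together with the sandwich $\min(b^2,1)\leq\{b\}^2+\lfloor b\rfloor\leq b$ and the pointwise inequality $\min((au)^2,1)\geq\min(a^2,1)u^2$. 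Your route is more elementary (no differentiation under the sum, no fundamental-theorem step) and has the aesthetic advantage of producing both sides of \eqref{eq:kappa-sum-bounds} from one exact identity; the paper's derivative argument is what one would naturally try to sharpen further for small $a$, as the remark after the lemma suggests. Both yield identical constants, so nothing downstream changes.
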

\begin{proof}
Using the alternate formulation \eqref{eq:kappa-alt-form} of
$\kappa_k$, we find first for $p\geq 0$.
\begin{align}
\label{eq:power-kappa-sum-in-integral}
&\textstyle\sum_{k=1}^n k^{p}\kappa_k =\ \cN a\, (N-1) \,\int_{0}^{1} (1 - u^2)^{\frac{N-3}{2}}\,\sum_{k=1}^n k^{p}\,(u -
\tfrac{k}{a})_+\,\ud u. 
\end{align}
In the case where $p=0$, $\tinv{2}(u^2 - u)\ \leq\ \sum_{k=1}^{+\infty}\,(u -
k)_+\ \leq\ \tinv{2} u^2$. This is easily observed from $u = \lfloor u\rfloor + (u - \lfloor
u\rfloor) = \sum_{k=1}^{+\infty} \bb I(u \geq k) + (u - \lfloor
u\rfloor)$, which integrated gives $\tinv{2} u^2 =
\sum_{k=1}^{+\infty} (u - k)_+ + \int_0^u (v - \lfloor
v\rfloor)\,\ud v$, the last integral being a positive and smaller than
$\tinv{2}u$. Therefore, for any $a>0$, 
\begin{equation}
  \label{eq:sum-u-ka-low-up-bound}
\textstyle\tinv{2}(a u^2 - u)\ \leq\ \sum_{k=1}^{+\infty}\,(u -
\tfrac{k}{a})_+\ \leq\ \tinv{2} a u^2.  
\end{equation}
Moreover, for any $s\in \Nbb$ and given the definition of $\cN$, 
\begin{equation}
 \label{eq:other-def-of-dN}
\textstyle \cN (N-1) \int_{0}^{1} (1 -
u^2)^{\frac{N-3}{2}}\,u^{s}\,\ud u = \cN
\tfrac{N-1}{2}\,B(\tfrac{s+1}{2},\tfrac{N-1}{2})
= \tfrac{\Gamma\big(\tfrac{s+1}{2}\big)\Gamma\big(\tfrac{N}{2}\big)}{\sqrt{\pi}\,\Gamma\big(\tfrac{N+s}{2}\big)}
= \dN(\tfrac{s}{2}),
\end{equation}
with the ``Beta'' function $B(x,y) = \Gamma(x)\Gamma(y)/\Gamma(x+y)$
and $\dN$ defined in \eqref{eq:dN-def}.

Therefore, using \eqref{eq:power-kappa-sum-in-integral} combined with
the lower bound of \eqref{eq:sum-u-ka-low-up-bound} and the
identity $\Gamma(x)x=\Gamma(x+1)$ for any $x\in\Rbb_+$, we get
$$
\textstyle\sum_{k=1}^n \kappa_k \geq \ \tfrac{1}{2}\dN(1)\,a^2 -
\tinv{2}\dN(\frac{1}{2})a = \tfrac{1}{2N}\,a^2 - \tinv{2}\,\cN a.
$$ 
Similarly, the upper bound of \eqref{eq:sum-u-ka-low-up-bound} can lead to
$\textstyle \sum_{k=1}^n \kappa_k \leq \tinv{2} \dN(1) a^2 = \tfrac{1}{2N}\,a^2$.
A tighter bound is obtained by observing that, from \eqref{eq:power-kappa-sum-in-integral},
$\kappa_k(a=1) = 0$ and
\begin{align*}
\textstyle\sum_{k=1}^n \tfrac{\ud}{\ud
  a}\kappa_k(a)&\textstyle=\ \cN (N-1) \,\int_{0}^{1} (1 -
u^2)^{\frac{N-3}{2}} \sum_{k=1}^n u\,\bb I(u \geq
\tfrac{k}{a})\,\ud u\\
&\textstyle= \cN (N-1) \,\int_{0}^{1} (1 -
u^2)^{\frac{N-3}{2}}\,u\,\lfloor a u\rfloor\,\ud u\\
&\textstyle\leq \cN a (N-1) \,\int_{0}^{1} (1 -
u^2)^{\frac{N-3}{2}}\,u^2\,\ud u\\
&\textstyle =\ \tinv{N}\,a, 
\end{align*}
using \eqref{eq:other-def-of-dN} with $s=2$ in the last equality. Therefore, 
\begin{equation}
\textstyle \sum_{k=1}^n \kappa_k(a)\ =\ \bb I(a\geq 1)\,\int_1^a \sum_{k=1}^n \tfrac{\ud}{\ud
  u}\kappa_k(u)\, \ud u\ \leq \tinv{2N} (a^2 - 1)_+.  
\end{equation}
 
For analyzing positive power $p$, we rely on the fact that, for a continuous
and integrable function $g:[l,m]\to \Rbb$ with a unique extremum on 
$[l,m]\subset \Rbb$, 
$$
\textstyle\big|\,\sum_{k=l+1}^{m} g(k) - \int_l^{m} g(t)\,\ud t\big|\ \leq\ \max_{t\in[l,m]} |g(t)|.
$$ 
Taking $g(t)=t^p(u-t)$ which has a unique maximum on $\tfrac{p}{p+1}u$
of height $(\tfrac{p}{p+1})^{p}\tinv{p+1} u^{p+1} \leq \tinv{p+1}
u^{p+1}$, we find $
\big|\sum_{k=1}^{\infty} k^p(u-k)_+\ -\ \tinv{(p+2)(p+1)} u^{p+2}\big|\ \leq \tinv{p+1}
u^{p+1}$, 
since $\int_0^{\infty} t^p(u-t)_+\,\ud
t = u^{p+2} B(p+1,2) = \tinv{(p+2)(p+1)} u^{p+2}$. For any $a>0$, this
leads to 
$$
\bigg|(p+2)(p+1)\sum_{k=1}^{\infty} k^p(u-\tfrac{k}{a})_+\ -\ a^{p+1}u^{p+2}\bigg|\ \leq (p+2)
a^p u^{p+1}. 
$$
The result follows by inserting this last bound in
\eqref{eq:power-kappa-sum-in-integral} and reusing
\eqref{eq:other-def-of-dN} for $s\in\{p+1,p+2\}$.
\end{proof}

Notice that \eqref{eq:lemma:kappa-moments-bounds-dN} (in Lemma \ref{lemma:kappa-moments-bounds}) is probably
improvable for small values of $a$ since, as said in the proof above,
$\kappa_k(1)=0$. We note, however, that the expression is tight
asymptotically in~$a$. 

Thanks to the previous Lemmata, we are now ready to prove Prop.~\ref{prop:buffon-moment-bounds}.
\begin{proof}[Proof of Prop.~\ref{prop:buffon-moment-bounds}]
If $a<1$, then, for all
$q\geq 1$, $\bb E X^q = 1^q\,p_1 = \cN a$, while if $a\geq 1$,
\eqref{eq:sum-cj-pj} shows that $\bb E X^q = \kappa_0 + \sum_{k=1}^n
\Delta^2((k-1)^q) \kappa_n \geq \cN a$ since $\kappa_0 = \cN
a$.

Let us consider now more specific values of $q$ for the case $a\geq
1$. For $q=2$, we know from
Lemmata~\ref{lemma:sum-cj-pj} and \ref{lemma:D2-on-power-sequence}
that $\bb E X^2 = \kappa_0 + 2\sum_{k=1}^{n} \kappa_k$ and the upper bound
follows from \eqref{eq:kappa-sum-bounds} since $\kappa_0 = \cN
a$. 

For $q = 3$, the same two lemmata provide $\bb E X^3 = \kappa_0 +
6\sum_{k=1}^{n} k\kappa_k$. Moreover, from
\eqref{eq:lemma:kappa-moments-bounds-dN},
$$
\textstyle\big|6\sum_{k=1}^n k\kappa_k -
\dN(\frac{3}{2})\,a^{3} \big|\ \leq
3\,\dN(1)\,a^{2},
$$
which involves
$$
\big | \bb E X^3 - \big(\cN a + \dN(\tfrac{3}{2})\,a^{3}\big)\big|\ \leq\ 3\,\dN(1)\,a^{2}.
$$
For $q\geq 4$, the result becomes a bit technical. Again from Lemmata~\ref{lemma:sum-cj-pj} and \ref{lemma:D2-on-power-sequence}, 
$$
\big|\bb E X^q - \big(\kappa_0 +
q(q-1)\sum_{k=1}^{n} k^{q-2}\kappa_k\big)\big| \leq 2^{q-3}
{\textstyle {q \choose 4}} \sum_{k=1}^{n} k^{q-4} \kappa_k.
$$
Using twice \eqref{eq:lemma:kappa-moments-bounds-dN}, we find 
\begin{align*}
&\textstyle \big|\bb E X^q - \big(\kappa_0 +
\dN(\tfrac{q}{2}) a^{q}
\big)\big|\\
&\textstyle\leq\ q\,\dN(\frac{q-1}{2})\,a^{q-1} + 2^{q-3}{q \choose 4} \sum_{k=1}^{n} k^{q-4} \kappa_k\\
&\textstyle\leq\ q\,\dN(\frac{q-1}{2})\,a^{q-1} + \tfrac{2^{q-6}}{3}
q(q-1) \big( \dN(\tfrac{q-2}{2}) a^{q-2} + (q-2)\,\dN(\frac{q-3}{2})\,a^{q-3}\big)\\
&\textstyle\leq\ q\,\dN(\frac{q-1}{2})\,a^{q-1} + \tfrac{1}{24}
{q \choose 2} \dN(\tfrac{q-2}{2}) (2a)^{q-2} + \tfrac{1}{12}
{q \choose 3}\,\dN(\frac{q-3}{2})\,(2a)^{q-3}.
\end{align*}
Finally, for the weak upper bound \eqref{eq:weak-buffon-upper-bound},
we note that \eqref{eq:lemma:kappa-moments-bounds-dN} involves
$$
\textstyle {q \choose 2}\,\sum_{k=1}^n k^{q-2}\kappa_k \leq 
\inv{2} \dN(\frac{q}{2})\,a^{q}\ +\ \inv{2} q\,\dN(\frac{q-1}{2})\,a^{q-1}.
$$
Using \eqref{eq:sum-cj-pj} and \eqref{eq:weak-power-bound}, we obtain
\begin{align*}
\textstyle \bb E X^q&\textstyle\leq \cN a + 2^{q-1} {q \choose 2} \sum_{k=1}^n
k^{q-2}\\
&\textstyle \leq \cN a + 2^{q-2}  \dN(\frac{q}{2})\,a^{q} + 2^{q-2}
q\,\dN(\frac{q-1}{2})\,a^{q-1}.
\end{align*}

\end{proof}

\section{Quasi-Isometric Quantized Embedding}
\label{sec:quantized-embeddings}

Buffon's needle problem and its generalization to an $N$-dimensional
space lead to interesting observations in the field of dimensionality
reduction: it helps in understanding the
impact of quantization on the classical Johnson-Lindenstrauss (JL) Lemma~{\cite{johnson1984extensions,Achlioptas:2003p640}}. 

To see this, let us consider the common uniform quantizer of bin width $\qbin>0$
\begin{equation}
\label{eq:quantizer-def}
\textstyle \cl Q(\lambda) = \qbin \lfloor \frac{\lambda}{\qbin}
\rfloor\ \in\ \qbin\Zbb,
\end{equation}
defined componentwise when applied on vectors. Notice that we could
have defined the more common \emph{midrise} quantizer $\cl Q': \lambda \to \qbin \lfloor \lambda/\qbin
\rfloor + \delta/2$ with no impact on the rest of our developments.

Given a random matrix $\bs\Phi \sim \cl N^{M\times N}(0,1)$ and a
uniform random vector $\bs\xi \sim
\cl U^M([0, \qbin])$, we define the non-linear mapping $\bs \mapf_{\qbin}:\Rbb^N \to
\ZbbMqbin$ such that
\begin{equation}
\label{eq:quantiz-mappin-def-with-dithering}
\bs \mapf_{\qbin} (\bs u) = \cl Q(\bs\Phi \bs u + \bs\xi),
\end{equation}
where $\bs \xi$ plays a useful \emph{dithering} role: its action
randomizes the location of each unquantized component of $\bs
\Phi \bs u$ inside a quantization cell of $\Rbb^M$
\cite{gray1998quantization}. Our dithered construction is similar to the one
developed in \cite{Boufounos2010}, but our quantizer is different. 

How can we interpret the action of this mapping $\bs \mapf_{\qbin}$?
How does it approximately preserve the distance between a pair of points $\bs u, \bs v \in \Rbb^N$?
Surprisingly, the answer comes from Buffon's needle problem from the
following equivalence.

\begin{proposition} 
\label{prop:quant-embed-is-buffon}
Under the notations defined above, for each $j\in
  [M]$ and conditionally to the
  knowledge of $r_j = \|\bs \varphi_j\|$, we have 
  \begin{equation}
    \label{eq:quant-embed-is-buffon}
    \textstyle X_j := \inv{\qbin} |(\bs\mapf_\qbin(\bs u))_j -
    (\bs\mapf_\qbin (\bs v))_j |\ \sim_{\rm iid} \buf(\frac{r_j}{\qbin}
    \|\bs u -\bs v\|, N).    
  \end{equation}
\end{proposition}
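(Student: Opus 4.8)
The plan is to show that, coordinate by coordinate, the dithered quantized map performs exactly a one-dimensional Buffon experiment whose ``needle'' is the random projection of the segment $[\bs u,\bs v]$, and then to recognise the law of that random projection as the $N$-dimensional orientation law of Sec.~\ref{sec:gener-n-dimens}. Fix $j\in[M]$ and write $\bs\varphi=\bs\varphi_j$ for the $j$-th row of $\bs\Phi$, $\xi=\xi_j$, $\alpha=\scp{\bs\varphi}{\bs u}$, $\beta=\scp{\bs\varphi}{\bs v}$, and $c=|\alpha-\beta|=|\scp{\bs\varphi}{\bs u-\bs v}|$. Since $\cl Q(\lambda)=\qbin\lfloor\lambda/\qbin\rfloor$, we have
\[
X_j=\big|\,\lfloor(\alpha+\xi)/\qbin\rfloor-\lfloor(\beta+\xi)/\qbin\rfloor\,\big|,
\]
which is the number of points of the grid $\qbin\Zbb$ lying in the interval of length $c$ delimited by $\min(\alpha,\beta)+\xi$ and $\max(\alpha,\beta)+\xi$. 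This is literally a planar Buffon throw collapsed onto the normal axis: a needle of length $c$, a grid of spacing $\qbin$, and the dither $\xi\sim\cl U([0,\qbin])$ providing the uniformly random offset. Because $\xi$ ranges over a full period $[0,\qbin]$, the offset $\big(\min(\alpha,\beta)+\xi\big)\bmod\qbin$ is uniform on $[0,\qbin)$ irrespective of $\alpha,\beta$; this is the sole role of the dithering, and it is precisely what makes the conditional law of $X_j$ depend on $c$ alone.

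Conditioning on $c$, a direct count (valid up to the measure-zero event that the offset lands on a grid point) gives $X_j=\lfloor c/\qbin\rfloor$ with probability $1-\{c/\qbin\}$ and $X_j=\lfloor c/\qbin\rfloor+1$ with probability $\{c/\qbin\}$, where $\{\cdot\}$ denotes the fractional part. This two-point law is exactly the conditional-on-orientation law used in Sec.~\ref{sec:gener-n-dimens} to build $\buf(a,N)$: there, given the angle $\theta$ between the needle and the grid normal, the needle projects (in units of $\qbin$) to a segment of length $a|\cos\theta|$, and the expressions ``$a\cos\theta-n$'', ``$k+1-a\cos\theta$'' and ``$a\cos\theta-k+1$'' appearing in that proof are, term by term, the weights $\{a|\cos\theta|\}$ and $1-\{a|\cos\theta|\}$ carried by $\lfloor a|\cos\theta|\rfloor+1$ and $\lfloor a|\cos\theta|\rfloor$.

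It remains to identify the law of $c$. Conditioning on $r=\|\bs\varphi\|$ and using the rotational invariance of the standard Gaussian, write $\bs\varphi=r\,\bs g$ with $\bs g$ uniform on $\bb S^{N-1}$; then $c=r\,\|\bs u-\bs v\|\,\big|\scp{\bs g}{\hat{\bs d}}\big|$ with $\hat{\bs d}=(\bs u-\bs v)/\|\bs u-\bs v\|$, and, again by rotational invariance, $\scp{\bs g}{\hat{\bs d}}$ is distributed as the first coordinate of a uniform point on $\bb S^{N-1}$, i.e.\ as $\cos\Theta$ with $\Theta$ of density $\propto(\sin\theta)^{N-2}$ on $[0,\pi]$. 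Hence $c/\qbin=a\,|\cos\Theta|$ with $a=\tfrac{r}{\qbin}\|\bs u-\bs v\|$, and folding $|\cos\Theta|$ onto $[0,\pi/2]$ yields precisely the weighted measure $\tfrac{2}{I_{N-2}}(\sin\theta)^{N-2}\,\ud\theta$ against which $\buf(a,N)$ is integrated. Combining the three steps, integrating the conditional law of $X_j$ over $\Theta$ reproduces verbatim the probabilities $p_k=\kappa_{k+1}+\kappa_{k-1}-2\kappa_k$, so that $X_j$ given $r_j$ follows $\buf(\tfrac{r_j}{\qbin}\|\bs u-\bs v\|,N)$. Independence across $j$ is then immediate: the rows $\bs\varphi_j$ are i.i.d.\ and $\bs\xi$ has i.i.d.\ entries independent of $\bs\Phi$, so conditionally on $(r_j)_j$ the $X_j$ are independent (with generally distinct parameters $r_j$, which is the mild abuse hidden in ``$\sim_{\rm iid}$'').

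The step I expect to require the most care is the bookkeeping of the second paragraph: checking that the uniform-offset count matches the Buffon conditional law \emph{term by term}, including the degenerate cases $c=0$ and $c\in\qbin\Zbb$, and making the reduction to a uniform offset on a full period airtight --- this is exactly where the dither being supported on all of $[0,\qbin]$, rather than on a strict subinterval, is essential. Everything else is a direct transcription of the computations in Sec.~\ref{sec:gener-n-dimens}.
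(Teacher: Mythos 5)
Your proof is correct, and it rests on the same central identification as the paper's: the dithered quantized difference along one row is a Buffon intersection count, and the Gaussian row factors as $r_j$ times a direction uniform on $\bb S^{N-1}$. The execution, however, is genuinely different. The paper reconstructs the full $N$-dimensional geometric experiment: it introduces a random rotation $\gamma$ uniform on $\SO(N)$ (invoking the Haar measure) together with the dither $\xi\bs e_1$, argues that the transformed needle $\needle(\bs u_{\gamma,\xi},\bs v_{\gamma,\xi})$ is thrown uniformly at random onto the grid of hyperplanes --- hence literally realizes a Buffon throw --- and only then transfers the conclusion to Gaussian rows via $\bs e_1^T\bs R(\gamma)\bs x\sim\bs\theta^T\bs x$. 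You collapse everything onto the normal axis from the start: you derive the exact conditional two-point law of the count given the projected length $c$ (namely $\lfloor c/\qbin\rfloor$ versus $\lfloor c/\qbin\rfloor+1$ with weights $1-\{c/\qbin\}$ and $\{c/\qbin\}$), identify the law of $c/\qbin$ given $r_j$ as $a|\cos\Theta|$ by rotational invariance of the Gaussian, and check that the resulting mixture reproduces the $\buf(a,N)$ probabilities term by term against the integrands $a\cos\theta-n$, $k+1-a\cos\theta$, $a\cos\theta-k+1$. What your route buys is the avoidance of the $\SO(N)$/Haar-measure detour and an explicit statement of the two-point conditional law that the paper leaves implicit inside its derivation of the $p_k$; what the paper's route buys is the equivalence at the level of the experiment itself rather than of the computed distribution, which is the conceptual point the author wants to emphasize. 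Your closing remark that the ``iid'' in the statement is a mild abuse (the conditional parameters $r_j$ differ across $j$, so the $X_j$ are conditionally independent but not identically distributed) is accurate and applies equally to the paper's own formulation.
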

\begin{proof}
Let $\cl
G$ be a grid of parallel $(N-1)$-dimensional hyperplanes that are $\qbin$
apart. Without any loss of generality, we assume them normal to the axis $\bs e_1= (1, 0, \cdots,
0)^T$ and each hyperplane corresponds to the set $\cl H_k = \{\bs
x\in\Rbb^N: \bs e_1^T \bs x = \qbin k\}$ for $k\in\Zbb$. Let us now imagine a
``needle'' $\needle(\bs u,\bs v)$ whose extremities are determined by two points $\bs u$ and
$\bs v$ somewhere in $\Rbb^N$. Note that the parameterization of the needle
with its extremities is
equivalent to the one defined in Sec.~\ref{sec:buff-needle-probl}. 

Notice that the number of
intersections $\needle(\bs u,\bs v)$ has with $\cl G = \cup_{k\in\Zbb}
\cl H_k$ can obviously be expressed with the quantizer $\cl Q$ as
$$
\textstyle \tinv{\qbin} |\cl Q(\bs e_1^T \bs u) - \cl Q(\bs e_1^T \bs v)|.
$$
The reason is that, if $\bs x\in\Rbb^N$ falls between $\cl H_{k(\bs x)}$ and $\cl H_{k(\bs x)+1}$ (the last
hyperplane excluded), then $\cl Q(\bs e_1^T \bs x) = k_{\bs x}\qbin$ with $k_{\bs x}:=\lfloor \frac{\bs e_1^T \bs x}{\qbin}
\rfloor$. Therefore, $\tinv{\qbin} |\cl Q(\bs e_1^T \bs u) - \cl Q(\bs
e_1^T \bs v)| = |k_{\bs u} - k_{\bs v}|$ is the number of hyperplanes
crossing $\needle(\bs u, \bs v)$.

Let us define now a random dithering $\xi \sim \cl U([0, \qbin])$ and a
random rotation $\gamma$ whose
distribution is uniform on the rotation group\footnote{This is made possible from the existence of a Haar measure
on $\SO(N)$ (see, \eg \cite{vershynin2011lectures}).} $\SO(N)$ of
$\Rbb^N$. From these, we can
create the mapping $\bs x_{\gamma, \xi} = T_{\gamma, \xi}(\bs x) = \bs R(\gamma)\,\bs x + \xi \bs
e_1$, where $\bs R(\gamma)\in\Rbb^{N\times N}$ stands for the matrix representation of $\gamma$. 

Thanks to this transformation, given two
vectors $\bs u,\bs v\in\Rbb^N$, the needle $\needle(\bs u_{\gamma,
  \xi},\bs v_{\gamma, \xi})$ of length $\|\bs u_{\gamma,
  \xi} - \bs v_{\gamma, \xi}\|=\|\bs u -\bs
v\|$ whose extremities are
defined by $\bs u_{\gamma, \xi}$ and $\bs v_{\gamma, \xi}$ is
oriented uniformly at random (conditionally to $\xi$) thanks to the action of $\gamma$,
\ie the random vector $\bs R(\gamma)(\bs u-\bs v)$ is uniform\footnote{\label{ft:SON-haar}
This is a simple consequence of the uniqueness of the Haar measure on
$\bb S^{N-1}$ and of the fact that, given any $\bs x\in\bb S^{N-1}$, $\bs R(\gamma)\bs x$ is
rotationally invariant if $\gamma$ is picked uniformly at random on $\SO(N)$.}
 on $\bb S^{N-1}$.

Moreover, conditionally to $\gamma$, this needle is also positioned uniformly
at random relatively to the
$\qbin$-periodic grid $\cl G=\cup_{k\in \Zbb} \cl H_k$. From the
action of the dithering, any fixed
point $\bs p \in \needle(\bs u_{\gamma,
  0},\bs v_{\gamma, 0})$ on the undithered needle (\eg its midpoint)
has an abscissa $p_1 + \xi \sim \cl U([p_1,p_1+\qbin])$ along $\bs e_1$ after
dithering. Therefore, from the periodicity of $\cl G$,
the distance between $\bs p + \xi \bs e_1 \in \needle(\bs u_{\gamma,
  \xi},\bs v_{\gamma, \xi})$ and the nearest hyperplane
of $\cl G$ is distributed as $\cl U([0, \qbin/2])$ conditionally to
$\gamma$.  

Consequently, the quantity
$$
\tinv{\qbin} |\cl Q(\bs e_1^T (\bs u_{\gamma, \xi})) - \cl
Q(\bs e_1^T (\bs v_{\gamma, \xi}))|
$$ 
counts the number of
intersections between $\cl G$ and the needle $\needle(\bs u_{\gamma,
  \xi},\bs v_{\gamma, \xi})$, which is oriented and positioned uniformly
at random relatively to $\cl G$. In other
words, we are in presence of a Buffon random variable ${\rm
  Buffon}(\|\bs u - \bs v\|/\qbin, N)$! 

Moreover, for any $\bs x\in\Rbb$, we have $\bs e_1^T \bs R(\gamma)
\bs x = (\bs R(\gamma)^{-1} \bs e_1)^T
\bs x \sim \bs \theta^T \bs x$
where $\bs \theta$ is a random vector uniformly distributed$^{\ref{ft:SON-haar}}$ on $\bb
S^{N-1}$. 
Therefore, 
\begin{equation}
\textstyle\!\!\!\! \tinv{\qbin} |\cl Q(\bs e_1^T (\bs u_{\gamma, \xi})) - \cl
Q(\bs e_1^T \bs v_{\gamma, \xi})|  \sim  \tinv{\qbin} |\cl Q(\bs
\theta^T \bs u + \xi) - \cl Q(\bs
\theta^T \bs v + \xi)|\ \sim\ \buf(\frac{1}{\qbin}\|\bs u - \bs v\|, N).
\end{equation}

Since any Gaussian random vector $\bs\varphi \sim \cl N^N(0,1)$ can be
written as $\bs \varphi = r \hat{\bs \varphi}$ with $r=\|\bs
\varphi\|$ and $\hat{\bs \varphi}=\bs\varphi/r$ picked uniformly at random on
$\bb S^{N-1}$, we can conclude that, conditionally to $r$, 
$$  
\textstyle\tinv{\qbin} |\cl Q(\bs
\varphi^T \bs u + \xi) - \cl Q(\bs
\varphi^T \bs v + \xi)| \sim \buf(\frac{r}{\qbin}\|\bs u - \bs v\|, N),
$$
which, from \eqref{eq:quantiz-mappin-def-with-dithering}, behaves exactly as the amplitude of one component of $\inv{\qbin} (\bs \mapf_{\qbin} (\bs u) - \bs \mapf_{\qbin}(\bs v))$.

Therefore, we can finally state that, for all $j\in[M]$ and conditionally
to the knowledge of the length $r_j = \|\bs \varphi_j\|$, 
$$
\textstyle X_j := \inv{\qbin} |(\bs\mapf_\qbin(\bs u))_j - (\bs\mapf_\qbin (\bs v))_j| \sim_{\rm iid} \buf(\frac{r_j}{\qbin}
\|\bs u -\bs v\|, N), 
$$
with the independence of the random variables $X_j$ resulting from the one of the rows of $\bs \Phi$.
\end{proof}

Now that this equivalence is proved, we see how to reach the
characterization of a quantized embedding determined by $\bs \mapf_{\qbin}$: we have
to study the concentration properties of each $X_j$ around their
mean. Therefore, targeting
the use of a classical concentration result due to Bernstein (explained later),
we first have to analyze the moments of these random variables.

Let us start with the evaluation of their expectation. Notice that, since $\bs \varphi_j
\sim_{\rm iid} \cl N^N(0, 1)$, each $r_j = \|\bs \varphi_j\| \sim_{\rm iid} \chi(N)$ follows a $\chi$ distribution
with $N$ degrees of freedom. We have also that for $Z \sim_{\rm iid} \chi(N)$
and $q\in\Nbb$, 
\begin{equation}
\label{eq:moment-norm-normal}
\bb E Z^q = 2^{\frac{q}{2}} \frac{\Gamma(\frac{N+q}{2})}{\Gamma(\frac{N}{2})} = \frac{2^{\frac{q}{2}}\Gamma(\frac{q+1}{2})}{\sqrt{\pi}\,\dN(\frac{q}{2})},
\end{equation} 
where $\dN$ was defined in \eqref{eq:dN-def}. 

This allows one to see that the expectation of each $X_j$ is
proportional to $\|\bs u - \bs v\|$ and independent of~$N$.
\begin{proposition}
\label{prop:expect-abs-dif-quant}
Let $\qbin>0$, $\bs\Phi \sim \cl N^{M\times N}(0,1)$, $\bs \xi\sim \cl U^M([0, \qbin])$ and $\cl Q$ defined as
above. Given $\bs u, \bs v\in\Rbb^N$ and $j\in [M]$, we have
\begin{equation}
  \label{eq:expectation-abs-dif-quant}
  \qbin\,\bb E X_j = \bb E\,|\cl Q(\bs\varphi_j^T \bs u +
  \xi_j) - \cl Q(\bs\varphi_j^T \bs v + \xi_j)|\ =\ \sqrt{\tfrac{2}{\pi}}\,\|\bs u - \bs v\|.
\end{equation}
\end{proposition}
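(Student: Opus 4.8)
The plan is to combine the equivalence established in Proposition~\ref{prop:quant-embed-is-buffon} with the moment formula \eqref{eq:moment-norm-normal} for the $\chi(N)$ distribution. By Proposition~\ref{prop:quant-embed-is-buffon}, conditionally on $r_j = \|\bs\varphi_j\|$, the variable $X_j = \tinv{\qbin}|(\bs\mapf_\qbin(\bs u))_j - (\bs\mapf_\qbin(\bs v))_j|$ is a Buffon random variable $\buf(\frac{r_j}{\qbin}\|\bs u - \bs v\|, N)$. I will first apply Proposition~\ref{prop:expect-abs-dif-quant}'s prerequisite, namely \eqref{eq:buffon-expect}, which tells us $\bb E[X \mid a] = \cN a$ for $X \sim \buf(a, N)$ \emph{provided} $a < 1$; however, since we want an identity valid for all $\bs u, \bs v$, I will instead invoke Prop.~\ref{prop:buffon-moment-bounds} in the form it is really needed — but actually the cleanest route is to note that $\bb E[X \mid a] = \cN a$ holds for \emph{every} $a \geq 0$, since this is exactly the content of \eqref{eq:buffon-expect} (the additivity argument in the proof of Prop.~\ref{prop:expect-intersect} gives $\bb E X = \cN a$ with no restriction on $a$).

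The core computation is then a single tower-rule step:
\begin{align*}
\qbin\,\bb E X_j &= \qbin\,\bb E\big[\,\bb E[X_j \mid r_j]\,\big]
= \qbin\,\bb E\Big[\cN\,\tfrac{r_j}{\qbin}\|\bs u - \bs v\|\Big]
= \cN\,\|\bs u - \bs v\|\;\bb E[r_j].
\end{align*}
It remains to evaluate $\bb E[r_j]$ with $r_j \sim \chi(N)$. Taking $q = 1$ in \eqref{eq:moment-norm-normal} gives $\bb E[r_j] = \sqrt{2}\,\Gamma(1)/(\sqrt{\pi}\,\dN(\tfrac12)) = \sqrt{2}/(\sqrt{\pi}\,\cN)$, using $\dN(\tfrac12) = \cN$ and $\Gamma(1) = 1$. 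Substituting, the factors of $\cN$ cancel:
\[
\qbin\,\bb E X_j = \cN\,\|\bs u - \bs v\|\cdot\frac{\sqrt{2}}{\sqrt{\pi}\,\cN} = \sqrt{\tfrac{2}{\pi}}\,\|\bs u - \bs v\|,
\]
which is exactly \eqref{eq:expectation-abs-dif-quant}. The identification $\qbin\,\bb E X_j = \bb E\,|\cl Q(\bs\varphi_j^T\bs u + \xi_j) - \cl Q(\bs\varphi_j^T\bs v + \xi_j)|$ is immediate from the definition of $X_j$ and \eqref{eq:quantiz-mappin-def-with-dithering}.

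I do not anticipate a serious obstacle here; the only point requiring a moment's care is making sure the Buffon expectation formula $\bb E[X\mid a] = \cN a$ is genuinely unrestricted in $a$ (so that conditioning on $r_j$, which ranges over all of $\Rbb_+$, is legitimate), and that the expectation and the conditioning can be interchanged — both are fine since $X_j$ is bounded by $\lfloor \frac{r_j}{\qbin}\|\bs u-\bs v\|\rfloor + 1$, which is integrable against the $\chi(N)$ density. The independence and $N$-independence claims in the statement then follow for free: independence across $j$ from the independence of the rows $\bs\varphi_j$ (already noted in Prop.~\ref{prop:quant-embed-is-buffon}), and the fact that the final answer $\sqrt{2/\pi}\,\|\bs u - \bs v\|$ carries no $N$ from the exact cancellation of $\cN$ above.
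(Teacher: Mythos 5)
Your argument is correct and is essentially the paper's own proof: condition on $r_j=\|\bs\varphi_j\|$, use the Buffon equivalence of Prop.~\ref{prop:quant-embed-is-buffon} together with $\bb E[X\mid a]=\cN a$ (valid for all $a\geq 0$ by the additivity argument), and then apply the law of total expectation with the first moment of the $\chi(N)$ distribution so that $\cN$ cancels. The only cosmetic difference is that you route the evaluation of $\bb E r_j$ through \eqref{eq:moment-norm-normal} and $\dN(\tfrac12)=\cN$, whereas the paper computes $\cN\,\bb E r$ directly; the substance is identical.
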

\begin{proof}
The proposition follows from the law of total expectation applied to
the computation of $\bb E X_j$ with $X_j = \tinv{\qbin}\,|(\bs\mapf_\qbin (\bs u))_j
- (\bs\mapf_\qbin (\bs v))_j|$. Since, conditionally to $r=\|\bs\varphi_j\|$, $X_j
\sim \buf(\frac{r}{\qbin}\|\bs u - \bs v\|, N)$, and since $r
\sim \chi(N)$, we have
$$
\textstyle \bb E X_j = \bb E \big(\bb E(X_j|r) \big) = \cN\bb E(
\frac{r}{\qbin}\,\|\bs u - \bs v\|) = \sqrt{\frac{2}{\pi}}\, \frac{1}{\qbin}\,\|\bs u - \bs v\|.
$$  
\end{proof}

Beyond the mere evaluation of $\bb E X_j$, we can show that, if $\|\bs u - \bs v\|$ is much larger than
$\qbin$, any $X_j$ for $j\in [M]$
behaves like the amplitude of a Gaussian random variable $\cl N(0,\|\bs u - \bs
v\|^2/\qbin^2)$. This fact is established hereafter from an asymptotic analysis
of the moments $\bb E X_j^q$.
 
\begin{proposition} 
\label{cor:mapping-as-gaussian-asympt}
Following the previous conventions, for any
  $j\in[M]$ and $\alpha = \|\bs u - \bs v\|/\qbin$, we have 
$$
\big|\,\bb E X_j^q\ -\ \bb E|G_\alpha|^q\,\big|\ =\
O(\alpha^{q-1}),
$$
with $G_\alpha \sim \cl N(0, \alpha^2)$ and $~\bb E|G_\alpha|^q = \tinv{\sqrt
    \pi}\,2^{\frac{q}{2}}\,\alpha^q\,\Gamma(\frac{q+1}{2}) = O(\alpha^q)$.
\end{proposition}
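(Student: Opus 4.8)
The plan is to condition on the row norm $r_j = \|\bs\varphi_j\|$ and push everything through the Buffon moment bounds of Prop.~\ref{prop:buffon-moment-bounds}. By Prop.~\ref{prop:quant-embed-is-buffon}, conditionally on $r_j$ the variable $X_j$ is distributed as $\buf(r_j\alpha, N)$ (note that $\tfrac{r_j}{\qbin}\|\bs u - \bs v\| = r_j\alpha$), while unconditionally $r_j \sim \chi(N)$; hence the law of total expectation gives $\bb E X_j^q = \bb E_{r_j}\bb E[\buf(r_j\alpha,N)^q]$, and it suffices to control the integrand pointwise in $r_j$ and then average.

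The first real step is to recognise the candidate limit $\bb E|G_\alpha|^q$ as \emph{exactly} a $\dN$-weighted $\chi(N)$ moment. Using the definition \eqref{eq:dN-def} of $\dN$ together with the $\chi(N)$ moment formula \eqref{eq:moment-norm-normal}, I would first verify the identity $\dN(\tfrac q2)\,\bb E r^q = \tinv{\sqrt\pi}\,2^{q/2}\,\Gamma(\tfrac{q+1}{2})$, hence $\dN(\tfrac q2)\,\bb E(r\alpha)^q = \bb E|G_\alpha|^q$ for $r\sim\chi(N)$. This is precisely the reason $\dN$ was introduced, and it makes the error split cleanly:
\begin{equation*}
\bb E X_j^q - \bb E|G_\alpha|^q\ =\ \bb E_{r_j}\Big[\,\bb E[\buf(r_j\alpha,N)^q]\ -\ \dN(\tfrac q2)\,(r_j\alpha)^q\,\Big].
\end{equation*}
It then only remains to bound the Buffon ``remainder'' $R_N(a) := \bb E[\buf(a,N)^q] - \dN(\tfrac q2)a^q$ and integrate it against the law of $r_j$.

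For the remainder, I would show that $|R_N(a)| \leq C_{q,N}(1 + a^{q-1})$ uniformly over $a \geq 0$, for a constant $C_{q,N}$ depending only on $q$ and $N$. When $a<1$ this is immediate from Prop.~\ref{prop:buffon-moment-bounds}, where $\bb E[\buf(a,N)^q] = \cN a$, so $|R_N(a)| \leq \cN + \dN(\tfrac q2)$; when $a\geq 1$ it follows by collecting \eqref{eq:buffon-2nd-moment-bound}, \eqref{eq:buffon-3rd-moment-bound} and \eqref{eq:buffon-q-moment-bound} (adding back the benign term $\cN a$ where those bounds are stated relative to $\cN a + \dN(\tfrac q2)a^q$) and noting that every residual term is a power $a^s$ with $1\le s\le q-1$, hence $\le a^{q-1}$ for $a\ge 1$; the case $q=1$ is an exact equality since $\dN(\tfrac12)=\cN$. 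Plugging this into the displayed splitting and using that $\chi(N)$ has finite moments of all orders gives $\big|\bb E X_j^q - \bb E|G_\alpha|^q\big| \le C_{q,N}\big(1 + \alpha^{q-1}\,\bb E r_j^{q-1}\big) = O(\alpha^{q-1})$, while the asymptotic relation $\bb E|G_\alpha|^q = \tinv{\sqrt\pi}2^{q/2}\Gamma(\tfrac{q+1}{2})\alpha^q = O(\alpha^q)$ is trivial.

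The main obstacle here is bookkeeping rather than anything conceptual: assembling the differently shaped bounds of Prop.~\ref{prop:buffon-moment-bounds} (the $q=2$, $q=3$ and $q\ge 4$ cases, \emph{plus} the degenerate small-argument regime $r_j\alpha<1$, which is not negligible pointwise but gets washed out under $\bb E_{r_j}$) into a single clean $O(1+a^{q-1})$ statement with a controlled constant. The only other point I would take care to check is the exact identity $\dN(\tfrac q2)\bb E r^q = \tinv{\sqrt\pi}2^{q/2}\Gamma(\tfrac{q+1}{2})$, since the whole argument rests on the candidate limit being \emph{exactly} the Gaussian absolute moment, not merely asymptotic to it.
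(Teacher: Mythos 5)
Your proposal is correct and follows essentially the same route as the paper's proof: condition on $r_j\sim\chi(N)$, use the identity $\dN(\tfrac q2)\,\bb E r^q = \tinv{\sqrt\pi}2^{q/2}\Gamma(\tfrac{q+1}{2}) = \bb E|G|^q$ (which is exactly \eqref{eq:moment-norm-normal} rearranged), apply the Buffon moment bounds of Prop.~\ref{prop:buffon-moment-bounds} pointwise in $a=r_j\alpha$, and integrate via the law of total expectation so that each residual power $a^s$ becomes $O(\alpha^s)$. Your repackaging of the remainder as a single uniform bound $|R_N(a)|\leq C_{q,N}(1+a^{q-1})$, with the $a<1$ regime treated explicitly, is a slightly more careful bookkeeping of the same argument.
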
  
\begin{proof}
First notice that, if $Z\sim \chi(N)$, then, using \eqref{eq:moment-norm-normal} and a classical result on the absolute moments of a Gaussian random variable,
$$
\textstyle\dN(\frac{p}{2})\,\bb E Z^p = \tinv{\sqrt \pi}\,2^{\frac{p}{2}} \Gamma(\frac{p+1}{2})
= \bb E |G|^p,
$$ 
with $p\geq 0$ and $G\sim \cl N(0,1)$.  

Let us now consider the case $q\geq 4$. Therefore, considering the random mixture
$X_j\sim \buf(r_j\,\alpha, N)$ with $r_j = \|\bs \varphi_j\| \sim \chi(N)$, conditionally to $r_j$, \eqref{eq:buffon-q-moment-bound} provides 
\begin{align*}
&\textstyle \big|\,\bb E(X_j^q|r_j)\ -\ \big(\cN a +
\dN(\tfrac{q}{2}) a^{q}
\big)\big|\nonumber\\[1mm]
&\quad\textstyle\leq\ q\,\dN(\frac{q-1}{2})\,a^{q-1} + \tfrac{1}{24}
{q \choose 2} \dN(\tfrac{q-2}{2}) (2a)^{q-2} + \tfrac{1}{12}
{q \choose 3}\,\dN(\frac{q-3}{2})\,(2a)^{q-3}, 
\end{align*}
with $a = r_j\,\alpha$. From the law of total expectation, this shows that 
\begin{align*}
&\textstyle\big|\,\bb E X_j^q\ -\ \big(\bb E|G_\alpha| +
\bb E|G_\alpha|^q
\big)\big|\nonumber\\[1mm]
&\textstyle\leq q \bb E|G_\alpha|^{q-1} + \tfrac{2^{q-2}}{24}
{q \choose 2} \bb E|G_\alpha|^{q-2} + \tfrac{2^{q-3}}{12}
{q \choose 3} \bb E|G_\alpha|^{q-3}. 
\end{align*}
and the result follows since $\bb E|G_\alpha|^{p} = O(\alpha^p)$ for
any $p\geq 0$. The cases $1\leq q \leq 3$ are proved similarly from \eqref{eq:buffon-expect},
\eqref{eq:buffon-2nd-moment-bound} and \eqref{eq:buffon-3rd-moment-bound}.
\end{proof}

Corollary \ref{cor:mapping-as-gaussian-asympt} shows that, for $j\in [M]$, each random
variable $|(\bs\mapf_\qbin(\bs u))_j -
(\bs\mapf_\qbin (\bs v))_j |$ asymptotically behaves
like the amplitude of a Gaussian random variable of variance $\|\bs u - \bs v\|^2$ from the
proximity of their moments when this variance is large. Interestingly enough, without any
quantization, the random variable
$|(\bs \Phi \bs u)_j - (\bs \Phi \bs v)_j|$ exactly follows this
distribution for $\bs\Phi \sim \cl N^{M\times N}(0,1)$. Therefore, we
can expect later that the concentration properties of $\sum_j X_j$ should
converge to a Gaussian concentration behavior in the same asymptotic regime.

In parallel to this asymptotic analysis, bounds on the moments of $X_j$ can be estimated
thanks to those of a Buffon random variable, as summarized in
Prop.~\ref{prop:buffon-moment-bounds}.

\begin{proposition}
Let us
define $\alpha := \|\bs u - \bs v\|/\qbin$. In the conventions of Prop.~\ref{prop:expect-abs-dif-quant}, we have
\begin{equation}
  \label{eq:2nd-moment-abs-dif-quant}
\max(\sqrt{\tfrac{2}{\pi}}\,\alpha,\alpha^2) \leq \bb E X_j^2 \leq
\sqrt{\tfrac{2}{\pi}}\,\alpha + \alpha^2.
\end{equation}
and, for $q>2$,
\begin{equation}
\label{eq:q-moment-abs-dif-quant}
\textstyle \bb E X_j^q \leq \sqrt{\frac{2}{\pi}}\,\alpha +  \frac{2^{\frac{3}{2}q-2}}{\sqrt \pi}\, \alpha^q\,\Gamma(\frac{q+1}{2})+\frac{2^{\frac{3}{2} q- \frac{5}{2}}}{\sqrt \pi}\,\alpha^{q-1}\,q\,\Gamma(\frac{q}{2}).
\end{equation}
\end{proposition}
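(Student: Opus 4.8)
The plan is to pass the Buffon moment bounds of Prop.~\ref{prop:buffon-moment-bounds} through the random mixture defining $X_j$. By Prop.~\ref{prop:quant-embed-is-buffon}, conditionally on $r_j=\|\bs\varphi_j\|$ we have $X_j\sim\buf(r_j\alpha,N)$, and unconditionally $r_j\sim\chi(N)$; hence $\bb E X_j^q=\bb E_{r_j}\big[\bb E(X_j^q\mid r_j)\big]$, and the whole argument reduces to integrating the bounds of Prop.~\ref{prop:buffon-moment-bounds} (with $a=r_j\alpha$) against the $\chi(N)$ law. The one computation that makes everything collapse is the identity $\dN(\tfrac p2)\,\bb E r_j^{\,p}=\tfrac1{\sqrt\pi}\,2^{p/2}\,\Gamma(\tfrac{p+1}{2})$ for $p\geq0$, which is just \eqref{eq:moment-norm-normal} rearranged; in particular $\cN\,\bb E r_j=\sqrt{2/\pi}$ (case $p=1$) and $\bb E r_j^{\,2}=N$ (case $p=2$, using $\dN(1)=1/N$). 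These cancel exactly the $\dN$-factors occurring in the Buffon bounds, leaving clean powers of $\alpha$.

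For the second moment I would condition and apply \eqref{eq:buffon-2nd-moment-bound} with $a=r_j\alpha$; this inequality is stated for all $a\geq0$, so no case split on whether $r_j\alpha<1$ is needed. The upper part gives $\bb E(X_j^2\mid r_j)\leq\cN r_j\alpha+\tfrac1N(r_j^2\alpha^2-1)_+\leq\cN r_j\alpha+\tfrac1N r_j^2\alpha^2$; taking expectations and using $\cN\bb E r_j=\sqrt{2/\pi}$ and $\bb E r_j^2=N$ yields $\bb E X_j^2\leq\sqrt{2/\pi}\,\alpha+\alpha^2$. For the lower part I would use $\bb E(X_j^2\mid r_j)\geq\max(\cN r_j\alpha,\tfrac1N r_j^2\alpha^2)$ together with the elementary inequality $\bb E\max(A,B)\geq\max(\bb E A,\bb E B)$ to obtain $\bb E X_j^2\geq\max(\sqrt{2/\pi}\,\alpha,\alpha^2)$.

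For $q>2$ I would condition and apply the weak upper bound \eqref{eq:weak-buffon-upper-bound} with $a=r_j\alpha$, i.e.\ $\bb E(X_j^q\mid r_j)\leq\cN r_j\alpha+2^{q-2}\dN(\tfrac q2)r_j^q\alpha^q+2^{q-2}q\,\dN(\tfrac{q-1}{2})r_j^{q-1}\alpha^{q-1}$, and then take $\bb E_{r_j}$. Applying the identity above with $p=1$, $p=q$, $p=q-1$ turns the three summands into $\sqrt{2/\pi}\,\alpha$, then $2^{q-2}\cdot\tfrac1{\sqrt\pi}2^{q/2}\Gamma(\tfrac{q+1}{2})\,\alpha^q=\tfrac{2^{3q/2-2}}{\sqrt\pi}\Gamma(\tfrac{q+1}{2})\alpha^q$, and finally $2^{q-2}q\cdot\tfrac1{\sqrt\pi}2^{(q-1)/2}\Gamma(\tfrac q2)\,\alpha^{q-1}=\tfrac{2^{3q/2-5/2}}{\sqrt\pi}q\,\Gamma(\tfrac q2)\alpha^{q-1}$, which is precisely \eqref{eq:q-moment-abs-dif-quant}.

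There is no genuinely hard step: the argument is routine once the mixture representation and Prop.~\ref{prop:buffon-moment-bounds} are available. The only points needing a little care are \emph{(i)} that the Buffon bounds invoked are already valid for all $a\geq0$, so no truncation of the regime $r_j\alpha<1$ is needed when averaging over $r_j$; \emph{(ii)} the exponent bookkeeping when multiplying $2^{q-2}$ (resp.\ $2^{q-2}q$) by $\dN(\tfrac p2)\bb E r_j^{\,p}=\tfrac1{\sqrt\pi}2^{p/2}\Gamma(\tfrac{p+1}{2})$, which produces the exponents $3q/2-2$ and $3q/2-5/2$; and \emph{(iii)} the use of $\bb E\max\geq\max\bb E$ (rather than any concentration estimate) for the lower bound on $\bb E X_j^2$.
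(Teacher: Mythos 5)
Your proof is correct and follows essentially the same route as the paper: condition on $r_j=\|\bs\varphi_j\|$, invoke the Buffon moment bounds \eqref{eq:buffon-2nd-moment-bound} and \eqref{eq:weak-buffon-upper-bound} with $a=r_j\alpha$, and integrate against the $\chi(N)$ law via the identity $\dN(\tfrac p2)\,\bb E r_j^{\,p}=\tfrac{1}{\sqrt\pi}2^{p/2}\Gamma(\tfrac{p+1}{2})$ (the paper's Eq.~\eqref{eq:moment-rho-a}). The exponent bookkeeping and the use of $\bb E\max\geq\max\bb E$ for the lower bound match the paper's argument, which you have merely made slightly more explicit.
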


\begin{proof}
For the second moment, we start from
\eqref{eq:buffon-2nd-moment-bound} with $a=r_j\alpha$ and $r_j\sim
\chi(N)$ to get
\begin{equation}
\textstyle \bb E\max(\tinv{N}a^2, \cN a)\ \leq\ \bb E X_j^2 =\ \bb E \big(\bb E(X^2_j|r_j=\|\bs \varphi_j\|) \big)\ \leq\ \cN \bb E a + \tinv{N}\bb E (a^2-1)_+. 
\end{equation}
However, from \eqref{eq:moment-norm-normal},
\begin{equation}
  \label{eq:moment-rho-a}
  \textstyle \dN(\frac{q}{2})\,\bb E\, a^q = \frac{2^{\frac{q}{2}}}{\sqrt \pi\,\qbin^q}\, \|\bs u -
  \bs v\|^q\,\Gamma(\frac{q+1}{2}), 
\end{equation}
so that $\cN \bb E a = \sqrt{\frac{2}{\pi}}
\alpha$ and\footnote{Bounding $\bb E (a^2 - 1)_+$ more
  tightly is possible but this leads later to negligible improvements
  in our study.} $\tinv{N} \bb E (a^2 - 1)_+ \leq \tinv{N} \bb E a^2 = \alpha^2$ which leads to
$$
\textstyle \max \big ( \alpha^2, \sqrt{\frac{2}{\pi}}\,\alpha\big)\ \leq\ \bb E X_j^2 \ \leq\
\sqrt{\frac{2}{\pi}}\,\alpha + \alpha^2. 
$$
For higher moments, using $\bb E X_j^q = \bb E(\bb E (X_j^q|r_j))$ and
following the same techniques as above,
\eqref{eq:weak-buffon-upper-bound} and~\eqref{eq:moment-rho-a} provide the following upper bound
$$
\textstyle \textstyle \bb E X_j^q\ \leq\ \sqrt{\frac{2}{\pi}}\,\alpha +  \frac{2^{\frac{3}{2}q-2}}{\sqrt \pi}\, \alpha^q\,\Gamma(\frac{q+1}{2})
+\ \frac{2^{\frac{3}{2} q- \frac{5}{2}}}{\sqrt \pi}\,\alpha^{q-1}\,q\,\Gamma(\frac{q}{2}).
$$
\end{proof}

In the last proposition, we can also get rid of the $\Gamma$ functions by invoking the relation
$\Gamma(x+\tinv{2}) \leq \sqrt{x}\,\Gamma(x)$ \cite{wendel1948note}
whose recursive application provides $\Gamma(\frac{p+1}{2}) \leq
2^{\frac{1-p}{2}} \sqrt{\pi} \sqrt{p!}$. Using this we find,
for $q>2$,
\begin{equation}
  \label{eq:q-moment-abs-dif-quant-without-gamma}
\textstyle \bb E X_j^q\ \leq \sqrt{\frac{2}{\pi}}\,\alpha +  2^{q-\frac{3}{2}}\,\alpha^q\, \sqrt{q!}
+\ 2^{q- \frac{3}{2}}\,\alpha^{q-1}\,q\,\sqrt{(q-1)!}.
\end{equation}

Having delineated the behavior of the moments of each $X_j$, we can now study
their concentration properties. This is achieved from the Bernstein
inequality using a formulation from
\cite[p. 24]{massart2007concentration} that suits the rest of our developments.  
\begin{theorem}[Bernstein's inequality \cite{massart2007concentration}] 
\label{thm:bernstein}
Let $V_1,\cdots, V_M$ be
independent real valued random variables. Assume that there exist
some positive numbers $v$ and $\beta$ such that
\begin{equation}
  \label{eq:bernstein-cond-var}
  \textstyle\sum_{j=1}^M\ \bb E V_j^2 \leq v  
\end{equation}
and for all integers $k \geq 3$
\begin{equation}
  \label{eq:bernstein-cond-moment}
  \textstyle\sum_{j=1}^M\ \bb E V_j^k \leq \tinv{2}\,k!\,\beta^{k-2}\,v.  
\end{equation}
Then, for every positive $x$,
\begin{equation}
  \label{eq:bernstein-result}
\textstyle \bb P[\,\big|\sum_{j=1}^M\,(V_j - \bb E V_j) \big|\ \geq\ \sqrt{2 v x} + \beta x\,] \leq 2 e^{-x}.  
\end{equation}
\end{theorem}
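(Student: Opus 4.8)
The plan is to use the classical Cram\'er--Chernoff exponential-moment method, in the form that leads directly to a sub-gamma tail bound. Write $S := \sum_{j=1}^M (V_j - \bb E V_j)$ and introduce a free parameter $\lambda \in (0,1/\beta)$; the whole argument reduces to controlling $\log \bb E\, e^{\lambda S}$ and then optimising over $\lambda$.

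First I would bound the log-Laplace transform of a single summand. From the expansion $e^{\lambda V_j} - 1 - \lambda V_j = \sum_{k\ge 2}\lambda^k V_j^k/k!$ and the inequality $\log(1+t)\le t$,
$$
\log \bb E\, e^{\lambda(V_j-\bb E V_j)}\ =\ \log \bb E\, e^{\lambda V_j} - \lambda \bb E V_j\ \le\ \bb E\, e^{\lambda V_j} - 1 - \lambda \bb E V_j\ =\ \sum_{k\ge 2}\frac{\lambda^k}{k!}\,\bb E V_j^k .
$$
Summing over $j$ (using independence, so that $\log \bb E\, e^{\lambda S}=\sum_j \log \bb E\, e^{\lambda(V_j-\bb E V_j)}$), bounding the $k=2$ term by $\tfrac{v\lambda^2}{2}$ via \eqref{eq:bernstein-cond-var}, bounding the $k\ge 3$ terms via \eqref{eq:bernstein-cond-moment}, and summing the resulting geometric series (which converges because $\lambda\beta<1$), I expect to obtain
$$
\log \bb E\, e^{\lambda S}\ \le\ \frac v2\sum_{k\ge 2}\lambda^k\beta^{k-2}\ =\ \frac{v\lambda^2}{2(1-\lambda\beta)}\ =:\ \psi(\lambda),\qquad 0\le\lambda<\tinv\beta .
$$

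Next I would apply Markov's inequality to $e^{\lambda S}$, which gives $\bb P[S\ge t]\le\exp(\psi(\lambda)-\lambda t)$ for every $t>0$, and then minimise the right-hand side over $\lambda\in(0,1/\beta)$, obtaining $\bb P[S\ge t]\le e^{-\psi^*(t)}$ with $\psi^*$ the Legendre transform of $\psi$. Here $\psi^*(t)=(v/\beta^2)\,h(\beta t/v)$ where $h(u)=1+u-\sqrt{1+2u}$ is increasing with explicit inverse $h^{-1}(w)=w+\sqrt{2w}$; hence $\psi^*(t)\ge x$ is equivalent to $t\ge\beta x+\sqrt{2vx}$, which yields the one-sided bound $\bb P[S\ge\sqrt{2vx}+\beta x]\le e^{-x}$. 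Running the same three steps with each $V_j$ replaced by $-V_j$ controls the lower tail identically, and a union bound gives \eqref{eq:bernstein-result}.

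The only step I expect to require real care is the Cram\'er-transform computation in the last stage: carrying out the minimisation over $\lambda$ and repackaging the rate function into the clean form $\sqrt{2vx}+\beta x$ (most conveniently through the identity $h^{-1}(w)=w+\sqrt{2w}$). A minor additional point is the lower-tail argument, which needs the moment hypotheses to survive the substitution $V_j\mapsto-V_j$; for odd $k$ this asks for a bound on $-\bb E V_j^k$, which is automatic when the $V_j$ are nonnegative, as in the intended application where $V_j=X_j\ge 0$, since then all these moments are nonnegative. The single-summand bound, the geometric summation, and the Markov step are all routine.
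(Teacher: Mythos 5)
The paper offers no proof of this statement: Theorem~\ref{thm:bernstein} is imported verbatim from Massart's monograph \cite{massart2007concentration}, so there is nothing internal to compare against. Your argument is correct and is essentially the standard proof of that cited result: the bound $\log\bb E e^{\lambda(V_j-\bb E V_j)}\le \bb E e^{\lambda V_j}-1-\lambda\bb E V_j$, the moment-by-moment control giving $\log\bb E e^{\lambda S}\le v\lambda^2/(2(1-\lambda\beta))$, and the Legendre-transform inversion $h^{-1}(w)=w+\sqrt{2w}$ yielding $\bb P[S\ge\sqrt{2vx}+\beta x]\le e^{-x}$ are exactly the sub-gamma calculus behind Massart's Proposition 2.9. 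Two remarks. First, you correctly identified the one real subtlety: Massart's hypothesis is on $\sum_j\bb E(V_j)_+^k$ and his conclusion is one-sided, whereas the statement here assumes only $\sum_j\bb E V_j^k\le\tinv{2}k!\beta^{k-2}v$ and claims a two-sided bound with the factor $2$; as written this does not hold for general signed $V_j$ (the lower tail needs the same hypotheses for $-V_j$), but it is harmless in the paper's application where $V_j=X_j\ge 0$, so all odd moments of $-V_j$ are nonpositive and the reflected hypotheses are automatic. Second, your term-by-term interchange $\bb E[e^{\lambda V}-1-\lambda V]=\sum_{k\ge2}\lambda^k\bb E V^k/k!$ tacitly requires absolute convergence (e.g.\ $\bb E e^{\lambda|V|}<\infty$), which again is guaranteed for nonnegative $V_j$ under \eqref{eq:bernstein-cond-moment}; for general $V_j$ one would instead use the pointwise inequality $e^u-u-1\le u^2/2+\sum_{k\ge3}(u)_+^k/k!$, which is how Massart avoids the issue. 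Neither point affects the validity of the theorem as it is used in this paper.
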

Notice that setting $x=M\epsilon^2$ in
\eqref{eq:bernstein-result} with $\epsilon\geq 0$, we get:
\begin{equation}
  \label{eq:bernstein-result-with epsilon}
\textstyle \bb P\big [\,|\tinv{M}\sum_{j=1}^M (V_j - \bb E V_j)| \geq \sqrt{\frac{2}{M} v}\,\epsilon +
\beta \epsilon^2\,\big] \leq 2
e^{-\epsilon^2 M}.  
\end{equation}
This is the formulation that we use in the rest of this paper. From \eqref{eq:bernstein-result-with epsilon}, we must
focus our attention on the evolution of $\sqrt{2v/M}\,\epsilon +
\beta\epsilon^2$ once $v$ and $\beta$ are adjusted to the bounds
of $\bb E X_j^q$. From \eqref{eq:2nd-moment-abs-dif-quant}, we already know that 
\begin{equation}
  \label{eq:tmp-bound-sum-exj-sq}
\textstyle\sum_{j=1}^M \bb E X_j^2 \leq \ M\ \big(
\sqrt{\frac{2}{\pi}}\,\alpha + \alpha^2), 
\end{equation}
and from \eqref{eq:q-moment-abs-dif-quant-without-gamma} and for $q\geq 3$,
\begin{equation}
  \label{eq:tmp-bound-sum-exj-pow}
\textstyle \sum_{j=1}^M \bb E X_j^q \textstyle\leq\ 
\sqrt{\frac{2}{\pi}}\,M\alpha +  M\,(2^{q-\frac{3}{2}}\,\alpha^q\, \sqrt{q!} +\ 2^{q- \frac{3}{2}}\,\alpha^{q-1}\,q\,\sqrt{(q-1)!}).
\end{equation}
For simplifying our analysis, let us conveniently analyze two cases: a \emph{coarse quantization}
where $\alpha=\inv{\qbin}\|\bs u - \bs v\| < 1$ and a \emph{fine
  quantization} where $\alpha \geq 1$. Under \emph{coarse quantization} and for $q\geq
3$, \eqref{eq:tmp-bound-sum-exj-pow} provides
\begin{equation}
\textstyle \sum_{j=1}^M \bb E X_j^q \textstyle\leq\ 
\sqrt{\frac{2}{\pi}}\,M\alpha +  q!\,M\,2^{q-2}\,\alpha\,(1
+ \inv{\sqrt{3}}) \leq \tinv{2}\,q!\ 2^{q-2}\ M\,\alpha\,(\frac{\sqrt 2}{6\sqrt \pi} + 2(1 +
\inv{\sqrt 3})),
\end{equation}
while \eqref{eq:tmp-bound-sum-exj-sq} leads to
$$
\textstyle\sum_{j=1}^M \bb E X_j^2 \leq \ \big(
\sqrt{\frac{2}{\pi}} + 1) M \alpha < 2 M \alpha.
$$
Therefore, since $(\frac{\sqrt 2}{6\sqrt \pi} + 2(1 +
\inv{\sqrt 3})) < 4$, taking $v/M = 4$ and $\beta = 2$, we satisfy the
two Bernstein conditions\footnote{We could set
  $v/M = 4 \alpha$ but we found that this tighter choice complicates the
  presentation of the final
  concentration results.}. 
Under \emph{fine quantization} (\ie $\alpha \geq 1$),
\eqref{eq:tmp-bound-sum-exj-sq} gives now $\textstyle\sum_{j=1}^M \bb E X_j^2 \leq \ M\ (
\sqrt{\frac{2}{\pi}} + 1)\,\alpha^2$,
and, from \eqref{eq:q-moment-abs-dif-quant-without-gamma} and $q\geq
3$,
\begin{align*}
\textstyle \sum_{j=1}^M \bb E X_j^q&\textstyle\leq\ 
\sqrt{\frac{2}{\pi}}\,M\alpha +  M(2^{q-\frac{3}{2}}\,\alpha^q\,
                                     \sqrt{q!} +\ 2^{q- \frac{3}{2}}\,\alpha^{q-1}\,q\,\sqrt{(q-1)!})\\
&\textstyle\leq\ 
\sqrt{\frac{2}{\pi}}\,M\alpha^q +  M(2^{q-\frac{3}{2}}\,\alpha^q\,
  \sqrt{q!}+\ 2^{q- \frac{3}{2}}\,\alpha^{q}\,q\,\sqrt{(q-1)!})\\
&\textstyle\leq\ \tinv{2}\,q!\ (2\alpha)^{q-2}\ M\,\alpha^2\,(\frac{\sqrt 2}{6\sqrt \pi} + 2(1 +
\inv{\sqrt 3}))\\
&\textstyle < \tinv{2}\,q!\ (2\alpha)^{q-2}\ M\,(2\alpha)^2,
\end{align*}
We see that taking $v/M = 4\alpha^2$ and $\beta =
2\alpha$ is compatible with both inequalities. 
\medskip

Consequently, we can state that
$\sqrt{v/M} = O(1 + \alpha)$ and $\beta = O(1 + \alpha)$ around any
value of~$\alpha$. Therefore, if $0<\epsilon<\epsilon_0$ for some fixed
value $\epsilon_0>0$, 
\begin{equation}
  \label{eq:unique-upper-bound-2vm-p-beta}
  \exists\ c,c'>0\quad \text{such that}\quad \sqrt{2v/M}\epsilon +
  \beta\epsilon^2\ <\ (c + c' \alpha)\,\epsilon.  
\end{equation}

Let us cook now the
first important result concerning our
mapping~$\bs \mapf_{\qbin}$. 
\begin{proposition}
\label{prop:final-quant-embed-prop}
Fix $\epsilon_0>0$, $0< \epsilon\leq \epsilon_0$ and $\qbin>0$. There exist two values
$c,c'>0$ only depending on $\epsilon_0$ such that, for
$\bs\Phi \sim \cl N^{M\times N}(0,1)$ and $\bs \xi\sim \cl U^M([0, \qbin])$, both determining the mapping
$\bs\mapf_\qbin$ in \eqref{eq:quantiz-mappin-def-with-dithering}, and for $\bs u, \bs v\in\Rbb^N$, 
\begin{equation}
  \label{eq:expectation-abs-dif-quant}
\textstyle (1 - c\epsilon)\,\|\bs u - \bs v\|\,-\, c'\qbin\epsilon\ \leq\ \tfrac
{\sqrt \pi}{\sqrt 2 M}\,\|\bs\mapf_\qbin(\bs u) - \bs\mapf_\qbin(\bs
v)\|_1\ \leq\ 
(1 + c\epsilon)\|\bs u - \bs v\|\,+\, c'\qbin\epsilon.
\end{equation}
with probability
higher than $1 - 2 e^{-\epsilon^2M}$.
\end{proposition}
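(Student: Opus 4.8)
The plan is to apply Bernstein's inequality (Theorem~\ref{thm:bernstein}) directly to the random variables $X_j := \tinv{\qbin}\,|(\bs\mapf_\qbin(\bs u))_j - (\bs\mapf_\qbin(\bs v))_j|$, for which every needed ingredient is already in place. By Prop.~\ref{prop:quant-embed-is-buffon} the $X_j$ are independent (each being, after averaging over $r_j=\|\bs\varphi_j\|\sim\chi(N)$, a mixture of Buffon variables); by Prop.~\ref{prop:expect-abs-dif-quant} each has mean $\bb E X_j = \sqrt{\tfrac{2}{\pi}}\,\alpha$ with $\alpha = \|\bs u - \bs v\|/\qbin$; and by \eqref{eq:2nd-moment-abs-dif-quant} and \eqref{eq:q-moment-abs-dif-quant-without-gamma} their raw moments $\bb E X_j^2$ and $\bb E X_j^q$ obey exactly the factorial-type bounds required by the Bernstein hypotheses \eqref{eq:bernstein-cond-var}--\eqref{eq:bernstein-cond-moment}. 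The coarse/fine case split already performed above (namely $v/M=4,\ \beta=2$ when $\alpha<1$ and $v/M=4\alpha^2,\ \beta=2\alpha$ when $\alpha\geq 1$) verifies those two conditions, so Bernstein in the form \eqref{eq:bernstein-result-with epsilon} applies with $V_j=X_j$.

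The remaining steps are then purely mechanical. Feeding the above into \eqref{eq:bernstein-result-with epsilon} gives, with probability at least $1-2e^{-\epsilon^2 M}$,
\[
\Big|\,\tfrac{1}{M}\textstyle\sum_{j=1}^M X_j\ -\ \sqrt{\tfrac{2}{\pi}}\,\alpha\,\Big|\ <\ \sqrt{\tfrac{2v}{M}}\,\epsilon + \beta\epsilon^2\ <\ (c+c'\alpha)\,\epsilon,
\]
the last inequality being \eqref{eq:unique-upper-bound-2vm-p-beta}, valid for all $0<\epsilon\leq\epsilon_0$ with $c,c'$ depending only on $\epsilon_0$. It then suffices to unwind the notation: $\tfrac{1}{M}\sum_j X_j = \tinv{M\qbin}\,\|\bs\mapf_\qbin(\bs u)-\bs\mapf_\qbin(\bs v)\|_1$ and $\alpha\qbin = \|\bs u-\bs v\|$, so multiplying the display by $\qbin\sqrt{\pi/2}$ turns it into
\[
\Big|\,\tfrac{\sqrt\pi}{\sqrt2\,M}\,\|\bs\mapf_\qbin(\bs u)-\bs\mapf_\qbin(\bs v)\|_1\ -\ \|\bs u-\bs v\|\,\Big|\ <\ \sqrt{\tfrac{\pi}{2}}\,c'\,\epsilon\,\|\bs u-\bs v\|\ +\ \sqrt{\tfrac{\pi}{2}}\,c\,\qbin\epsilon,
\]
which is exactly \eqref{eq:expectation-abs-dif-quant} after relabelling $\sqrt{\pi/2}\,c'\to c$ and $\sqrt{\pi/2}\,c\to c'$ --- harmless, since both still depend only on $\epsilon_0$.

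I do not expect a genuine obstacle, since the hard work (the Buffon identification, the moment estimates of Prop.~\ref{prop:buffon-moment-bounds} and their $\chi(N)$-averaged consequences, and the verification of Bernstein's hypotheses) has already been carried out. The only two points that need a little care are: \emph{(i)} that the Massart form of Bernstein's inequality states its hypotheses on the \emph{raw} moments $\bb E V_j^k$, so that \eqref{eq:2nd-moment-abs-dif-quant} and \eqref{eq:q-moment-abs-dif-quant-without-gamma} --- rather than centered-moment bounds --- are precisely the right inputs; and \emph{(ii)} that the single bound \eqref{eq:unique-upper-bound-2vm-p-beta} has to cover the coarse and fine regimes simultaneously, which it does because $1+\alpha$ dominates both $\sqrt{v/M}$ and $\beta$ up to an absolute constant in either case.
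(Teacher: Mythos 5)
Your proposal is correct and follows essentially the same route as the paper: Bernstein's inequality in the form \eqref{eq:bernstein-result-with epsilon} applied to the $X_j$ with the already-verified parameters $(v,\beta)$ from the coarse/fine case split, the bound \eqref{eq:unique-upper-bound-2vm-p-beta}, the mean $\bb E X_j=\sqrt{2/\pi}\,\alpha$ from Prop.~\ref{prop:expect-abs-dif-quant}, and a final rescaling by $\qbin\sqrt{\pi/2}$ with relabelling of constants. Your two cautionary remarks (raw versus centered moments in Massart's formulation, and the single bound covering both quantization regimes) are exactly the points the paper relies on implicitly.
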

\begin{proof}
From \eqref{eq:unique-upper-bound-2vm-p-beta} and from Theorem
\ref{thm:bernstein}, we know that there exist two values
$c,c'>0$ such that 
$$
\textstyle \bb P\big [\,|\tinv{M}\sum_{j=1}^M (X_j - \bb E X_j)| \geq
(c + c' \alpha) \epsilon\,\big] \leq 2
e^{-\epsilon^2 M}.  
$$
Therefore, since 
\begin{equation}
\textstyle X_j = \tinv{\qbin}\,|(\bs\mapf_\qbin(\bs u))_j  -
(\bs\mapf_\qbin(\bs v))_j|\ = \tinv{\qbin}\,|\cl Q(\bs\varphi_j^T \bs u +
\xi_j) - \cl Q(\bs\varphi_j^T \bs v + \xi_j)|,
\end{equation}
with $\bb E X_j = \sqrt{\frac{2}{\pi}}\,\alpha$, we find 
$$
\textstyle 
\sqrt{\frac{2}{\pi}}(1 - c'\epsilon)
\alpha - c\epsilon 
\leq \tinv{M}\sum_{j=1}^M X_j \leq \sqrt{\frac{2}{\pi}}(1 + c'\epsilon)
\alpha + c\epsilon,
$$
with probability exceeding $1-2e^{-\epsilon^2M}$ which provides the result.  
\end{proof}

Finally, this last proposition provides the main result of this paper.
{
\newcounter{tmp}
\setcounter{tmp}{\value{proposition}}
\setcounter{proposition}{1}
\begin{proposition}
Let $\cl S \subset \Rbb^N$ be a set of $S$ points. Fix $0<\epsilon<1$
and $\qbin >0$. 
For $M > M_0 = O(\epsilon^{-2}\log S)$, there exist a non-linear mapping $\bs \mapf:\Rbb^N\to \ZbbMqbin$ and two
constants $c,c'>0$ such that, for all pairs $\bs u,\bs v\in \cl S$,
\begin{equation}
  \textstyle (1 - \epsilon)\,\|\bs u - \bs
  v\|\,-\,c\,\qbin\,\epsilon\ \leq\ \frac{c'}{M} \|\bs \mapf(\bs u) -
  \bs \mapf(\bs v)\|_1\ \leq\ 
(1 + \epsilon)\|\bs u - \bs v\|\,+\,c\,\qbin\,\epsilon.\tag{\ref{eq:quantiz-jl-lemma}}
\end{equation}  
\end{proposition}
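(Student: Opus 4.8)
The plan is to obtain Proposition~\ref{prop:quantiz-jl-lemma} from the single-pair concentration estimate already established in Proposition~\ref{prop:final-quant-embed-prop}, by the same union-bound argument used for the classical JL Lemma in the Introduction. As candidate mapping I take the dithered quantized random projection $\bs\mapf_\qbin(\bs x) = \cl Q_\qbin(\bs\Phi\bs x + \bs\xi)$ of \eqref{eq:quantiz-mappin-def-with-dithering}, with $\bs\Phi\sim\cl N^{M\times N}(0,1)$ and $\bs\xi\sim\cl U^M([0,\qbin])$, and I read off the normalization $c' = \sqrt{\pi}/\sqrt{2}$ so that $\tfrac{c'}{M}\|\bs\mapf_\qbin(\bs u)-\bs\mapf_\qbin(\bs v)\|_1$ coincides with the central quantity of Proposition~\ref{prop:final-quant-embed-prop}. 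All the analytic content behind that proposition — the identification of the per-coordinate statistic with a Buffon random variable (Proposition~\ref{prop:quant-embed-is-buffon}), the moment bounds of Proposition~\ref{prop:buffon-moment-bounds}, and the Bernstein step — is assumed as given.

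First I would fix $\epsilon_0 = 1$ and let $c_0, c_0'$ be the (now absolute) constants supplied by Proposition~\ref{prop:final-quant-embed-prop}; enlarging $c_0$ if necessary we may assume $c_0\geq 1$. Applying that proposition with the reduced tolerance $\tilde\epsilon := \epsilon/c_0 \in (0,1]$ (valid since $0<\epsilon<1$) turns its multiplicative factor $(1\pm c_0\tilde\epsilon)$ into the desired $(1\pm\epsilon)$ and its additive term $c_0'\qbin\tilde\epsilon$ into $c\,\qbin\,\epsilon$ with $c := c_0'/c_0$, giving, for any single pair $\bs u,\bs v\in\cl S$,
\[
(1-\epsilon)\|\bs u-\bs v\| - c\,\qbin\,\epsilon \ \leq\ \tfrac{\sqrt{\pi}}{\sqrt{2}\,M}\|\bs\mapf_\qbin(\bs u)-\bs\mapf_\qbin(\bs v)\|_1 \ \leq\ (1+\epsilon)\|\bs u-\bs v\| + c\,\qbin\,\epsilon,
\]
with failure probability at most $2e^{-\tilde\epsilon^2 M} = 2e^{-\epsilon^2 M/c_0^2}$. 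Then I take a union bound over the at most $\binom{S}{2}\leq S^2/2$ pairs of $\cl S$: the probability that this inequality fails for at least one pair is bounded by $2\binom{S}{2}e^{-\epsilon^2 M/c_0^2} \leq \exp(2\log S - \epsilon^2 M/c_0^2)$. Consequently, as soon as $M > M_0 := 2c_0^2\,\epsilon^{-2}\log S = O(\epsilon^{-2}\log S)$, this probability is strictly below $1$, so there exists a realization of $(\bs\Phi,\bs\xi)$ for which \eqref{eq:quantiz-jl-lemma} holds simultaneously for all pairs; fixing such a realization yields the non-linear mapping $\bs\mapf$. Drawing an i.i.d.\ sequence of such pairs $(\bs\Phi,\bs\xi)$ drives the failure probability to zero geometrically, which upgrades the conclusion to existence with probability one, as announced after \eqref{eq:quantiz-map-first}.

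I do not expect a genuine obstacle here: the work is entirely front-loaded into Proposition~\ref{prop:final-quant-embed-prop}. The only point needing a little care is the bookkeeping of constants through the rescaling $\epsilon\mapsto\epsilon/c_0$, so that the final inequality displays clean $(1\pm\epsilon)$ multiplicative factors and an additive term of the exact form $c\,\qbin\,\epsilon$, at the price of an $O(1)$ blow-up (the factor $c_0^2$) inside the threshold $M_0 = O(\epsilon^{-2}\log S)$ — precisely the same constant-tracking performed in the classical JL proof when passing from $\eta(\epsilon)^{-1}$ to $\epsilon^{-2}$. One should also note that $\bs\mapf$ is indeed valued in $\ZbbMqbin$ by construction of $\cl Q_\qbin$, and that the claimed constants $c,c'$ depend only on the absolute choice $\epsilon_0=1$, hence are universal.
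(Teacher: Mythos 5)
Your proposal is correct and follows essentially the same route as the paper: apply Proposition~\ref{prop:final-quant-embed-prop} to the dithered map $\bs\mapf_\qbin$ after rescaling $\epsilon$ to absorb the constant in front of the multiplicative distortion, union-bound over the $\binom{S}{2}\leq S^2/2$ pairs, and boost the success probability to $1$ by repeated independent draws of $(\bs\Phi,\bs\xi)$. Your explicit bookkeeping (fixing $\epsilon_0=1$, $\tilde\epsilon=\epsilon/c_0$, $M_0=2c_0^2\epsilon^{-2}\log S$) is just a cleaner rendering of the paper's ``change of variable $c\epsilon\to\epsilon$'' step.
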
}
\setcounter{proposition}{\value{tmp}}

\begin{proof} 
The proof proceeds first by simplifying
\eqref{eq:expectation-abs-dif-quant} in
Prop.~\ref{prop:final-quant-embed-prop} 
with the change of variable
$c\epsilon\to\epsilon$ and with $\epsilon_0$ high enough so that
$0<\epsilon<1$ after this rescaling. Next, we follow the classical
proof of the Johnson-Lindenstrauss Lemma
\cite{johnson1984extensions,dasgupta99elementary} already sketched in the
Introduction. Given the mapping $\bs\mapf_\qbin$ associated to $\bs \Phi \sim \cl N^{M\times N}(0,1)$ and $\bs \xi$ through
\eqref{eq:quantiz-mappin-def-with-dithering}, and considering the ${S
  \choose 2}\leq S^2/2$ possible pairs of vectors
in $\cl S$, we apply a standard union
bound argument for jointly satisfying the inequality
\eqref{eq:bernstein-result-with epsilon} for all such pairs. If $M > M_0 = 2\epsilon^{-2} \log
S = O(\epsilon^{-2} \log
S)$, then $2\log S - \epsilon^2 M < 0$ and the global probability of
success is higher than $1 -
\exp(2\log S - \epsilon^2 M)>0$. 
Moreover, this probability can be
arbitrarily boosted close to 1 by repeating the random generation of
$\bs\mapf_\qbin$, considering then the event that at least one of the
generated mappings will satisfy \eqref{eq:quantiz-jl-lemma}. This
shows the existence of $\bs\mapf$ with probability 1, in the limit of
an increasingly large sequence of mappings.  
\end{proof}

\section{Towards an $\ell_2/\ell_2$ quantized embedding}
\label{sec:extensions}

We could ask ourselves if there exists another form of the quantized embedding given
in Prop.~\ref{prop:quantiz-jl-lemma}, one that involves only the use of
$\ell_2$-distances for both a set $\cl S \subset \Rbb^N$ and its image
in $\ZbbMqbin$. The expected asymptotic case would be obvious: in the
limit where $\qbin$ vanishes, the standard JL Lemma should be
recovered. 

Unfortunately, such an appealing result seems hard to reach with the
mathematical tools developed in this work. Instead, we are able to 
show the existence of a mapping $\bs \mapf$ that is ``close'' to this
situation in the sense that the $\ell_2$-distance in $\Rbb^N$ is
actually distorted by a non-linear function whose action is 
mostly perceptible when $\qbin$ is high with respect to the pairwise distance
of the embedded points. Noticeably, the additive distortion of the
mapping decays
also more slowly with $M$, \ie like $O((\log S/M)^{1/4})$, than for
the $\ell_2/\ell_1$ quasi-isometric mapping
of Prop.~\ref{prop:quantiz-jl-lemma}.

\begin{proposition}
\label{prop:quantiz-jl-lemma-l2-l2}
Let $\cl S \subset \Rbb^N$ be a set of $S=|\cl S|$ points and fix $0<\epsilon<1$. 
For $M > M_0 = O(\tinv{\epsilon^2}\log S)$, there exist a non-linear mapping $\bs \mapf:\Rbb^N\to \ZbbMqbin$ and one
constant $c>0$ such that, for all pairs $\bs u,\bs v\in \cl S$,
\begin{equation}
  \label{eq:quantiz-jl-lemma-l2-l2}
  \textstyle (1 - \epsilon)\,g_\qbin(\|\bs u - \bs
  v\|)\,-\,c\,\qbin\,\sqrt{\epsilon}\ \leq\ \frac{1}{\sqrt M} \|\bs
  \mapf(\bs u) - \bs \mapf(\bs v)\|\ \leq\ 
(1 + \epsilon)\,g_\qbin(\|\bs u - \bs
  v\|)\,+\,c\,\qbin\,\sqrt\epsilon,
\end{equation}  
for a certain non-linear function $g_\qbin(\lambda)$ such that $|g_\qbin(\lambda) - \lambda|=O(\sqrt{\qbin\lambda})$ for
$\lambda \gg \qbin$ and $|g_\qbin(\lambda) - (\sqrt
  2\lambda/\sqrt \pi)^{1/2}|=O(\lambda)$ for $\lambda < \qbin$.
\end{proposition}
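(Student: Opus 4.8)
The plan is to replay the proof of Prop.~\ref{prop:final-quant-embed-prop}, but with Bernstein's inequality (Theorem~\ref{thm:bernstein}) applied to the \emph{squared} coordinates $V_j:=X_j^2$ of the dithered mapping $\bs\mapf_\qbin$ from \eqref{eq:quantiz-mappin-def-with-dithering}, and then to pass to the $\ell_2$-norm by a square-root step, since $\tfrac{1}{\sqrt M}\|\bs\mapf_\qbin(\bs u)-\bs\mapf_\qbin(\bs v)\|=\qbin\big(\tfrac1M\sum_{j=1}^M X_j^2\big)^{1/2}$ with $X_j:=\tfrac1\qbin|(\bs\mapf_\qbin(\bs u))_j-(\bs\mapf_\qbin(\bs v))_j|$. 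Write $\alpha:=\|\bs u-\bs v\|/\qbin$. By Prop.~\ref{prop:quant-embed-is-buffon} the $X_j$ are i.i.d., so $m_2:=\bb E X_j^2$ is a function of $\alpha$ (and $N$) alone, and I would take as target $g_\qbin(\lambda):=\qbin\,(\bb E X_j^2)^{1/2}$ evaluated at $\alpha=\lambda/\qbin$. By \eqref{eq:2nd-moment-abs-dif-quant},
\[
\max\!\big(\tfrac{\sqrt2}{\sqrt\pi}\,\qbin\lambda,\ \lambda^2\big)\ \leq\ g_\qbin(\lambda)^2\ \leq\ \tfrac{\sqrt2}{\sqrt\pi}\,\qbin\lambda+\lambda^2 ,
\]
and both claimed asymptotics follow at once: for $\lambda\gg\qbin$, $\lambda\leq g_\qbin(\lambda)\leq\lambda\sqrt{1+\tfrac{\sqrt2}{\sqrt\pi}\qbin/\lambda}$, so $0\leq g_\qbin(\lambda)-\lambda=O(\qbin)=O(\sqrt{\qbin\lambda})$; while for $\lambda<\qbin$, setting $h:=(\tfrac{\sqrt2}{\sqrt\pi}\qbin\lambda)^{1/2}$ gives $h\leq g_\qbin(\lambda)\leq\sqrt{h^2+\lambda^2}\leq h+\lambda^2/(2h)=h+O(\lambda)$.

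Next I would extract the Bernstein parameters for $V_j=X_j^2$. Using \eqref{eq:q-moment-abs-dif-quant-without-gamma} with $q=2k$, together with $(2k)!\leq4^k(k!)^2$ (hence $\sqrt{(2k)!}\leq2^k k!$) and $\alpha,\alpha^{2k-1},\alpha^{2k}\leq(1+\alpha^2)^k$, one obtains $\bb E V_j^k=\bb E X_j^{2k}\leq C^k\,k!\,(1+\alpha^2)^k$ for an absolute constant $C$ and all $k\geq2$; the case $k=2$ also controls $\sum_j\bb E V_j^2$. Thus $v:=2C^2M(1+\alpha^2)^2$ and $\beta:=C(1+\alpha^2)$ satisfy \eqref{eq:bernstein-cond-var}--\eqref{eq:bernstein-cond-moment}, and \eqref{eq:bernstein-result-with epsilon} yields, for $0<\epsilon\leq\epsilon_0$ and with probability at least $1-2e^{-\epsilon^2M}$,
\[
\Big|\tfrac{1}{M}\textstyle\sum_{j=1}^M X_j^2-m_2\Big|\ \leq\ \sqrt{2v/M}\,\epsilon+\beta\epsilon^2\ \leq\ c_1(1+\alpha^2)\,\epsilon
\]
for some $c_1>0$ depending only on $\epsilon_0$. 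Since $\alpha^2\leq m_2$ by \eqref{eq:2nd-moment-abs-dif-quant}, the right-hand side is $\leq c_1\epsilon+c_1\epsilon\,m_2$, hence on this event $(1-c_1\epsilon)m_2-c_1\epsilon\leq\tfrac1M\sum_j X_j^2\leq(1+c_1\epsilon)m_2+c_1\epsilon$. Multiplying by $\qbin^2$ and taking square roots — via $\sqrt{A+B}\leq\sqrt A+\sqrt B$, $\sqrt{(A-B)_+}\geq\sqrt A-\sqrt B$ for $A,B\geq0$, and $1-t\leq\sqrt{1-t}$, $\sqrt{1+t}\leq1+t$ — I would reach
\[
(1-c_1\epsilon)\,g_\qbin(\|\bs u-\bs v\|)-\sqrt{c_1}\,\qbin\sqrt\epsilon\ \leq\ \tfrac{1}{\sqrt M}\|\bs\mapf_\qbin(\bs u)-\bs\mapf_\qbin(\bs v)\|\ \leq\ (1+c_1\epsilon)\,g_\qbin(\|\bs u-\bs v\|)+\sqrt{c_1}\,\qbin\sqrt\epsilon ,
\]
the lower bound being vacuous when $(1-c_1\epsilon)m_2-c_1\epsilon<0$, since then $(1-c_1\epsilon)g_\qbin(\|\bs u-\bs v\|)<\sqrt{c_1}\,\qbin\sqrt\epsilon$.

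The endgame is then identical to that of Prop.~\ref{prop:quantiz-jl-lemma}. After rescaling $c_1\epsilon\to\epsilon$ (legitimate once $\epsilon_0$ is chosen so that $c_1\epsilon_0<1$), the last display becomes \eqref{eq:quantiz-jl-lemma-l2-l2} for a single fixed pair, with a universal $c>0$ multiplying $\qbin\sqrt\epsilon$ and failure probability $\leq2e^{-\epsilon^2M/c_1^2}$; a union bound over the $ {S \choose 2}\leq S^2/2$ pairs of $\cl S$ keeps the global failure probability below $\exp(2\log S-\epsilon^2M/c_1^2)<1$ as soon as $M>M_0:=2c_1^2\,\epsilon^{-2}\log S=O(\epsilon^{-2}\log S)$; and redrawing $(\bs\Phi,\bs\xi)$ repeatedly until one instance works for all pairs produces $\bs\mapf$ with probability $1$. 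Inverting $M=O(\epsilon^{-2}\log S)$ gives $\epsilon=O(\sqrt{\log S/M})$, so the additive term is $O\big(\qbin(\log S/M)^{1/4}\big)$, as announced.

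I expect the square-root step to be the main obstacle. Contrary to the $\ell_1$-case of Prop.~\ref{prop:final-quant-embed-prop}, Bernstein here controls $\tfrac1M\sum_j X_j^2$, whose fluctuation $c_1(1+\alpha^2)\epsilon$ is only \emph{additive} relative to $m_2$ in the coarse regime $\alpha\lesssim1$; taking the square root then halves the exponent of $\epsilon$ — producing the slower $O(\qbin\sqrt\epsilon)$ additive distortion — and, since $m_2=\bb E X_j^2$ has no closed form but is only framed by the two-sided bound \eqref{eq:2nd-moment-abs-dif-quant}, it forces the non-linear reparameterization $g_\qbin$ in place of the identity. The only other point needing care is checking that $\bb E X_j^{2k}$ grows like $k!$ and not $(2k)!$ in $k$, so that $V_j=X_j^2$ meets Bernstein's moment condition — which is exactly what $\sqrt{(2k)!}\leq2^k k!$ buys when applied to \eqref{eq:q-moment-abs-dif-quant-without-gamma}.
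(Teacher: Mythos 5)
Your proposal is correct and follows essentially the same route as the paper: apply Bernstein to $V_j=X_j^2$, bound $\bb E X_j^{2k}$ via the moment estimates of Sec.~\ref{sec:quantized-embeddings} to get $v,\beta=O(M(1+\alpha^2)^2),O(1+\alpha^2)$, absorb the $\alpha^2$ term multiplicatively using $\alpha^2\leq \bb E X_j^2=g^2(\alpha)$, take square roots, and finish with the standard union bound. The only (cosmetic) differences are that you control the factorial growth through $\sqrt{(2k)!}\leq 2^k k!$ applied to \eqref{eq:q-moment-abs-dif-quant-without-gamma} rather than the paper's $\Gamma(q+\tfrac12)\leq q!/\sqrt2$ applied to \eqref{eq:q-moment-abs-dif-quant}, and that you use a single unified bound in $(1+\alpha^2)$ instead of the paper's explicit coarse/fine case split.
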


For reasons that will become clear below, the function $g_\qbin$ is actually defined by 
\begin{equation}
  \label{eq:g-def}
\textstyle  g_\qbin(\lambda) := \qbin
g(\frac{\lambda}{\qbin}),\quad \textstyle g(\lambda) := (\bb E X^2_\lambda)^{1/2},  
\end{equation}
with the random mixture $X_\lambda \sim \buf(r\lambda,N)$ and
$r\sim \chi(N)$. Using
\eqref{eq:2nd-moment-abs-dif-quant}, we know that 
$\max(\sqrt{\tfrac{2}{\pi}}\lambda,\,\lambda^2)\ \leq\ g^2(\lambda)\ \leq\
\sqrt{\tfrac{2}{\pi}}\,\lambda + \lambda^2$, which provides the
asymptotic properties of $g_\qbin$ from
$$
\textstyle\max\big( (\sqrt{\tfrac{2}{\pi}}\qbin\lambda)^{1/2}, \lambda\big)\ \leq\ g_\qbin(\lambda)\ \leq\
(\sqrt{\tfrac{2}{\pi}}\qbin\lambda)^{1/2} + \lambda.
$$

Because of the action of $g_\qbin$, $\bs \mapf$ in Prop.~\ref{prop:quantiz-jl-lemma-l2-l2} does not
provide an $\ell_2/\ell_2$ quasi-isometric embedding of $\cl S$ in
$\ZbbMqbin$.  We are only close to this situation if the smallest pairwise distance $\nu_{\cl S}$ in $\cl
S$ defined in
\eqref{eq:nu-S-def} is large compared to $\qbin$.

Strictly speaking, we cannot even say that the mapping
$\bs \mapf$ in Prop.~\ref{prop:quantiz-jl-lemma-l2-l2} generates a
quasi-isometric embedding between $(\cl S, d_{\qbin})$ and
$(\bs\mapf(\cl S), \ell_2)$ with the function $d_{\qbin}(\bs u, \bs v)
= g_{\qbin}(\|\bs u - \bs v\|)$. Indeed, it is not sure if $d_\qbin$ is actually a
distance and, therefore, $(\cl S, d_{\qbin})$ is not a metric space,
which prevents us to match the basic requirements of
Def.~\ref{def:quasi-isom-def}. Nevertheless, the asymptotic behavior
of $g_\qbin$ shows that such a quasi-isometry is not far when the
pairwise distances between points of $\cl S$ are big compared to $\qbin$. 

However, we see that an ``almost''
$\ell_2/\ell_2$ quantized embedding exists between a finite set $\cl S\subset\Rbb^N$ and its
image in $\ZbbMqbin$ with multiplicative and additive embedding errors
decaying as $O(\sqrt{\log S/M})$ and $O((\log S/M)^{1/4})$,
respectively. This constitutes a striking difference with the $\ell_2/\ell_1$
quasi-isometric embedding of Prop.~\ref{prop:quantiz-jl-lemma} where
both kind of errors decay as $O(\sqrt{\log S/M})$. 

On a more practical side, we may be interested in using
Prop.~\ref{prop:quantiz-jl-lemma-l2-l2} for some numerical
applications. As explained in the
Prop.~\ref{prop:final-quant-embed-prop-l2-l2} at the end of this
section, a random construction of $\bs \mapf$ is simply provided by
\eqref{eq:quantiz-mappin-def-with-dithering} but unfortunately 
there is no known closed-form expression for
$g_\qbin$. We know only its quadratic and linear asymptotic behaviors
for large or small arguments, respectively. Despite the absence of
an explicit formula, it is probably possible to 
estimate numerically~$g_\qbin$ from \eqref{eq:g-def}. This could be done in two
steps. First, by integrating numerically
the second moment of a Buffon random variable ${\rm
  Buffon}(a,N)$ and fitting the result with a polynomial function of
$a$ with the desired level of accuracy in a certain range of
values. Second, since $a\sim \chi(N)$, by applying the law of total expectation to each term of this polynomial in $a$ using
\eqref{eq:moment-norm-normal}.

\medskip
Let us finish this section by proving
Prop.~\ref{prop:quantiz-jl-lemma-l2-l2}. The developments are quite similar to those presented in
Sec.~\ref{sec:quantized-embeddings}. They begin with the following result.
\begin{proposition}
\label{prop:final-quant-embed-prop-l2-l2}
Fix $\epsilon_0 >0$, $0< \epsilon\leq 1$ and $\qbin>0$. There exist two values
$c,c'>0$ only depending on $\epsilon_0$ such that, for $\bs\Phi \sim
\cl N^{M\times N}(0,1)$ and $\bs \xi\sim\cl U^M([0, \qbin])$ determining $\bs\mapf_{\qbin}$ in
\eqref{eq:quantiz-mappin-def-with-dithering}, and for $\bs u,\bs v\in\Rbb^N$,
\begin{multline}
  \label{eq:expectation-abs-dif-quant-l2l2}
\textstyle (1 - c\epsilon)\,g^2_\qbin(\|\bs u - \bs
v\|)\,-\,c'\qbin^2\epsilon \leq\ \tinv{M} \|\bs\mapf_{\qbin}(\bs u) -
\bs\mapf_{\qbin}(\bs v)\|^2\ \leq 
(1 + c\epsilon)\,g^2_\qbin(\|\bs u - \bs v\|)\, +\, c'\qbin^2\epsilon,
\end{multline}
with probability higher than $1 - 2 e^{-\epsilon^2M}$,
\end{proposition}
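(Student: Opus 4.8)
The plan is to reproduce the argument behind Prop.~\ref{prop:final-quant-embed-prop}, but applying Bernstein's inequality (Theorem~\ref{thm:bernstein}) to the \emph{squared} coordinates $V_j := X_j^2$ rather than to $X_j$ itself. By Prop.~\ref{prop:quant-embed-is-buffon}, conditionally on $r_j=\|\bs\varphi_j\|$ the variables $X_j=\tinv{\qbin}\bigl|(\bs\mapf_\qbin(\bs u))_j-(\bs\mapf_\qbin(\bs v))_j\bigr|$ are i.i.d.\ with $X_j\sim\buf\bigl(\tfrac{r_j}{\qbin}\|\bs u-\bs v\|,N\bigr)$ and $r_j\sim_{\rm iid}\chi(N)$. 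Hence, writing $\alpha:=\|\bs u-\bs v\|/\qbin$, the mixture $X_j$ is distributed as $X_\alpha$ in the notation of \eqref{eq:g-def}, so $\bb E V_j=\bb E X_j^2=g^2(\alpha)$ (as in Prop.~\ref{prop:expect-abs-dif-quant}); consequently $\tinv{M}\|\bs\mapf_\qbin(\bs u)-\bs\mapf_\qbin(\bs v)\|^2=\tfrac{\qbin^2}{M}\sum_{j=1}^M V_j$ is a sum of independent terms with mean $\qbin^2 g^2(\alpha)=g^2_\qbin(\|\bs u-\bs v\|)$. So the proposition reduces to a concentration estimate for $\tfrac{\qbin^2}{M}\sum_j V_j$ about this mean, and the announced probability $1-2e^{-\epsilon^2M}$ will come from \eqref{eq:bernstein-result-with epsilon} with $x=M\epsilon^2$.

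First I would check the Bernstein conditions \eqref{eq:bernstein-cond-var}--\eqref{eq:bernstein-cond-moment} for $\{V_j\}$. For every $k\geq2$ we have $\bb E V_j^k=\bb E X_j^{2k}$, and since $2k>2$ the bound \eqref{eq:q-moment-abs-dif-quant-without-gamma} applies with $q=2k$; using $(2k)!\leq4^k(k!)^2$ (so that $\sqrt{(2k)!}\leq 2^k k!$ and $2k\,\sqrt{(2k-1)!}\leq 4^k k!$ for $k\geq2$), this yields $\bb E V_j^k\leq\sqrt{2/\pi}\,\alpha+C^k k!\,\max(\alpha^{2k},\alpha^{2k-1})$ for an absolute constant $C$. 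As in Sec.~\ref{sec:quantized-embeddings}, I would then treat separately the \emph{coarse} regime $\alpha<1$, where $\max(\alpha^{2k},\alpha^{2k-1})=\alpha^{2k-1}\leq\alpha$, so $\sum_j\bb E V_j^k=O\bigl(k!\,C^k M\alpha\bigr)$ and one may take $v=O(M\alpha)$, $\beta=O(1)$; and the \emph{fine} regime $\alpha\geq1$, where $\max(\alpha^{2k},\alpha^{2k-1})=\alpha^{2k}$ and $\sqrt{2/\pi}\,\alpha\leq\alpha^{2k}$, so $\sum_j\bb E V_j^k=O\bigl(k!\,(C\alpha^2)^k M\bigr)$ and one may take $v=O(M\alpha^4)$, $\beta=O(\alpha^2)$. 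In both regimes the variance condition \eqref{eq:bernstein-cond-var} follows from the same estimate at $k=2$.

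With those parameters, \eqref{eq:bernstein-result-with epsilon} gives $\bigl|\tinv{M}\sum_j(V_j-\bb E V_j)\bigr|\leq\sqrt{2v/M}\,\epsilon+\beta\epsilon^2$, hence $\bigl|\tinv{M}\|\bs\mapf_\qbin(\bs u)-\bs\mapf_\qbin(\bs v)\|^2-g^2_\qbin(\|\bs u-\bs v\|)\bigr|\leq\qbin^2\bigl(\sqrt{2v/M}\,\epsilon+\beta\epsilon^2\bigr)$ with probability at least $1-2e^{-\epsilon^2M}$; it remains to bound this by $c\,\epsilon\,g^2_\qbin(\|\bs u-\bs v\|)+c'\qbin^2\epsilon$. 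In the coarse regime $\sqrt{2v/M}=O(\sqrt\alpha)$ and $\beta=O(1)$, so with $\alpha<1$ and $\epsilon\leq1$ the whole error is $O(\qbin^2\epsilon)$, a purely additive term $c'\qbin^2\epsilon$. In the fine regime $\sqrt{2v/M}=O(\alpha^2)$ and $\beta=O(\alpha^2)$, so with $\epsilon\leq1$ the error is $O(\qbin^2\alpha^2\epsilon)=O(\|\bs u-\bs v\|^2\epsilon)$; but \eqref{eq:2nd-moment-abs-dif-quant} gives $g^2_\qbin(\|\bs u-\bs v\|)=\qbin^2 g^2(\alpha)\geq\qbin^2\alpha^2=\|\bs u-\bs v\|^2$, so this error is absorbed into $c\,\epsilon\,g^2_\qbin(\|\bs u-\bs v\|)$. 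Taking $c,c'$ as the larger of the two constants gives \eqref{eq:expectation-abs-dif-quant-l2l2}. I expect the only delicate point to be this bookkeeping --- forcing the even moments $\bb E X_j^{2k}$ coming out of \eqref{eq:q-moment-abs-dif-quant-without-gamma} into the factorial form $\tfrac12 k!\,\beta^{k-2}v$ and tracking the powers of $\alpha$ so that $(v,\beta)$ scale as $(M\alpha,1)$ for $\alpha<1$ and as $(M\alpha^4,\alpha^2)$ for $\alpha\geq1$; everything else is a transcription of the $\ell_2/\ell_1$ argument of Sec.~\ref{sec:quantized-embeddings}, with $V_j=X_j^2$ replacing $X_j$ and $g^2_\qbin$ replacing $\sqrt{2/\pi}\,\|\bs u-\bs v\|$. (Prop.~\ref{prop:quantiz-jl-lemma-l2-l2} then follows by a union bound over the ${S\choose 2}$ pairs, exactly as in the proof of Prop.~\ref{prop:quantiz-jl-lemma}, after rescaling $c\epsilon\to\epsilon$ and taking square roots, which is where the slower $O((\log S/M)^{1/4})$ additive rate appears.)
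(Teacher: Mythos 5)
Your proposal is correct and follows essentially the same route as the paper's proof: apply Bernstein's inequality to $\tilde X_j = X_j^2$, bound $\bb E X_j^{2q}$ via \eqref{eq:q-moment-abs-dif-quant}, split into the coarse ($\alpha<1$) and fine ($\alpha\geq 1$) regimes to get $(v,\beta)$ scaling as $(O(M),O(1))$ and $(O(M\alpha^4),O(\alpha^2))$ respectively, and absorb the resulting $O(\qbin^2\alpha^2\epsilon)$ term into the multiplicative distortion using $g^2(\alpha)\geq\alpha^2$ from \eqref{eq:2nd-moment-abs-dif-quant}. The only differences are cosmetic (you reach the factorial form via $(2k)!\leq 4^k(k!)^2$ where the paper uses $\Gamma(q+\tinv 2)\leq\sqrt q\,\Gamma(q)\leq q!/\sqrt 2$, and you take $v=O(M\alpha)$ in the coarse regime where the paper simplifies to a constant $v/M$).
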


\begin{proof}
The proof requires to consider the moments of the random variable 
$\tilde X_j = X_j^{2}$ with $X_j$ defined by
\eqref{eq:quant-embed-is-buffon} and, as for Sec.~\ref{sec:quantized-embeddings}, to find reasonably small values for $v$
and $\beta$ for fulfilling \eqref{eq:bernstein-cond-var} and
\eqref{eq:bernstein-cond-moment} in Theorem~\ref{thm:bernstein} with
$V_j = \tilde X_j$. Notice that by definition of the function
$g$ above and by the equivalence \eqref{eq:quant-embed-is-buffon}, we have
\begin{equation}
  \label{eq:tilde-x-expec-sum}
\textstyle \tinv{M}\,\sum_{j=1}^M \bb E \tilde{X}_j = g^2(\alpha) =
\tinv{\qbin^2}\,g^2_\qbin(\|\bs u - \bs v\|),  
\end{equation}
for $\alpha = \|\bs u - \bs v\|/\qbin$. Moreover, \eqref{eq:2nd-moment-abs-dif-quant} provides
\begin{equation}
  \label{eq:g-qbin-bounds}
  \textstyle \max(\sqrt{\tfrac{2}{\pi}}\,\alpha,\alpha^2)\ \leq\ g^2(\alpha)\ \leq\
\sqrt{\tfrac{2}{\pi}}\,\alpha + \alpha^2,
\end{equation}
For the $q$-moments of $\tilde X_j$ with $q\geq 2$, we know
from~\eqref{eq:q-moment-abs-dif-quant} that
\begin{align}
\textstyle \bb E \tilde X_j^q&\textstyle\leq
\sqrt{\frac{2}{\pi}}\,\alpha +  \frac{2^{3q-2}}{\sqrt \pi}\, \alpha^{2q}\,\Gamma(q+\frac{1}{2})
+\ \frac{2^{3q- \frac{5}{2}}}{\sqrt \pi}\,\alpha^{2q-1}\,2q\,\Gamma(q)\nonumber\\
&\textstyle\leq
\sqrt{\frac{2}{\pi}}\,\alpha +  \frac{1}{2^{5/2}\sqrt{\pi}}\,
(2\sqrt 2\alpha)^{2q}\,q!
+\ \frac{1}{\sqrt \pi}\,(2\sqrt
2\alpha)^{2q-1}\,q!\nonumber\\
&\label{eq:mom-tilde-x-bound}
\textstyle =
\sqrt{\frac{2}{\pi}}\,\alpha +  \frac{q!}{2\sqrt{\pi}} (2\sqrt
2\alpha)^{2q-1}\,(\alpha+ 2),
\end{align}
using $\Gamma(q+\inv{2})\leq \sqrt q\, \Gamma(q) \leq q!/\sqrt 2$ for
$q\geq 2$. For coarse quantization, \ie $\alpha < 1$,
\eqref{eq:mom-tilde-x-bound} provides
\begin{align*}
\textstyle \bb E \tilde X_j^q&\textstyle\leq 
\sqrt{\frac{2}{\pi}}\,\alpha +  \frac{q!}{2\sqrt{\pi}} (2\sqrt
2)^{2q-4}(2\sqrt
2)^3\,3\alpha\\
&\textstyle\leq 
\sqrt{\frac{2}{\pi}}\,\alpha +  \tfrac{\sqrt 2}{\sqrt{\pi}} q!\, 8^{q-2}\,24\alpha\\
&\textstyle\leq \tinv{2}8^{q-2} q!\,\big(
1 + \,96\big) \frac{\sqrt 2}{2\sqrt \pi}\,\alpha < \tinv{2}8^{q-2} q!\,40\alpha
\end{align*}
Thus, we can select $v/M = 40$ and
$\beta = 8$. For fine
quantization and $\alpha > 1$, starting again from \eqref{eq:mom-tilde-x-bound}, a similar development provides 
\begin{align*}
\textstyle \bb E \tilde X_j^q&\textstyle\leq 
\sqrt{\frac{2}{\pi}}\,\alpha +  \frac{q!}{2\sqrt{\pi}} (2\sqrt
2\alpha)^{2q-4}(2\sqrt
2)^3\,3\alpha^4\\
&\textstyle\leq 
\frac{\sqrt 2}{\sqrt\pi}\,\alpha +  \tfrac{\sqrt 2}{\sqrt{\pi}} q!\, (8\alpha^2)^{q-2}\,24\alpha^4\\
&\textstyle\leq \tinv{2}(8\alpha^2)^{q-2} q!\,\big(
1 + \,96\big) \frac{\sqrt 2}{2\sqrt \pi}\,\alpha^4 < \tinv{2}(8\alpha^2)^{q-2} q!\,40\alpha^4,
\end{align*}
promoting the values $v/M = 40 \alpha^4$ and $\beta =
8\alpha^2$. 
\medskip

Consequently, gathering both quantization scenarios we have
$\sqrt{2v/M} = O(1 + \alpha^2)$ and $\beta = O(1 + \alpha^2)$ around any value of
$\alpha\geq 0$. Therefore, if $0<\epsilon<\epsilon_0$, there exist two
values $c, c'>0$ only depending on $\epsilon_0$ such
that 
$$
\sqrt{2v/M}\epsilon + \beta\epsilon^2\ \leq\ (c + c'\alpha^2)\epsilon.
$$
Applying Theorem~\ref{thm:bernstein} for this bound allows
one to state that
$$
\textstyle \bb P\big [\,|\tinv{M}\sum_{j=1}^M (\tilde X_j - \bb E
\tilde X_j)| \geq
(c + c' \alpha^2) \epsilon\,\big] \leq 2
e^{-\epsilon^2 M},  
$$
or equivalently, using \eqref{eq:tilde-x-expec-sum}, that 
$$
\textstyle \big|\frac{\qbin^2}{M}\sum_{j=1}^M \tilde X_j\ -\ g_\qbin(\|\bs u - \bs
v\|)\big|\ \leq\
(c\qbin^2 + c' \|\bs u - \bs v\|^2) \epsilon, 
$$ 
with probability exceeding $1 - e^{-\epsilon^2 M}$. Finally, using
\eqref{eq:g-qbin-bounds}, we see that with the same probability
\begin{equation*}
\textstyle (1 - c'\epsilon)\,g^2_\qbin(\|\bs u - \bs
v\|) - c\qbin^2\epsilon\ \leq\ \frac{\qbin^2}{M}\sum_{j=1}^M \tilde
X_j\ \leq\ (1 + c'\epsilon)\,g^2_\qbin(\|\bs u - \bs
v\|) + c\qbin^2\epsilon.  
\end{equation*}
\end{proof}

Given Prop.~\ref{prop:final-quant-embed-prop-l2-l2}, the proof
of Prop.~\ref{prop:quantiz-jl-lemma-l2-l2} is highly similar to the one of
Prop.~\ref{prop:quantiz-jl-lemma}. 

\begin{proof}[Proof of Prop.~\ref{prop:quantiz-jl-lemma-l2-l2}]
We first note that \eqref{eq:expectation-abs-dif-quant-l2l2} in
Prop.~\ref{prop:final-quant-embed-prop-l2-l2} is equivalent to
\begin{equation}
  \label{eq:expectation-abs-dif-quant-l2l2-alt}
\textstyle (1 - c\epsilon)\,g_\qbin(\|\bs u - \bs
v\|)\,-\,\qbin\sqrt{c'\epsilon} \leq\ \tinv{\sqrt M}
\|\bs\mapf_{\qbin}(\bs u) - \bs\mapf_{\qbin}(\bs v)\|\ \leq\ 
(1 + c\epsilon)\,g_\qbin(\|\bs u - \bs v\|)\, +\, \qbin\sqrt{c'\epsilon},
\end{equation}
using again the fact that $(a - b) \leq (a^2 - b^2)^{1/2}$ if $a>b>0$ and $(a^2+b^2)^{1/2} <
a+b$ for any $a,b>0$, and also the inequalities
$\sqrt{1-c\epsilon}\,\geq\,1-c\epsilon$ and $\sqrt{1+c\epsilon}
\,\leq\,1+c\epsilon$.  The rest of the proof is similar to the one of
Prop.~\ref{prop:quantiz-jl-lemma} in
Sec.~\ref{sec:quantized-embeddings} and we omit it for the sake of brevity.
\end{proof}

\section{Conclusion}
\label{sec:conclusion}

In this paper, we were interested in studying the behavior of the JL
Lemma when this one is combined with a uniform quantization
procedure of bin width $\qbin>0$. The main result of our study is the existence of a
(randomly constructed) $\ell_2/\ell_1$ quasi-isometric mapping between a set $\cl
S\subset \Rbb^M$ and $\ZbbMqbin$. Our proof relies on generalizing the well-known
Buffon's needle problem to an $N$-dimensional space, and in finding an
equivalence between this context and the quantization of randomly
projected pairs of points. The final observation of our analysis is
that such a mapping displays both an additive and a multiplicative
distortion of the pairwise distances of points in this set. The two
distortions vanish like $O(\sqrt{\log S/M})$ as the dimension
$M$ increases, while the additive distortion additionally scales like $\qbin$.  
As an aside, we have also obtained several interesting results concerning
the generalization of Buffon's needle problem in $N$ dimensions, delineating 
the behavior of the moments of the related random variable ${\rm
Buffon}(a, N)$. We have concluded our study by showing that there
exists a ``nearly''
$\ell_2/\ell_2$ embedding of $\cl
S\subset \Rbb^M$ in $\ZbbMqbin$ that displays a quasi-isometric
behavior. However, this mapping induces a non-linear distortion of the
$\ell_2$-distances in $\cl S$ and, compared to the $\ell_2/\ell_1$
embedding described above, the additive distortion decays more slowly as $O((\log S/M)^{1/4})$. 

We acknowledge the fact that there may exist other
quantization schemes (\eg non-regular) that, when combined with random
linear mappings, lead to faster distortion decays (\eg
exponential). For instance, in \cite{Boufounos2010} it is shown that
if two randomly projected vectors lead to equal quantized projections
according to a non-regular quantizer, \ie if their distance is 0 in
this projected domain, their true distance must decrease
exponentially with the projected space dimension $M$. The Locally
Sensitive Hashing (LSH) method introduced in~\cite{andoni2006near}
for reaching fast approximate nearest neighbors search is another form
of efficient quantized dimensionality reduction that approximately
preserves distances between embedded points.
Knowing if such results can be extended to provide quasi-isometric mappings with
faster distortion decays than $O(\sqrt{\log S/M})$ leads to interesting open questions.      

\section*{Acknowledgements}
\label{sec:acknowledgements}

The author thanks Valerio Cambareri (UCLouvain, Belgium) for his advices on the
writing of this paper. The author thanks also the anomynous reviewers
for their useful remarks for improving this paper, and one of them in
particular for having provided a short alternative proof of
Prop.~\ref{prop:quantiz-jl-lemma} (see
App.~\ref{sec:altern-proof-prop2}). Laurent Jacques is a Research Associate funded by the Belgian
F.R.S.-FNRS. 

\appendix

\section{Alternative proof for Prop.~\ref{prop:quantiz-jl-lemma}}
\label{sec:altern-proof-prop2}

During the reviewing process of this paper, an anonymous and expert
reviewer has provided an elegant and short alternative for the proof of
Prop.~\ref{prop:quantiz-jl-lemma}. This one relies on the properties
of \emph{sub-Gaussian} random distributions. We insert his/her
developments in this appendix as it contains powerful mathematical tools for
characterizing quantized random projections.

Before describing the proof, let us first provide a brief overview
of the properties respected by
sub-Gaussian random variables. The interested reader can consult~\cite{IntroNonAsRandom} for a comprehensive presentation of
these concepts and their implications in random matrix analysis. 

A random variable (\rv) $X$ is
sub-Gaussian if its sub-Gaussian norm\footnote{Also called Orlicz $\psi_2$ norm.} \cite{IntroNonAsRandom}
\begin{equation}
  \label{eq:sub-gaussian-def}
  \|X\|_{\psi_2}\ :=\ \sup_{p\geq 1}\ p^{-\frac{1}{2}} (\bb E |X|^p)^{\frac{1}{p}}  
\end{equation}
is finite. Examples of such \rv 's are Gaussian, Bernoulli, uniform or bounded
\rv's. In fact, in the Gaussian case, if $X \sim \cl N(0, \sigma^2)$,
then $\|X\|_{\psi_2} \leq c \sigma$ for some $c >0$ since, from
Stirling's formula, we get $\Gamma(x) = O(x^x)$ for $x > 1$ and $(\bb E|X|^p)^{1/p} =
(2^{p/2}\pi^{-1/2}\Gamma(\frac{p+1}{2}))^{1/p} = O(\sqrt p)$.  

Sub-Gaussian \rv's and their norm respect several interesting
properties. First, if $X$ is deterministic $\|X\|_{\psi_2} = |X|$. Since $\|\!\cdot\!\|_{\psi_2}$ is a
norm, given two sub-Gaussian \rv's $X$ and $Y$, $\|X\|_{\psi_2} = 0$ iff $X=0$, $\|\lambda X\|_{\psi_2} =
|\lambda| \|X\|_{\psi_2}$ for $\lambda \in \bb R$ and we have the triangle inequality $\|X +
Y\|_{\psi_2} \leq \|X\|_{\psi_2} + \|Y\|_{\psi_2}$. Moreover, from \eqref{eq:sub-gaussian-def},
\begin{equation}
  \label{eq:sg-norm-linf-bound}
  \textstyle \|X\|_{\psi_2} \leq \|X\|_{\infty} := \inf\{M\geq 0: \bb P(|X| \leq M)=1\},
\end{equation}
so any bounded \rv is necessarily sub-Gaussian. The sub-Gaussian
norm of a centered sub-Gaussian \rv is also easily bounded by
\begin{equation}
  \label{eq:sg-mean-dev}
  \textstyle \|X - \bb E X\|_{\psi_2} \leq
  \|X\|_{\psi_2} + \|\bb E X\|_{\psi_2} =  \|X\|_{\psi_2} + |E X| \leq
  \|X\|_{\psi_2} + E|X| \leq 2\|X\|_{\psi_2},
\end{equation}
where the second inequality uses Jensen's inequality.

In addition, sub-Gaussian \rv's have a tail bound characterized by their norm, \ie there exist two
constants $C,c>0$
such that for all $\epsilon\geq 0$,
\begin{equation}
  \label{eq:sg-tail-bound}
\textstyle   \bb P(|X|>\epsilon)\ \leq C\,e^{-c\,\epsilon^2/ \|X\|^2_{\psi_2}},  
\end{equation}
and \eqref{eq:sg-mean-dev} shows that for a smaller $c>0$, $\bb P(|X - \bb
E X|>\epsilon)\ \leq C\,e^{-c\,\epsilon^2/ \|X\|^2_{\psi_2}}$.

Finally, for any $D \in \bb N$ independent sub-Gaussian random variables $\{X_1,
\,\cdots, X_D\}$, their sum is approximately invariant under rotation,
which means
\begin{equation}
  \label{eq:approx-rot-inv}
  \textstyle \|\sum_i (X_i - \bb E X_i)\|^2_{\psi_2}\ \leq\ C \sum_i \|X_i - \bb E X_i\|^2_{\psi_2},  
\end{equation}
for some other constant $C>0$. 

\medskip

We are now ready to provide the announced alternative proof. Let us consider the dimension reduction map $\bs\mapf_\qbin(\bs x) := \cl Q_\qbin(\bs\Phi \bs x
  + \bs\xi)$ associated to our uniform quantizer
$\cl Q_\qbin(\cdot) := \qbin \lfloor \cdot / \qbin \rfloor$ (applied
componentwise) with step $\qbin>0$, to a random
Gaussian matrix $\bs\Phi
\sim \cl N^{M \times N}(0,1)$ and to a random dithering $\bs \xi \sim
\cl U^M([0,\qbin])$. Given two expressions $A$ and $B$, we also use below the simplified notation $A \lesssim
B$ (resp. $A \gtrsim B$) that means $A \leq c B$ (resp. $A \geq c B$) for some constant~$c>0$. 

\paragraph*{1. Concentration} Fix $\bs u, \bs v \in \cl S$, with $\cl
S \subset \bb R^N$ a
finite set of cardinality $S \in \bb N$. We can represent
\begin{equation*}
\textstyle \inv{M} \|\bs\mapf_\qbin (\bs u) - \bs\mapf_\qbin (\bs
v)\|_1 - \bb E \inv{M} \|\bs\mapf_\qbin (\bs u) - \bs\mapf_\qbin (\bs
v)\|_1 = \inv{M} \sum_{i=1}^M (Z_i - \bb E Z_i)  
\end{equation*}
where, for $i \in [M]$ and $\bs \varphi \sim \cl N^N(0,1)$,  
\begin{equation}
\label{eq:rev1-e1}
  Z_i \sim_{\rm iid} Z := |Q_\delta(\scp{\bs \varphi}{\bs u} + \xi) - Q_\delta(\scp{\bs \varphi}{\bs v} + \xi)|.
\end{equation}

Therefore, since for any $\bs x \in \bb R^N$, $\scp{\bs \varphi}{\bs
  x} \sim \cl N(0,\|\bs x\|^2)$ and $\|\scp{\bs \varphi}{\bs
  x}\|_{\psi_2} = \| |\scp{\bs \varphi}{\bs
  x}|\|_{\psi_2} \lesssim \|\bs x\|$, the random variable $Y :=
|\scp{\bs\varphi}{\bs u} - \scp{\bs \varphi}{\bs v}| = |\scp{\bs
  \varphi}{\bs u - \bs v}|$ satisfies 
\begin{equation}
  \label{eq:bound-Y-ZmY}
  \|Y\|_{\psi_2} \lesssim \|\bs u - \bs v\|_2,\quad \|Z-Y\|_\infty
  \leq 2\delta,  
\end{equation}
since $|\cl Q_\qbin(\lambda) - \lambda| \leq \qbin$ for all $\lambda
\in \bb R$. By the triangle inequality and using \eqref{eq:sg-norm-linf-bound}, we obtain
$$
\|Z\|_{\psi_2} \lesssim \|Y\|_{\psi_2} + \|Z-Y\|_{\psi_2}
\lesssim \|\bs u-\bs v\|_2 + \delta.
$$ 
Then, from \eqref{eq:sg-mean-dev} and \eqref{eq:approx-rot-inv},
$$
\textstyle \big\|\inv{\sqrt M}\sum_{i=1}^M (Z_i - \bb E Z_i)\big\|_{\psi_2}\
\lesssim\ \|\bs u - \bs v\|_2 + \delta.
$$
From \eqref{eq:sg-tail-bound} and by definition of sub-Gaussian norm,
this means also that, for some $c>0$, 
$$
\textstyle \bb P\big[\inv{\sqrt M}\sum_{i=1}^M (Z_i - \bb E Z_i) > t\big] \leq
2\exp\big(-\frac{ct^2}{(\|\bs u- \bs v\|_2 + \delta)^2}\big),
$$
for all $t > 0$. Choosing $t = \sqrt M \epsilon\,(\|\bs u - \bs v\|_2 + \delta)$, we
conclude that
\begin{equation}
 \label{eq:rev1-2}
\textstyle  \big| \inv{M} \|\bs\mapf_\qbin(\bs u) - \bs\mapf_\qbin
(\bs v)\|_1 - \bb E \inv{M} \|\bs\mapf_\qbin(\bs u) - \bs\mapf_\qbin
(\bs v)\|_1\big|\ \leq\ \epsilon \|\bs u - \bs v\|_2 + \epsilon \delta  
\end{equation}
with probability at least $1-2\exp(-c M^2\epsilon^2)$. For an
appropriate $C > 0$, if $M \geq C
\epsilon^{-2} \log S$ as in Prop.~\ref{prop:quantiz-jl-lemma}, then
the failure probability is smaller than $S^{−2}$. This allows us to
take a union bound over all pairs $\bs u, \bs v \in \cl S$ so that, with high
probability, \eqref{eq:rev1-2} holds simultaneously for all $\bs u,
\bs v \in S$.

\paragraph*{2. Expectation} It remains to show that the expectation in
\eqref{eq:rev1-2} is proportional (with a constant factor) to
$\|\bs u- \bs v\|_2$. Note that
\begin{equation}
  \label{eq:rev1-e3}
  \textstyle \bb E \inv{M} \|\bs\mapf_\qbin(\bs u) - \bs\mapf_\qbin(\bs v)\|_1 = \bb E Z,
\end{equation}
where $Z$ is defined in (\ref{eq:rev1-e1}). Moreover, 
\begin{equation}
  \label{eq:mean-EZ}
  \bb E Z = \sqrt{\tfrac{2}{\pi}}\,\|\bs u - \bs v\|,
\end{equation}
as established in Prop.~\ref{prop:expect-abs-dif-quant}. This last
result can also be derived in a simpler fashion by observing that
for any $\bs\varphi \sim \cl N^{N}(0,1)$, 
$\scp{\bs \varphi}{\bs w} = \scp{\bs \varphi}{\cl P \bs w} = \scp{\cl
  P \bs \varphi}{\cl P \bs w}$ for all $\bs w \in \cl W:={\rm span}(\bs u,\bs v)$, with $\cl P$
the orthogonal projection on $\cl W$. Since this last space is a
two-dimensional subspace and since $\cl
  P \bs \varphi$ is distributed as $\cl N^2(0,1)$ (by rotation
  invariance), an easy variant of Prop.~\ref{prop:expect-abs-dif-quant}
  in 2-D based on Prop.~\ref{prop:expect-intersect} (borrowed from
  \cite{ramaley1969buffon}), \ie without generalizing Buffon's
  needle problem in $N$-D, suffices to prove \eqref{eq:mean-EZ}. Injecting
  \eqref{eq:mean-EZ} in 
  \eqref{eq:rev1-2} establishes finally Prop.~\ref{prop:quantiz-jl-lemma}.
\medskip

\noindent\emph{Remark:} All the developments above remain true for random matrices with rows selected
uniformly at random over $\sqrt N\,\bb S^{N-1}$, \ie when they are
i.i.d.~as ${\rm Unif}(\sqrt N\,\bb S^{N-1})$. In this case, those rows are also
sub-Gaussian random vectors and \eqref{eq:bound-Y-ZmY} also holds~\cite{vershynin2011lectures}. The only difference
lies in the mean of $\bb E Z$ in \eqref{eq:mean-EZ} with $Z$ defined
as in \eqref{eq:rev1-e1} for $\bs \varphi
\sim {\rm Unif}(\sqrt N\,\bb S^{N-1})$. In this case, $\bb E Z \neq
\sqrt{(2/\pi)}\,\|\bs u-\bs v\|$ but $|\|\bs u-\bs v\|^{-1}\bb E Z -
\sqrt{(2/\pi)}|=O(1/\sqrt N)$. Indeed, by rotation
invariance and since $\bb E_{u}(|\lfloor a +
u\rfloor - \lfloor b + u\rfloor|) = |a-b|$ for $a,b\in \bb R$ and $u
\sim \cl U([0,1])$, developing $Z$ from its definition
in \eqref{eq:rev1-e1} and using the
law of total expectation, we find 
$$
\bb E Z = \bb E_{\bs \varphi} \bb E_{\xi} Z = \bb E_{\bs \varphi}
|\scp{\bs \varphi}{\bs u - \bs v}| = \|\bs u - \bs v\|\,\bb E_{\bs \varphi}
|\varphi_1|.
$$

The pdf of $|\varphi_1|/\sqrt N$ is known (see, \eg \cite{stam1982limit}) and
reads $f(z) = (N-1)\tau_N\,(1-z^2)^{\tfrac{d-3}{2}}$ with
$\tau_N$ defined in \eqref{eq:buffon-expect}. Therefore, using \eqref{eq:other-def-of-dN}
\begin{equation*}
\textstyle \bb E |\varphi_1| = \int_0^1 z f(z)\,\ud z = (N-1)\tau_N\,\int_0^1
(1-z^2)^{\tfrac{d-3}{2}} z \ud z = \dN(\tfrac{1}{2}) =
\tau_N,  
\end{equation*}
with $\dN$ defined in \eqref{eq:dN-def}. Consequently, 
$$
\textstyle \bb E Z = \sqrt N \tau_N \|\bs u - \bs v\| = \frac{\sqrt
  N\,\Gamma(\frac{N}{2})}{\sqrt \pi \Gamma(\frac{N+1}{2})}\,\|\bs u - \bs v\|.
$$
Since $(\frac{2N-3}{4})^{1/2} \leq
{\Gamma(\frac{N}{2})}/{\Gamma(\frac{N-1}{2})} \leq
(\frac{N-1}{2})^{1/2}$, we have also 
$$
\tfrac{\sqrt{2}}{\sqrt \pi}\,\tfrac{\sqrt N\sqrt{N-\frac{3}{2}}}{(N-1)} \leq 
\sqrt N \tau_N 
= \tfrac{2\sqrt N}{\sqrt \pi (N-1)}\tfrac{\Gamma(\frac{N}{2})}{\Gamma(\frac{N-1}{2})} 
\leq \tfrac{\sqrt{2}}{\sqrt \pi}\tfrac{\sqrt N}{\sqrt{N-1}},
$$
so that $|\sqrt N \tau_N - {\sqrt{2}}/{\sqrt \pi}| = O(1/\sqrt
N)$. Therefore, Prop.~\ref{prop:quantiz-jl-lemma} holds also for
random matrices $\bs \Phi$ with rows i.i.d.~as ${\rm Unif}(\sqrt N\,\bb S^{N-1})$. 

\footnotesize

\end{document}